\keywords{Good-for-games, Pushdown Automata, Infinite Games}
\tikzset{>=stealth, shorten >=1pt}
\tikzset{every edge/.style = {thick, ->, draw}}
\tikzset{every loop/.style = {thick, ->, draw}}
\newcommand{\set}[1]{\{#1\}}
\newcommand{\nats}{\mathbb{N}}
\newcommand{\ints}{\mathbb{Z}}
\renewcommand{\epsilon}{\varepsilon}
\newcommand{\size}[1]{|#1|}
\newcommand{\proj}{\pi}
\newcommand{\rev}[1]{#1^R}
\newcommand{\autud}{\mathcal{P}_{\!d}}
\newcommand{\autm}{\mathcal{M}}
\newcommand{\aut}{\mathcal{P}}
\newcommand{\daut}{\mathcal{D}}
\newcommand{\faut}{\mathcal{F}}
\newcommand{\auta}{\mathcal{A}}
\newcommand{\raut}{\mathcal{R}}
\newcommand{\autt}{\mathcal{T}}
\newcommand{\mymark}[1]{\overline{\vphantom{\Delta}#1}}
\newcommand{\initmark}{I}
\newcommand{\col}{\Omega}
\newcommand{\Gammabot}{\Gamma_{\!\bot}}
\newcommand{\sh}{\mathrm{sh}}
\newcommand{\trans}[1]{\xrightarrow{#1}}
\newcommand{\hstrat}{r}
\newcommand{\Sigmahash}{\Sigma_\#}
\newcommand{\dummy}{\text{\textvisiblespace}}
\newcommand{\gsgame}{\mathcal{G}}
\newcommand{\SigmaI}{\Sigma_1}
\newcommand{\SigmaO}{\Sigma_2}
\newcommand{\cfl}{$\omega$-CFL\xspace}
\newcommand{\dcfl}{$\omega$-DCFL\xspace}
\newcommand{\gfgcfl}{$\omega$-GFG-CFL\xspace}
\newcommand{\vpl}{$\omega$-VPL\xspace}
\newcommand{\ops}{I}
\newcommand{\zero}{\texttt{0}}
\newcommand{\plus}{\texttt{+}}
\newcommand{\minus}{\texttt{-}}
\newcommand{\el}{EL}
\newcommand{\seplang}{L_{\text{ss}}}
\newcommand{\Sigmatilde}{\widetilde{\Sigma}}
\newcommand{\Sigmacall}{\Sigma_c}
\newcommand{\Sigmaskip}{\Sigma_s}
\newcommand{\Sigmareturn}{\Sigma_r}
\newcommand{\repbddlang}{L_{\text{repbdd}}}
\newcommand{\val}{\text{val}}
\newcommand{\ext}{\text{ext}}
\newcommand{\suffix}[1]{\mathrm{sff}(#1)}
\newcommand{\prefix}[1]{\mathrm{prf}(#1)}
\newcommand{\exptime}{\textsc{ExpTime}}
\newcommand{\ptime}{\textsc{PTime}}
\newcommand{\twoexptime}{\textsc{2ExpTime}}
\newcommand{\yes}{\ding{51}}%
\newcommand{\no}{\ding{55}}%
\newcommand{\undec}{undecidable}
\begin{document}

\title[Good-for-games $\omega$-Pushdown Automata]{Good-for-games \boldmath$\omega$-Pushdown Automata}

\titlecomment{A preliminary version of this manuscript was presented at LICS 2020~\cite{LICS2020}. This version extends it with an analysis of resource-bounded resolvers, of the parity index hierarchy as well as of the synthesis problem.}

\author[K.~Lehtinen]{Karoliina Lehtinen}[a]	
\address{CNRS, Aix-Marseille Université  and Université de Toulon, LIS, Marseille, France}	
\email{karoliina.lehtinen@lis-lab.fr}  

\author[M.~Zimmermann]{Martin Zimmermann}[b]	
\address{University of Liverpool, Liverpool, United Kingdom \newline
  \textit{current affiliation:} Aalborg University, Aalborg, Denmark}	
\email{Martin.Zimmermann@liverpool.ac.uk}  

\begin{abstract}
  \noindent We introduce good-for-games $\omega$-pushdown automata ($\omega$-GFG-PDA).
These are automata whose nondeterminism can be resolved based on the input processed so far.
Good-for-gameness enables automata to be composed with games, trees, and other automata, applications which otherwise require deterministic automata.

Our main results are that $\omega$-GFG-PDA are more expressive than deterministic $\omega$-pushdown automata and that solving infinite games with winning conditions specified by $\omega$-GFG-PDA is EXPTIME-complete. Thus, we have identified a new class of $\omega$-contextfree winning conditions for which solving games is decidable. It follows that the universality problem for $\omega$-GFG-PDA is in EXPTIME as well.

Moreover, we study closure properties of the class of languages recognized by $\omega$-GFG-PDA and decidability of good-for-gameness of $\omega$-pushdown automata and languages.
Finally, we compare $\omega$-GFG-PDA to $\omega$-visibly PDA, study the resources necessary to resolve the nondeterminism in $\omega$-GFG-PDA, and prove that the  parity index hierarchy for $\omega$-GFG-PDA is infinite.

This is a corrected version of the paper \url{https://arxiv.org/abs/2001.04392v6} published originally on January 7, 2022.

\end{abstract}

\maketitle

\section{Introduction}
\label{sec_intro}
Good-for-gameness is the new determinism, and not just for solving games. 
Good-for-games automata also lend themselves to composition with other automata and trees.
These problems have in common that they are traditionally addressed with deterministic automata, which, depending on the exact type used, may be less succinct or even less expressive than nondeterministic ones.
Good-for-games automata overcome this restriction by allowing a limited form of nondeterminism that does not interfere with composition. 

As an example, consider the setting of infinite-duration two-player zero-sum games of perfect information.
In such a game, two players interact to produce a play, an infinite word over some alphabet. A winning condition specifies a partition of the set of plays indicating the winner of each play.
Here, we are concerned with games whose winning condition is explicitly given by an automaton recognizing the winning plays for one player.\footnote{This should be contrasted with the classical setting of say parity games, where the winning condition is implicitly encoded by a colouring of the arena specifying the interaction between the players.}
This setting arises, for example, when solving the LTL synthesis problem where the winning condition is specified by an LTL formula, which can be turned into an automaton.

The usual approach to solving a game with a winning condition given by an automaton~$\auta$ is to obtain an equivalent deterministic automaton~$\daut$ and then solve an arena-based game with an implicit winning condition given by the acceptance condition of $\daut$. 
The arena simultaneously captures the interaction between the players, which results in a play, and constructs the run of $\daut$ on this play.
The resulting arena-based game has the same winner as the original game with winning condition recognized by $\auta$.
For example, if the original winning condition is $\omega$-regular, there is a deterministic parity automaton recognizing it, and the resulting game is a parity game, which can be effectively solved.

However, the correctness of this reduction crucially depends on the on-the-fly construction of the run of $\daut$ on the play. 
For nondeterministic automata, one might be tempted to let the nondeterministic choices be resolved by the player who wins if the automaton accepts. However, an accepting run cannot necessarily be constructed on-the-fly, even if one exists, as the resolution of nondeterministic choices might depend on the whole play rather than on the current finite play prefix. 
A simple example is a winning condition that allows the player who resolves the nondeterminism to win the original game while her opponent wins in the arena-based game by using her nondeterministic choices against her.
This is the case in the parity automaton presented in Figure~\ref{fig:introexa}, which accepts all words, but in which nondeterminism must decide whether a word has finitely or infinitely many occurrences of $a$.

\begin{figure}[h]
    \centering
    \begin{tikzpicture}[thick]
    \tikzset{every state/.style = {minimum size =19}}

    \node[state,fill=lightgray] (i) at (0,0) {};
    \node[state] (1) at (-2,0) {$q_1$};
    \node[state,fill=lightgray] (2) at (2,0) {$q_2$};
    \node[state,fill=black] (11) at (-4,-1) {};
    \node[state] (22) at (4,-1) {};

    \path[-stealth]
     (0,.75) edge (i)
     (i) edge node[above] {$a,b$} (1)
     (i) edge node[above] {$a,b$} (2)
     (1) edge[bend left] node[below] {$a$} (11)
     (11) edge[bend left] node[above] {$b$} (1)
     (2) edge[bend right] node[below] {$a$} (22)
     (22) edge[bend right] node[above] {$b$} (2)
     (11) edge[loop left] node[left] {$a$} ()
     (1) edge[loop above] node[above] {$b$} ()
     (22) edge[loop right] node[right] {$a$} ()
     (2) edge[loop above] node[above] {$b$} ()
     ;
    \end{tikzpicture}
    
    \caption{A nondeterministic parity automaton that recognises $\{a,b\}^\omega$. A run is accepting if it visits the black vertex only finitely often and a white vertex infinitely often.}
    \label{fig:introexa}
\end{figure}
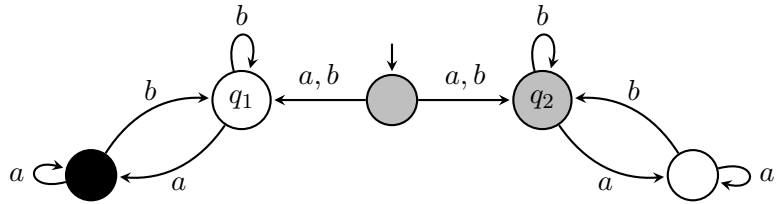

Assume that the opponent has control over the number of $a$'s that appear during a play.
This number does not affect the winner of the original game, as the automaton is universal. 
But if the initial nondeterminism is resolved by moving to $q_1$, then the opponent wins by enforcing infinitely many $a$, while if the nondeterminism is resolved by moving to $q_2$ then he wins by enforcing finitely many $a$.

Good-for-games automata (also known as history-deter\-ministic automata\footnote{The terms ``history-deterministic" and ``good-for-games" have often been used interchangeably, but recently it has come to light that history-determinism and compositionality with games do not coincide on all models~\cite{BL21}. We still prefer to use the term ``good-for-games" for consistency with previous publications, although it is open whether these two notions coincide in the setting we consider here.}~\cite{Col09}), introduced by Henzinger and Piterman~\cite{HP06}, are nondeterministic (or even alternating~\cite{Col13,BL19}) automata whose nondeterminism can be resolved based on\-ly on the input processed so far.
This property implies that the previously described procedure yields the correct winner, even if the automaton is not deterministic.

Since their introduction, Boker, Kupferman, Kuperberg and Skrzypczak have shown that $\omega$-regular good-for-games automata are also suitable for composition with trees~\cite{BKKS13}, which can be seen as the special case of one-player arena-based games, while Boker and Lehtinen have shown that good-for-games automata are suitable for automata composition in the following sense~\cite{BL19}: 
If an $\omega$-regular good-for-games automaton~$\mathcal{B}$ recognizes the set of accepting runs of an alternating $\omega$-regular automaton~$\mathcal{A}$, then the composition of $\mathcal{A}$ and $\mathcal{B}$ is an $\omega$-regular automaton that recognizes the same language as $\mathcal{A}$, but with the acceptance condition of $\mathcal{B}$. In other words, good-for-games automata, like deterministic automata, can be used both to simplify the winning conditions of games and the acceptance conditions of automata.
Kuperberg and Skrzypczak showed that $\omega$-regular good-for-games co-Büchi automata can be exponentially more succinct than deterministic ones while good-for-games Büchi automata are at most quadratically more succinct~\cite{KS15}.

However, since deterministic parity automata, which are trivially good-for-games, express all $\omega$-regular languages, succinctness is the most good-for-gameness can offer in the $\omega$-reg\-u\-lar setting.
In contrast, deterministic automata models are in general less expressive, not just less succinct, than their nondeterministic counterparts. 
This is true, for example, for pushdown automata and for various types of quantitative automata. 
We argue that in such cases, it is worthwhile to investigate good-for-games automata as an alternative to deterministic ones, as they form a potentially larger class of winning conditions for which solving games is decidable.
Indeed, the study of quantitative automata lead to the independent introduction of good-for-games automata by Colcombet. 
In particular, in the setting of regular cost functions, good-for-games cost automata are as expressive as nondeterministic ones, unlike deterministic ones~\cite{Col09}.

So far the case of $\omega$-pushdown automata has not been considered.
Here, the increased expressiveness of nondeterministic automata comes at a heavy price: games with winning conditions given by nondeterministic $\omega$-pushdown automata are undecidable~\cite{DBLP:journals/tcs/Finkel01a} while those with winning conditions given by deterministic ones are decidable~\cite{DBLP:journals/iandc/Walukiewicz01}.
Hence, in this work, we introduce and study good-for-games $\omega$-pushdown automata ($\omega$-GFG-PDA) to push the frontier of decidability for games with $\omega$-contextfree winning conditions.

\subsection*{Our contributions}
Our first results concern expressiveness:
In Section~\ref{sec_gfg}, we prove that $\omega$-GFG-PDA are strictly more expressive than deterministic $\omega$-pushdown automata ($\omega$-DPDA), but not as expressive as nondeterministic $\omega$-pushdown automata ($\omega$-PDA).
So, they do form a new class of $\omega$-context\-free languages. This is in contrast to the $\omega$-regular setting where, for each of the usual acceptance conditions, deterministic and GFG automata recognize exactly the same languages.
Also, in Section~\ref{section_parityindex}, we prove that the parity index hierarchy is infinite for $\omega$-GFG-PDA.

Second, in Section~\ref{sec_gfgaregfg}, we show that $\omega$-GFG-PDA live up to their name: 
Determining the winner of a game with a winning condition specified by an $\omega$-GFG-PDA is $\exptime$-complete, as for the special case of games with winning conditions specified by $\omega$-DPDA~\cite{DBLP:journals/iandc/Walukiewicz01}.
Furthermore, winning strategies for the player aiming to satisfy the specification are implementable by pushdown transducer, which can be computed in exponential time.
The player violating the specification does not necessarily have a winning strategy that is implementable by a pushdown transducer.
The decidability result has to be contrasted with the undecidability of games with a winning condition specified by an $\omega$-PDA~\cite{DBLP:journals/tcs/Finkel01a}. As a corollary, the universality of $\omega$-GFG-PDA is also in $\exptime$, while it is undecidable for $\omega$-PDA. 
Table~\ref{table:results:dec} sums up our results on decidability.

\begin{table*}
\centering
\renewcommand{\arraystretch}{1.3}
\renewcommand{\tabcolsep}{4pt}

\begin{tabular}{lccccc}\toprule

& Emptiness & Universality & Solving Games& \\
 \midrule
$\omega$-DPDA & \ptime & \ptime & \exptime &\cite{DBLP:journals/jcss/CohenG78,DBLP:journals/iandc/Walukiewicz01}\\
\rowcolor{lightgray}$\omega$-GFG-PDA & \ptime & \exptime$^\ast$ & \exptime & here\\
$\omega$-VPA & \ptime & \exptime & \twoexptime &\cite{AlurM04,DBLP:conf/fsttcs/LodingMS04} \\
$\omega$-PDA & \ptime\ & \undec & \undec &\cite{DBLP:journals/jcss/CohenG77,DBLP:journals/jcss/CohenG77a,DBLP:journals/tcs/Finkel01a}\\
\bottomrule
\end{tabular}
\caption{Summary of our results on decision problems (in light gray) and comparison to other classes of contextfree languages. All problems are complete for the respective complexity class unless marked with an asterisks.}
\label{table:results:dec}
\end{table*}

Third, in Section~\ref{section_closure}, we study the closure properties of $\omega$-GFG-PDA, which are almost nonexistent, and, in Section~\ref{section_decisionproblems}, prove that both the problems of deciding whether a given $\omega$-PDA is good-for games, and of deciding whether a given $\omega$-PDA is lan\-guage equivalent to an $\omega$-GFG-PDA are undecidable.
Table~\ref{table:results:closure} sums up our results on closure properties.

\begin{table*}
\centering
\renewcommand{\arraystretch}{1.3}
\renewcommand{\tabcolsep}{4pt}

\begin{tabular}{lccccc}\toprule

 & Intersection & Union & Complement & Set difference  & Homomorphism   \\
 \midrule
$\omega$-DPDA & \no & \no & \yes & \no & \no \\
\rowcolor{lightgray}$\omega$-GFG-PDA & \no & \no & \no & \no & \no\\
$\omega$-VPA & \yes & \yes & \yes & \yes & \no \\
$\omega$-PDA & \no & \yes & \no & \no & \yes \\
\bottomrule
\end{tabular}
\caption{Summary of our results on closure properties (in light gray) and comparison to other classes of contextfree languages.}
\label{table:results:closure}
\end{table*}

Fourth, in Section~\ref{section_resolver}, we study the resources necessary to resolve nondeterminism in $\omega$-GFG-PDA.
In general, pushdown transducers are not sufficient while $\omega$-GFG-PDA with finite-state machines resolving the nondeterminism are always determinizable.

Fifth, in Section~\ref{section_visibly}, we compare $\omega$-GFG-PDA with visibly pushdown automata~($\omega$-VPA)~\cite{AlurM04}, a class of $\omega$-PDA with robust closure properties and for which solving games is also decidable~\cite{DBLP:conf/fsttcs/LodingMS04}. We show that the classes of languages recognized by $\omega$-GFG-PDA and $\omega$-VPA are incomparable with respect to inclusion.
See Figure~\ref{figure:classes} for an overview of our results on the relations between these classes.

\begin{figure}[b]
    \centering
\begin{tikzpicture}[scale=.95,ultra thick]

\draw[rounded corners,fill=gray!5] (-1,-2) rectangle (8.5,2.25);
 \node[anchor=west] at (-.9,-1.625) {$\omega$-PDA};

\draw[rounded corners,fill=gray!20] (-.5,-1.25) rectangle (6.5,1.25);
\node[anchor=west] at (-.4,-.875) {$\omega$-GFG-PDA};

\draw[rounded corners,fill=gray!35] (0,-.5) rectangle (6,.5);
\node[anchor=west] at (0.1,0) {$\omega$-DPDA};

\draw[rounded corners,fill=white, fill opacity=.8] (5,-1.625) rectangle (8.25,2);
 \node[anchor=east] at (8.15,-1.25) {$\omega$-VPA};

 \node[draw,fill=black,circle,inner sep = .05cm,label=right:Theorem~\ref{thm_gfgvscfl}] at (-.5,1.625) {};

\node[draw,fill=black,circle,inner sep = .05cm,label=right:Theorems~\ref{thm_dcflvsgfgcfl} and \ref{thm_vpl}] at (0,.875) {};

\node[draw,fill=black,circle,inner sep = .05cm,label=right:Theorem~\ref{thm_vpl}] at (5.5,1.625) {};

\end{tikzpicture}
\caption{The classes of $\omega$-languages recognized by the automata considered in this work. Languages separating these classes are shown as black dots.}
\label{figure:classes}
\end{figure}
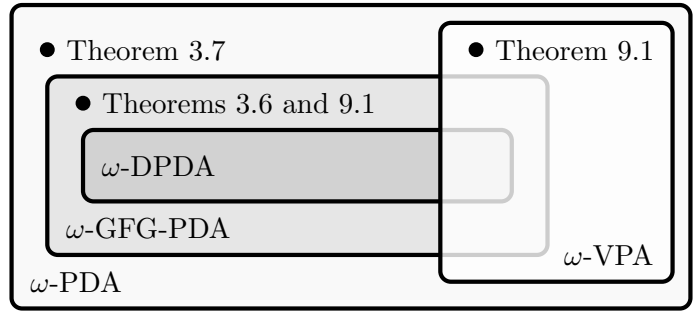

\section{Preliminaries}
\label{sec_definitions}
An alphabet~$\Sigma$ is a finite nonempty set of letters. The set of finite words over $\Sigma$ is denoted by $\Sigma^*$, the set of nonempty finite words over $\Sigma$ by $\Sigma^+$, and the set of infinite words over $\Sigma$ by $\Sigma^\omega$. 
The  empty word is denoted by $\epsilon$, the length of a finite word~$v$ is denoted by $\size{v}$, and the $n$-th letter of a finite or infinite word is denoted by $w(n)$ (starting with $n = 0$).
An $\omega$-language over $\Sigma$ is a subset of $\Sigma^\omega$.

For alphabets~$\Sigma_1,\Sigma_2$, we extend functions~$f \colon \Sigma_1 \rightarrow \Sigma_2^*$ homomorphically to finite and infinite words over $\Sigma_1$ via
\[f(w) = f(w(0)) f(w(1)) f(w(2)) \cdots .\]
For example, if $\Sigma_1 = \Sigma \times \Sigma'$, then $\proj_i(a_1, a_2) \mapsto a_i$ for $(a_1,a_2)\in \Sigma_1$ denotes the  projection to the $i$-th component ($i \in \set{1,2}$).

An $\omega$-pushdown automaton ($\omega$-PDA for short)~$\aut = (Q, \Sigma, \Gamma, q_\initmark, \Delta, \col)$
consists of a finite set~$Q$ of states with the initial state~$q_\initmark \in Q$, an input alphabet~$\Sigma$, a stack alphabet~$\Gamma$, a transition relation~$\Delta$ to be specified, and a coloring~$\col \colon \Delta \rightarrow \nats$.
For notational convenience, we define $\Sigma_\epsilon = \Sigma \cup \set{\epsilon}$ and $\Gammabot = \Gamma \cup \set{\bot}$, where $\bot \notin \Gamma$ is a designated stack bottom symbol.
Then, the transition relation~$\Delta$ is a subset of  $Q \times \Gammabot \times \Sigma_\epsilon \times Q \times \Gammabot^{\le 2}$ that we require to neither write nor delete the stack bottom symbol from the stack:
If
$(q, \bot, a, q', \gamma) \in \Delta$, then $\gamma \in \bot \cdot (\Gamma \cup \set{\epsilon}) $, and if $(q, X, a, q', \gamma) \in \Delta$ for $X \in \Gamma$, then $\gamma \in \Gamma^{\le 2}$. 
Given a transition~$\tau = (q,X,a,q',\gamma)$ let $\ell(\tau) = a \in \Sigma_\epsilon$. 
We say that $\tau$ is an $\ell(\tau)$-transition and that $\tau$ is a $\Sigma$-transition, if $\ell(\tau) \in \Sigma$.
For a finite or infinite sequence~$\rho$ over $\Delta$, $\ell(\rho)$ is defined by applying $\ell$ homomorphically to every transition.

A stack content is a finite word in $\bot \Gamma^*$ (i.e., the top of the stack is at the end) and a configuration~$c = (q, \gamma)$ of $\aut$ consists of a state~$q \in Q$ and a stack content~$\gamma$.  
The stack height of $c$ is $\sh(c) = \size{\gamma}-1$.
The initial configuration is $(q_\initmark, \bot)$.

A transition~$\tau = (q, X, a, q', \gamma' ) \in \Delta $ is enabled in a configuration~$c$ if $c = (q, \gamma X)$ for some $\gamma \in \Gammabot^*$.
In this case, we write~$(q, \gamma X) \trans{\tau} (q', \gamma\gamma')$.
A run of $\aut$ is an infinite sequence~$\rho = c_0 \tau_0 c_1 \tau_1 c_2 \tau_2 \cdots$ of configurations and transitions with $c_0$ being the initial configuration and $c_n \trans{\tau_n}c_{n+1}$ for every $n$.
Finite run prefixes are defined analogously and are required to end in a configuration.
The infinite run~$\rho$ is a run of $\aut$ on $w \in \Sigma^\omega$, if $w = \ell(\tau_0 \tau_1 \tau_2 \cdots) $ (this implies that $\rho$ contains infinitely many $\Sigma$-transitions). 
We say that $\rho$ is accepting if $\limsup_{n \rightarrow \infty} \col(\tau_n)$ is even, i.e., if the maximal color labeling infinitely many transitions is even.
The language~$L(\aut)$ recognized by $\aut$ contains all $w \in \Sigma^\omega$ such that $\aut$ has an accepting run on $w$. 

\begin{rem}
Let $ c_0 \tau_0 c_1 \tau_1 c_2 \tau_2 \cdots$ be a run of $\aut$.
Then, the sequence~$c_0c_1 c_2\cdots$ of configurations is uniquely determined by the sequence~$\tau_0\tau_1 \tau_2\cdots$ of transitions.
Hence, whenever convenient, we treat a sequence of transitions as a run if it indeed induces one (not every such sequence does induce a run, e.g., if a transition~$\tau_n$ is not enabled in $c_n$).
\end{rem}

We say that an $\omega$-PDA~$\aut$ is deterministic if 
\begin{itemize}
    \item for every $q \in Q$, every $X \in \Gammabot$, and every $a \in \Sigma_\epsilon$, there is at most one transition of the form~$(q, X, a, q', \gamma) \in \Delta$ for some $q'$ and some $\gamma$, and
    
    \item for every $q \in Q$ and every $X \in \Gammabot$, if there is a transition~$(q, X, \epsilon, q_1, \gamma_1) \in \Delta$ for some $q_1$ and some $\gamma_1$, then there is no $a \in \Sigma$ such that there is a transition~$(q, X, a, q_2, \gamma_2) \in \Delta$ for some $q_2$ and some $\gamma_2$.
    
\end{itemize}
As expected, a deterministic $\omega$-pushdown automaton ($\omega$-DPDA) has at most one run on every $\omega$-word. 

The class of $\omega$-languages recognized by $\omega$-PDA is denoted by \cfl and the class of $\omega$-languages recognized by $\omega$-DPDA by \dcfl.
Cohen and Gold showed that \dcfl is a strict subset of \cfl~\cite[Theorem~3.2]{DBLP:journals/jcss/CohenG78}.%
\footnote{Formally, Cohen and Gold considered automata with state-based Muller acceptance while we consider, for technical convenience, automata with transition-based parity acceptance. However, using \emph{latest appearance records}~(see, e.g.,~\cite{GTW02}) shows that both definitions are equivalent.}

\begin{exa}
\label{example:pda}
The $\omega$-PDA~$\aut$ depicted in Figure~\ref{fig:pdaexample} recognizes the $\omega$-language
\[
\set{a c^nd^n \#^\omega \mid n \ge 1} \cup \set{b c^nd^{2n} \#^\omega \mid n \ge 1}.
\]
Note that while $\aut$ is nondeterministic, $L(\aut)$ is in \dcfl.
\begin{figure}
    \centering

\begin{tikzpicture}[thick]
\def\y{.9}
\def\x{2}
\tikzset{every state/.style = {minimum size =22}}
\node[state,fill=lightgray] (i) at (0*\x,0) {$q_0$};
\node[state,fill=lightgray] (u) at (2*\x, 0) {$q_1$};
\node[state,fill=lightgray] (ad) at (4*\x,\y) {$q_2$};
\node[state,fill=lightgray] (bd1) at (4*\x,-\y) {$q_3$};
\node[state,fill=lightgray] (bd2) at (6*\x,-\y) {$q_5$};
\node[state] (acc) at (6*\x,\y) {$q_4$};

\path[-stealth]
(0,-.75) edge (i)
(i) edge node[above] {$a,\bot\mid \bot A$} (u)
(i) edge node[below] {$b,\bot\mid \bot B$} (u)
(u) edge[loop right] node[right] {$c, X\mid XN $} ()
(u) edge[bend left] node[above,yshift=.1cm] {$d, N \mid \epsilon$} (ad)
(u) edge[bend right] node[below,yshift=-.1cm] {$d, N \mid N$} (bd1)
(ad) edge[loop above] node[above] {$d, N \mid \epsilon $} ()
(bd1) edge[bend left=15] node[above] {$d, N \mid \epsilon$} (bd2)
(bd2) edge[bend left=15] node[below] {$d, N \mid N$} (bd1)
(bd2) edge node[right] {$\#, B \mid \epsilon$} (acc)
(ad) edge node[above] {$\#, A \mid \epsilon$} (acc)
(acc) edge[loop above] node[above] {$\#, \bot \mid \bot$} ()
;
\end{tikzpicture}

    \caption{The $\omega$-PDA $\aut$ from Example~\ref{example:pda}. The self-loop at the white state~$q_4$ has color~$2$ while all other transitions have color~$1$, and $X$ represents an arbitrary stack symbol from the stack alphabet~$\set{A,B,N}$ (not including $\bot$).}
    \label{fig:pdaexample}
\end{figure}
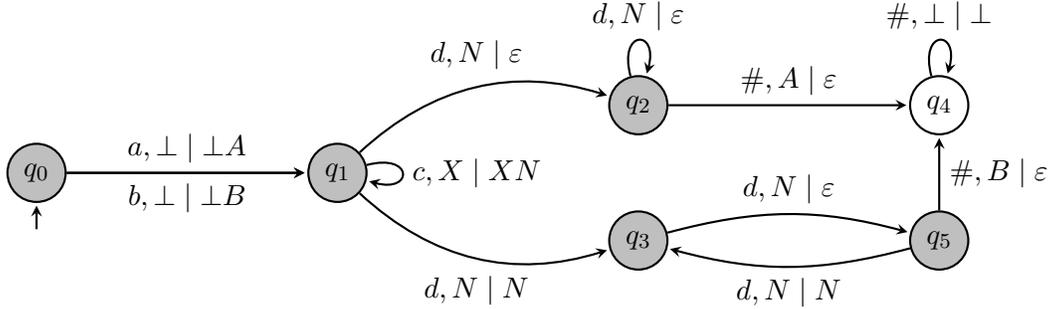

\end{exa}

\section{Good-for-games Pushdown Automata}
\label{sec_gfg}
Here, we introduce good-for-games $\omega$-push\-down automata ($\omega$-GFG-PDA for short), nondeterministic $\omega$-push\-down automata whose nondeterminism can be resolved based on the run prefix constructed thus far and on the next input letter to be processed, but independently of the continuation of the input beyond the next letter.

As an example, consider the $\omega$-PDA $\aut$ from Example~\ref{example:pda}. 
It is nondeterministic, but the nondeterminism in a configuration of the form~$(q_1,\gamma N)$ can be resolved based on whether the first transition of the run processed an $a$ or a $b$:
in the former case, take the transition to state~$q_2$, in the latter case the transition to state~$q_3$. 
Afterwards, there are no nondeterministic choices to make and the resulting run is accepting whenever the input is in the language. This automaton is therefore good-for-games. The language of this automaton is clearly \dcfl; an automaton separating \dcfl and the class of languages recognised by $\omega$-GFG-PDA is one of the main results of this section, presented below.

Formally, we say that an $\omega$-PDA~$\aut = (Q, \Sigma, \Gamma, q_\initmark, \Delta, \col)$ is good-for-games, if there is a (nondeterminism) resolver for $\aut$, a function~$\hstrat \colon \Delta^* \times \Sigma \rightarrow \Delta$ such that for every $w \in L(\aut)$ the sequence~$\tau_0 \tau_1 \tau_2 \cdots \in \Delta^\omega$  defined by
\[
\tau_n = \hstrat(\tau_0 \cdots \tau_{n-1}, w(\size{ \ell(\tau_0 \cdots \tau_{n-1})}))
\]
induces an accepting run of $\aut$ on $w$.
Note that the prefix processed so far can be recovered from $\hstrat$'s input, i.e., it is $\ell(\tau_0 \cdots \tau_{n-1})$. However, the converse is not true due to the existence of $\epsilon$-transitions.
This is the reason the run prefix and not the input prefix is the input for the resolver.

\begin{rem}
\label{remark_runs}
Let $\aut$ be an $\omega$-GFG-PDA over $\Sigma$ such that every finite word~$v \in \Sigma^*$ is a prefix of some word in $L(\aut)$, and let $\hstrat$ be a resolver for $\aut$.

Then, $\hstrat$ induces a run on every input~$w \in \Sigma^\omega$, even if $w$ is not accepted by $\aut$.
Indeed, if there is no run on $w$, because either there is no enabled transition that processes the next input letter or because it ends in an infinite tail of $\epsilon$-transitions, then some word in $L(\aut)$ has no accepting run induced by $\hstrat$, which is a contradiction.
\end{rem}

Each deterministic automaton is trivially good-for-games.
We denote the class of $\omega$-languages recognized by $\omega$-GFG-PDA by \gfgcfl.
By definition, the three classes of lan\-guages we consider form a hierarchy.

\begin{prop}
\label{prop_inclusions}
\dcfl $\subseteq $ \gfgcfl $\subseteq$ \cfl.
\end{prop}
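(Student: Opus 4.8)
The plan is to establish the two inclusions separately, each by exhibiting an appropriate resolver. Both inclusions follow almost immediately from the definitions, so the work is mostly unwinding what a resolver is.

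For the first inclusion, \dcfl $\subseteq$ \gfgcfl, I would take an arbitrary language $L \in$ \dcfl, witnessed by an $\omega$-DPDA $\aut = (Q, \Sigma, \Gamma, q_\initmark, \Delta, \col)$. The claim is that $\aut$, viewed as an $\omega$-PDA, is good-for-games, so that $L \in$ \gfgcfl. Since $\aut$ is deterministic, its determinism conditions guarantee that from any reachable configuration and any next input letter there is at most one enabled transition consistent with that letter (and no conflicting $\epsilon$-transition). I would define the resolver $\hstrat \colon \Delta^* \times \Sigma \rightarrow \Delta$ so that, given a run prefix $\tau_0 \cdots \tau_{n-1}$ and the next letter $a$, it returns the unique transition enabled in the current configuration that processes either $a$ or $\epsilon$, as dictated by the determinism constraints. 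Because $\aut$ has at most one run on every word, for $w \in L(\aut)$ this forced sequence of choices reconstructs exactly that unique accepting run, so $\hstrat$ is a valid resolver. The only mild subtlety is handling configurations in which no transition is enabled; on inputs in $L(\aut)$ an accepting run exists by assumption, so the resolver is never stuck along such inputs, and its behaviour off $L(\aut)$ is irrelevant to the definition of good-for-gameness.

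The second inclusion, \gfgcfl $\subseteq$ \cfl, is purely definitional: every $\omega$-GFG-PDA is by definition an $\omega$-PDA, so any language recognized by an $\omega$-GFG-PDA is recognized by an $\omega$-PDA, giving $L \in$ \cfl. Here there is nothing to construct; one simply observes that good-for-gameness is an extra property imposed on top of being an $\omega$-PDA, and forgetting that property leaves the recognized language unchanged.

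I do not anticipate a genuine obstacle, as both steps are direct consequences of the definitions; the statement is essentially a sanity check that the three classes nest as their defining conditions suggest. If anything, the one point requiring a little care is making the resolver in the first inclusion a total function on $\Delta^* \times \Sigma$ (as required by the type signature) while only its behaviour along run prefixes of words in $L(\aut)$ actually matters; I would dispose of this by letting $\hstrat$ return any fixed transition on inputs that do not correspond to a legal run prefix, since the definition of a resolver only constrains its outputs on words in $L(\aut)$.
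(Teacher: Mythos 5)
Your proof is correct and follows the same route as the paper, which simply observes that every deterministic automaton is trivially good-for-games (the resolver being forced by determinism) and that every $\omega$-GFG-PDA is by definition an $\omega$-PDA. You merely spell out the details the paper leaves implicit, including the harmless point about making the resolver total.
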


We show that both inclusions are strict. 
In particular, $\omega$-GFG-PDA are more expressive than $\omega$-DPDA.

Let $\ops = \set{\zero,\plus, \minus}$ and define the energy level~$\el(v) \in \ints$ of finite words~$v$ over $ \ops$ inductively as $\el(\epsilon) = 0$, and $\el(v\zero) = \el(v)$, $\el(v\plus) = \el(v) + 1$, as well as $\el(v\minus) = \el(v) - 1$.
We say that a word~$w \in \ops^\omega$ is \emph{safe} if $\el(w(0) \cdots w(n)) \ge 0$ for every $n \ge 0$.

\begin{rem}
\label{remark_safe}
Let $w \in \ops^\omega$.
\begin{enumerate}
    
    \item\label{remark_safe_lb} $w$ has a safe suffix if and only if there is an $s \in \nats$ such that $\el(w(0) \cdots w(n)) \ge -s$ for all $n$.

    \item\label{remark_safe_suffix} If $w$ is safe then there is an $n > 0$ such that $w(n) w(n+1) w(n+2) \cdots $ is safe as well. 
    
\end{enumerate}
\end{rem}

We fix $\Sigma = \ops \times \ops$ and define $\seplang$ to be the language containing all $w \in \Sigma^\omega$ such that $\proj_i(w)$ has a safe suffix, for some $i\in \set{1,2}$. 

\begin{lem}
\label{lemma_Lisgfg}
$\seplang  \in$ \gfgcfl.
\end{lem}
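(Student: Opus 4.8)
The plan is to first reformulate the acceptance condition in terms of energy levels, then build an $\omega$-PDA that tracks the energy of a single projection on its stack, and finally exploit the freedom of the resolver to choose \emph{which} projection to track. By Remark~\ref{remark_safe}(\ref{remark_safe_lb}), $\proj_i(w)$ has a safe suffix if and only if its energy level is bounded below, i.e.\ $\inf_n \el(\proj_i(w)(0)\cdots\proj_i(w)(n)) > -\infty$; hence $w \in \seplang$ iff at least one of the two projections has a bounded-below energy level. The core gadget is an automaton that keeps on its stack the quantity ``current energy minus running minimum'' of a chosen projection: reading $\plus$ pushes, reading $\minus$ pops, reading $\zero$ leaves the stack unchanged, and a $\minus$ read on the empty stack is a \emph{reset} that leaves the stack empty. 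Since the stack height then always equals the current energy minus the least energy seen so far, a reset happens exactly when a new minimum is reached; colouring resets with an odd colour and everything else with an even colour, the $\limsup$ of the colours is even iff there are finitely many resets iff the tracked projection is bounded below.

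The choice of projection is handled by nondeterminism. The automaton starts by choosing a projection to track and may, at any later point, \emph{switch} to the other projection: a switch empties the stack by a bounded sequence of $\epsilon$-transitions and then resumes tracking the other projection from the current position with an empty stack, all coloured with the odd colour. This yields $L(\aut) = \seplang$: for $w \in \seplang$, committing from the start to a bounded-below projection and never switching produces a run with finitely many resets and no switches, hence an accepting one; conversely, an accepting run has only finitely many odd-coloured transitions, so it switches finitely often and eventually tracks a fixed projection with only finitely many further resets, which forces that projection to be bounded below, so $w \in \seplang$.

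It remains to exhibit a resolver, which is where the real work lies: the resolver sees only the processed prefix, yet must commit to a projection whose safety depends on the infinite future. The key point is that the reset mechanism tolerates finitely many wrong guesses, so the resolver may hedge. Concretely, using its unbounded memory the resolver maintains the running minima $M_1, M_2$ of both projections and always tracks the projection whose running minimum is currently larger (keeping the current projection on ties). Each $M_j$ is a non-increasing integer sequence; if $w \in \seplang$ then some $M_j$ converges to a finite value, so the index of the larger running minimum stabilises after finitely many steps on a bounded-below projection. Thus the induced run switches only finitely often, and after the last switch it tracks a bounded-below projection, incurring only finitely many further resets; altogether only finitely many odd-coloured transitions occur and the run is accepting.

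I expect the main obstacle to be exactly this last step: proving that the ``follow the larger running minimum'' resolver switches only finitely often. This reduces to the elementary observation that two non-increasing integer sequences, at least one of which is bounded below, can change their relative order only finitely often, but it is the conceptual heart of the argument, since it is what lets a prefix-based resolver succeed without knowing the future.
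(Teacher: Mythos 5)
Your proposal is correct and follows the same overall route as the paper: the same two-mode \pda{} whose unary stack tracks the energy of the currently followed projection, odd colour on switches and on a $\minus$ read at empty stack, the same two-directional argument that the automaton recognizes $\seplang$ via Remark~\ref{remark_safe}(\ref{remark_safe_lb}), and a hedging resolver whose correctness rests on a finite-switching argument. The differences are local but worth noting. First, your switch empties the stack by $\epsilon$-transitions before re-tracking (the paper keeps the stack and consumes one ignored input letter); both work, though your sequence of $\epsilon$-pops is \emph{finite}, not ``bounded'' as you write, since the stack height at a switch is unbounded over the run. Second, and more substantively, your resolver follows the projection with the larger running minimum, whereas the paper's resolver follows the projection with the longest suffix of the processed prefix that is still safe (preferring component~$1$ on ties). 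Your criterion buys a crisper stabilization proof: the running minima are non-increasing integer sequences, one of which is bounded below when $w \in \seplang$, so their relative order flips only finitely often --- indeed, after the bounded one stabilizes, the difference of the two is monotone, allowing at most one further switch. The paper instead argues stabilization by observing that the earliest-starting safe suffix is eventually tracked, deferring the acceptance computation to its earlier membership argument; your version makes the ``conceptual heart'' you identify fully explicit, at the cost of re-deriving the stack-height invariant after each stack reset rather than reusing the single invariant the paper maintains across switches.
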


\begin{proof}
Consider the $\omega$-PDA~$\aut$ in Figure~\ref{fig:elaut} with two states~$1$ and $2$ signifying whether a potential safe suffix is being tracked in the first or second component of the input, and a single stack symbol~$N$ used to track the energy level of such a suffix. 
The initial state is arbitrary; we fix it to be $1$.

\begin{figure}
    \centering

\begin{tikzpicture}[thick]
\def\y{1.2}
\def\x{2}
\tikzset{every state/.style = {minimum size =22}}
\node[state,fill=lightgray] (l) at (0*\x,0) {$1$};
\node[state,fill=lightgray] (r) at (3*\x, 0) {$2$};

\path[-stealth]
(0,-.75) edge (l)
(l) edge[bend left=15] node[above] {\colorbox{lightgray!30}{$\binom{\ast}{\ast} ,X\mid X$}} (r)
(r) edge[bend left=15] node[below] {\colorbox{lightgray!30}{$\binom{\ast}{\ast},X\mid X$}} (l)
(l) edge[loop left] node[left,align=left] {$\binom{\zero}{\ast}, X\mid X$\\[1ex] $\binom{\plus}{\ast}, X\mid XN $\\[1ex] $\binom{\minus}{\ast}, N \mid \epsilon$\\ [1ex]
\colorbox{lightgray!30}{\!$\binom{\minus}{\ast}, \bot \mid \bot$}} ()
(r) edge[loop right] node[right,align=left] {
 $\binom{\ast}{\zero}, X\mid X$\\[1ex]
 $\binom{\ast}{\plus}, X\mid XN$ \\[1ex]
 $\binom{\ast}{\minus}, N \mid \epsilon$\\ [1ex]
\colorbox{lightgray!30}{\!$\binom{\ast}{\minus}, \bot \mid \bot$
}} ()
;
\end{tikzpicture}

    \caption{The $\omega$-PDA $\aut$ recognizing $\seplang$. Here, $\ast$ is an arbitrary letter in $\set{\zero,\plus,\minus}$ and $X$ an arbitrary stack symbol from the stack alphabet~$\set{N}$ or $\bot$. Transitions with light gray background have color $1$, all others have color~$0$.}
    \label{fig:elaut}
\end{figure}

The automaton can, at any moment, nondeterministically change its state from $1$ to $2$ and vice versa without changing the stack content (while processing an input letter that is just ignored). 
When not changing its state, say while staying in state~$i$, $\aut$ deterministically processes the next input letter~$\binom{a_1}{a_2}$. 
If $a_i = \zero$ then the stack is left unchanged and if $a_i = \plus$ then an $N$ is pushed onto the stack. 
If $a_i = \minus$ and the stack is nonempty, then the topmost $N$ is popped from the stack. 
The stack is left unchanged if $a_i = \minus$ and the stack is empty.
Note that the only nondeterministic choice is to change the state, i.e., if the state is not changed then the automaton has exactly one transition that can process the next input letter.
Furthermore, $\aut$ has no $\epsilon$-transitions, i.e., each transition processes an input letter.

A transition has color $1$ if it either changes the state, or  if it processes  a $\minus$ in the $i$-th component while in state~$i$ with an empty stack.
All other transitions have color~$0$. 
Hence, a run is accepting if and only if it eventually stays in some state~$i$ and then does not process a $\minus$ in component~$i$ while the stack is empty.

First, we show that $\seplang  = L(\aut)$, then that $\aut$ is good-for-games.
Let $w \in \seplang $, i.e., there is an $i \in \set{1,2}$ such that $\proj_i(w) = a(0) a(1) a(2) \cdots$ has a safe suffix, say $a(n) a(n+1) a(n+2) \cdots$. 
Due to Remark~\ref{remark_safe}(\ref{remark_safe_suffix}), we assume w.l.o.g.~$n>0$.
Consider the unique run~$\rho = c_0 \tau_0 c_1 \tau_1 c_2 \tau_2 \cdots$ of $\aut$ on $w$ that immediately switches to state~$i$ (if necessary) and otherwise always executes the unique enabled transition that processes the next input letter without changing state.
We have $\ell(\tau_0 \cdots \tau_{j-1}) = w(0) \cdots w(j-1)$ due to the absence of $\epsilon$-transitions.

An induction shows the invariant~$\sh(c_{n+j}) = \sh(c_n) + \el( a(n) \cdots a(n+j-1))$ for every $j \ge 0$.
Here we use the assumption $n>0$, which ensures that no transitions changing the state are used to process the safe suffix.
Hence, the state is equal to $i$ at all but possibly the first configuration and the invariant implies that no $\minus$ is processed in component~$i$ by a transition~$\tau_{n+j}$ while the stack is empty.
This implies that the run is accepting, i.e., $w \in L(\aut)$.

Conversely, let $w \in L(\aut)$. Then, there is an accepting run~$c_0 \tau_0 c_1 \tau_1 c_2 \tau_2 \cdots$ of $\aut$ on $w$.
Again, as $\aut$ has no $\epsilon$-tran\-si\-tions,  $\ell(\tau_0 \cdots \tau_{j-1}) = w(0) \cdots w(j-1)$.
Now, let $\tau_{n-1}$ be the last transition with color~$1$ (pick $\tau_{n-1} = \tau_0$ if there is no such transition) and let $c_{n} = (i,\bot N^s)$.
We define $a(n) a(n+1) a(n+2) \cdots $ to be the suffix of $\proj_i(w) = a(0) a(1) a(2) \cdots$ starting at position~$n$.

By the choice of $n$, after $\tau_{n-1}$ no transition changes the state (it is always equal to $i$) and there is no $j \ge 0 $ such that $a(n+j)=\minus$ and $\sh(c_{n+j})=0$.
Thus, an induction shows that $\el(a(n) \cdots a(n+j)) = \sh(c_{n+j}) - s $.
Hence, the energy level of the prefixes of $a(n) a(n+1) a(n+2) \cdots$ is bounded from below by $-s$. 
Thus, Remark~\ref{remark_safe}(\ref{remark_safe_lb}) implies that $a(n) a(n+1) a(n+2) \cdots$ has a safe suffix, i.e., $w \in \seplang $.

It remains to show that $\aut$ is good-for-games.\footnote{Although the language we use here is different, this argument is similar to the one used by Kuperberg and Skrzypczak~\cite{KS15} to show that good-for-games co-Büchi automata are exponentially more succinct than deterministic ones.}
Intuitively, we construct a resolver~$\hstrat$ that always chooses to track the input component that has the longest suffix that can still be extended to an infinite safe word (preferring the first component in case of a tie).  If one of the components has a safe suffix, eventually the resolver will choose to track it. The resulting run will be accepting.

Fix $v = \binom{a_1(0)}{a_2(0)} \cdots \binom{a_1(n)}{a_2(n)} \in \Sigma^+$ and let $S_i$ contain those $j \le n$ such that $a_i(j) \cdots a_i(n)$ is safe (which is defined as expected).
Further, let 
\[i(v) = \begin{cases}
1 &\text{if $\min S_1 \le \min S_2$,}\\
2 &\text{otherwise,}
\end{cases}
\]
where we use $\min \emptyset = n+1$. 
We define $\hstrat(\tau_0 \cdots \tau_{n-1},a)$ inductively to always ensure that its target state is equal to $ i( \ell(\tau_0 \cdots \tau_{n-1})a)$ and, if this does not require a state change, then the unique transition processing the $i( \ell(\tau_0 \cdots \tau_{n-1})a)$-th component of $a$ is returned.

Now, let $w \in \seplang $, i.e., there is an $i \in \set{1,2}$ such that $\proj_i(w)$ has a safe suffix. 
Then, $\hstrat$ produces a run that tracks the safe suffix that starts as early as possible (again favoring the first component in case of a tie).
As in the argument above one can show that this run is accepting, as it switches states only finitely often and processes only finitely many $-$ while the stack is empty: After the last state change, the current suffix in the tracked component might not be safe, but it is the suffix of a safe suffix. Thus, due to Remark~\ref{remark_safe}(\ref{remark_safe_suffix}), the current suffix has a safe suffix as well. Hence, from this point onward, no $-$ is processed while the stack is empty.
Thus, $\hstrat$ has the desired properties and $\aut$ is good-for-games.
\end{proof}

After having shown that $\seplang $ is in \gfgcfl, we show that it is not in \dcfl, thereby separating \dcfl and \gfgcfl.

\begin{lem}
\label{lemma_Lisnotdcfl}
$\seplang  \notin$ \dcfl.
\end{lem}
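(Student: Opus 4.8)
The plan is to argue by contradiction, assuming that some $\omega$-DPDA recognises $\seplang$, and to exploit that $\seplang$ is a disjunction of two conditions, ``$\proj_1(w)$ has a safe suffix'' and ``$\proj_2(w)$ has a safe suffix'', each of which concerns a single energy counter but which together a single stack cannot monitor. Conceptually, $\seplang$ is the $\omega$-energy analogue of the classical example $\{a^ib^jc^k \mid i=j \text{ or } j=k\}$, which is context-free but not deterministic context-free: as there, a union of two ``counter'' conditions can be recognised by guessing which disjunct will hold, but not deterministically. The good-for-games automaton of Lemma~\ref{lemma_Lisgfg} makes exactly this guess, tracking whichever component's energy it commits to on the stack.

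The cleanest route I would take is via complementation. Since $\dcfl$ is closed under complement (a classical fact: a deterministic parity condition is complemented by shifting all colours by one, after completing the automaton so that it has a run on every input), it suffices to show that $\overline{\seplang} \notin \cfl$. By Remark~\ref{remark_safe}(\ref{remark_safe_lb}), $\proj_i(w)$ has no safe suffix exactly when the energy level $\el(\proj_i(w)(0)\cdots \proj_i(w)(n))$ is unbounded from below, so $\overline{\seplang}$ is the set of words in which \emph{both} components have energy unbounded from below. Each of these two conditions is individually $\omega$-context-free (the single-component automaton underlying Figure~\ref{fig:elaut} is already deterministic), but I claim their conjunction is not: certifying that a component reaches ever-new negative records requires a counter that is never reset, and a single stack cannot maintain two such counters at once. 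To turn this into a proof I would take a word $w \in \overline{\seplang}$ built from increasingly deep, alternating dips, $\prod_{k\ge 1}\binom{-}{0}^{k}\binom{+}{0}^{k}\binom{0}{-}^{k}\binom{0}{+}^{k}$, and apply an iteration theorem for $\omega$-context-free languages to a sufficiently deep component-$1$ dip, pumping it so as to perturb the long-run behaviour of component~$1$ while leaving component~$2$ untouched.

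The hard part is precisely that ``unbounded from below'' is a tail property: it depends on infinitely many dips, so a single context-free pump, which edits only a bounded nested region, cannot by itself flip membership. I therefore expect the main obstacle to be setting up the iteration so that it affects the $\liminf$ of one component \emph{uniformly} (for instance by deleting the recovery blocks of one component across all dips via a uniformised pump, or by invoking an $\omega$-iteration theorem that re-injects copies at infinitely many positions) while provably preserving the unboundedness of the other. An alternative that sidesteps complementation is a direct adaptive argument against the assumed $\omega$-DPDA: build the input in rounds, keeping both components recoverable, and at each round drive a deep dip into whichever component the automaton's single, at-any-instant-bounded stack is not currently tracking, forcing the $\limsup$ of the colours to be even on a word outside $\seplang$ or odd on a word inside it. In either route the crux is the same, namely formalising that one stack cannot follow two independent energy levels to $-\infty$.
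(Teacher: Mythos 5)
Your overall route is legitimate in principle: $\omega$-DCFL is indeed closed under complement (the paper asserts this in Table~\ref{table:results:closure}, though ``complete, then shift colours'' needs care with blocked configurations and infinite tails of $\epsilon$-transitions), and the paper itself later proves exactly your key claim, $\overline{\seplang} \notin {}$\cfl, in the complementation part of Theorem~\ref{thm_closure}. So reducing Lemma~\ref{lemma_Lisnotdcfl} to non-contextfreeness of the complement is a workable plan. The genuine gap is that you never establish that key claim, and the tools you invoke cannot do so. As you yourself observe, a standard iteration theorem edits one bounded, nested region finitely often, and ``energy unbounded from below'' is a tail property invariant under any such edit; the ``uniformised pump'' or ``$\omega$-iteration theorem re-injecting copies at infinitely many positions'' that you appeal to does not exist in the form you need. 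Your fallback adaptive argument is likewise not a proof: ``the component the stack is not currently tracking'' is not a well-defined notion for an arbitrary $\omega$-DPDA, and you give no mechanism forcing the parity of the colours along the constructed run.

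The missing idea, which the paper uses both for Lemma~\ref{lemma_Lisnotdcfl} (directly, without complementation) and for Theorem~\ref{thm_closure}, is to pump a loop \emph{infinitely often} rather than finitely: call a position $n$ of a run a step if $\sh(c_n) \le \sh(c_{n+j})$ for all $j \ge 0$; every infinite run has infinitely many steps, so by pigeonhole one finds steps $s < s'$ with the same state and top stack symbol, such that the maximal colour on $\tau_s \cdots \tau_{s'-1}$ has the wrong parity and the processed infix $v$ satisfies $\el(\proj_i(v)) > 0$ for some $i$. Then $\tau_0 \cdots \tau_{s-1}(\tau_s \cdots \tau_{s'-1})^\omega$ is a run on a word with suffix $v^\omega$, which \emph{is} in $\seplang$ by Remark~\ref{remark_safe}(\ref{remark_safe_lb}), yet the (unique, by determinism) run is rejecting --- replacing the whole tail sidesteps exactly the obstruction you identified. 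This also requires choosing the input word so that \emph{every} sufficiently long infix has positive energy in some component; the paper's $w_{\overline{ss}} = x_1 (x_2)^3 (x_1)^7 \cdots$ with $x_1 = \binom{\plus}{\zero}\binom{\plus}{\minus}$, $x_2 = \binom{\zero}{\plus}\binom{\minus}{\plus}$ is engineered for this, whereas your dip word $\prod_{k}\binom{\minus}{\zero}^{k}\binom{\plus}{\zero}^{k}\binom{\zero}{\minus}^{k}\binom{\zero}{\plus}^{k}$ would need a separate verification that no loop between steps can have negative (strictly, in the appropriate sense) energy in both components. Without the step-loop $uv^\omega$ device, the proposal does not close.
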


\begin{proof}
We assume towards a contradiction that there is an $\omega$-DPDA~$\aut = (Q, \Sigma, \Gamma, q_\initmark, \Delta, \col)$ recognizing $\seplang $.
Define $x_1 = \binom{\plus}{\zero} \binom{\plus}{\minus}$ and $x_2 = \binom{\zero}{\plus} \binom{\minus}{\plus}$, i.e., in $x_i$ the energy level in component~$i$ is increased by two while it is decreased by one in the other component.

Define\label{bla}
\[w_{\overline{ss}} =
x_1 \, 
(x_2)^3 \,
(x_1)^7 \,
(x_2)^{15} \,
(x_1)^{31} \,
(x_2)^{63} \,
\cdots.
\]
An induction shows \[\el( \proj_1(  x_1 (x_2)^3 \cdots (x_2)^{2^{2j}-1} ) ) = -j \] for every $j > 1$ and \[\el( \proj_2(  x_1 (x_2)^3 \cdots (x_1)^{2^{2j-1}-1} ) ) = -j \] for every $j > 0$. 
Hence, due to Remark~\ref{remark_safe}(\ref{remark_safe_lb}), $w_{\overline{ss}} \notin \seplang $.

Due to Remark~\ref{remark_runs}, $\aut$ has a run~$\rho = c_0 \tau_0 c_1 \tau_1 c_2 \tau_2 \cdots $ on $w_{\overline{ss}}$, which is rejecting. 
\label{stepsdef}A step of $\rho$ is a position~$n$ such that $\sh(c_n) \le \sh(c_{n+j})$ for all $j \ge 0$.
Every infinite run has infinitely many steps.
Hence, we can find two steps~$s < s'$ satisfying the following properties:

\begin{enumerate}
    
    \item There is a state~$q \in Q$ and a stack symbol~$X \in \Gammabot$ such that $c_s = (q, \gamma X)$ and $c_{s'} = (q, \gamma' X)$ for some $\gamma, \gamma'$, i.e., both configurations have the same state and topmost stack symbol.
    
    \item The maximal color labeling the sequence~$\tau_{s} \cdots \tau_{s'-1}$ of transitions leading from $c_s$ to $c_{s'}$ is odd.

    \item The sequence~$\tau_{s} \cdots \tau_{s'-1}$  processes an infix~$v$ of $w$ with $\el(\proj_i(v)) > 0$, for some $i \in \set{1,2}$. This can be achieved, as every infix of length at least $3$ of a word built by concatenating copies of the $x_i$ has a strictly positive energy level in one component (note that the infix may start or end within an $x_i$).
    
\end{enumerate}

Consider the sequence~$\tau_0 \cdots \tau_{s-1}( \tau_s \cdots \tau_{s'-1} )^\omega $ of transitions.
Due to the first property, it induces a run~$\rho'$ of $\aut$, which is rejecting due to the second property. 
Finally, due to the third property, $\rho'$ processes a word with suffix~$v^\omega$.
Such a word has a safe suffix in component~$i$, as $\el(\proj_i(v)) > 0$.
Hence, we have constructed a word in $\seplang $ such that the unique (as $\aut$ is deterministic) run of $\aut$ on $w$ is rejecting, obtaining the desired contradiction to $L(\aut) = \seplang $.
\end{proof}

Our main result of this section is now a direct consequence of the previous two lemmata:
$\omega$-GFG-PDA are more expressive than $\omega$-DPDA.

\begin{thm}
\label{thm_dcflvsgfgcfl}
\dcfl $\subsetneq$ \gfgcfl.
\end{thm}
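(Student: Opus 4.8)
The plan is to observe that all the substantive work has already been carried out in the preceding development, so that the theorem follows by assembling three previously established facts rather than by any fresh argument. First I would invoke Proposition~\ref{prop_inclusions}, which already yields the inclusion \dcfl $\subseteq$ \gfgcfl; this is the easy half, resting on nothing more than the observation that a deterministic automaton serves as its own resolver. The only remaining obligation is to witness strictness, i.e.\ to exhibit a single $\omega$-language lying in \gfgcfl but not in \dcfl.

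For the separating witness I would take the language $\seplang$ fixed earlier in this section. By Lemma~\ref{lemma_Lisgfg} we have $\seplang \in$ \gfgcfl, with the resolver being the one that tracks whichever input component currently carries the longest suffix still extendable to an infinite safe word. By Lemma~\ref{lemma_Lisnotdcfl} we have $\seplang \notin$ \dcfl. Chaining these two membership statements gives $\seplang \in$ \gfgcfl $\setminus$ \dcfl, so the inclusion supplied by Proposition~\ref{prop_inclusions} is proper, which is precisely the claim.

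Since the genuinely hard reasoning is confined to the two lemmata, there is essentially no obstacle left at the level of the theorem itself; it is purely a matter of combining the inclusion with the two membership facts. Had the lemmata not already been in hand, the main difficulty would have sat in the lower bound of Lemma~\ref{lemma_Lisnotdcfl}: proving that no $\omega$-DPDA recognizes $\seplang$ relies on a pumping-style argument that inserts a loop of the (unique, rejecting) run on the ``unsafe'' diagonal word~$w_{\overline{ss}}$ so that the repeated infix has strictly positive energy in one component, thereby producing a word that ought to be accepted while the unique deterministic run remains rejecting. Isolating such a loop --- one that simultaneously returns to the same state and topmost stack symbol at two stack-minimal steps, carries an odd maximal color, and has positive energy in a single component --- is the delicate point, but it is already discharged there, so here it may simply be cited.
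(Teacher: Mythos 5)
Your proposal is correct and matches the paper's proof exactly: the paper likewise obtains Theorem~\ref{thm_dcflvsgfgcfl} as a direct consequence of Lemma~\ref{lemma_Lisgfg} and Lemma~\ref{lemma_Lisnotdcfl}, with the inclusion \dcfl $\subseteq$ \gfgcfl already supplied by Proposition~\ref{prop_inclusions}. Your summary of where the real difficulty lies (the pumping-style argument on $w_{\overline{ss}}$ in Lemma~\ref{lemma_Lisnotdcfl}) is also accurate.
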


The next obvious question is whether every (nondeterministic) $\omega$-contextfree language is good-for-games. 
Not unexpectedly, this is not the case.
The intuitive reason is that good-for-games automata allow to resolve nondeterminism based on the history of a run, but still cannot resolve nondeterminism based on the continuation of the input.
Considering a language that requires nondeterministic choices about the continuation of the input yields the desired separation.  
To this end, we adapt the classical proof that 
\[
\set{ a^nb^n \mid n\ge 1} \cup \set{ a^n b^{2n} \mid n\ge 1}\]
is not recognizable by a DPDA over finite words to our setting. 
Recall that this proof proceeds by contradiction as follows: Consider a DPDA~$\aut$ recognizing this language and inputs~$w_0 = a^nb^n$ and $w_1 = a^nb^{2n}$, both in the language. The run of $\aut$ on $w_0$ is a prefix of the run~$\rho_1$ of $\aut$ on $w_1$ and ends an accepting state. Furthermore, no other prefix of $\rho_1$ can be accepting.

So, one can modify $\aut$ so that it behaves like $\aut$ until an accepting state is reached for the first time (the run thus far has processed an input~$a^nb^n$ for some $n$ up to that point due to the prefix property above). From that point onward, the modified automaton processes $c$'s instead of $b$'s (but behaves like $\aut$ otherwise) until it visits an accepting state again, i.e., instead of processing another $n$ $b$'s it processes $n$ $c$'s. Hence, the modified automaton accepts the non-contextfree language~$\set{a^nb^nc^n \mid n\ge 1}$, which yields the desired contradiction. 

Note that the prefix property of runs underlying the argument above is also satisfied by good-for-games automata.
For technical convenience, we use the language
\[L = 
\set{ (a\#)^n (b\#)^n \#^\omega \mid n\ge 1} \cup \set{ (a\#)^n (b\#)^{2n} \#^\omega \mid n\ge 1},
\]
i.e., we add the dummy symbols~$\#$ which obfuscate the next (proper) letter to be processed from the resolver.

\begin{thm}
\label{thm_gfgvscfl}
\gfgcfl $\subsetneq$ \cfl.
\end{thm}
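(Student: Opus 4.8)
The plan is to show that the language
\[
L =
\set{ (a\#)^n (b\#)^n \#^\omega \mid n\ge 1} \cup \set{ (a\#)^n (b\#)^{2n} \#^\omega \mid n\ge 1}
\]
is \cfl but not \gfgcfl. Membership in \cfl is routine: a nondeterministic $\omega$-PDA reads the $(a\#)^n$ block while pushing one symbol per $a$, then guesses whether to match $n$ or $2n$ occurrences of $b\#$ (popping appropriately), and finally verifies the $\#^\omega$ tail. The whole difficulty is the lower bound, so I would state the easy direction briefly and concentrate on showing $L \notin$ \gfgcfl.

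For the lower bound I would argue by contradiction, assuming an $\omega$-GFG-PDA~$\aut$ with resolver~$\hstrat$ recognizes $L$. The key structural observation is the \emph{prefix property} of runs emphasized in the paragraph preceding the statement: since $(a\#)^n(b\#)^n\#^\omega$ and $(a\#)^n(b\#)^{2n}\#^\omega$ share the common prefix $(a\#)^n(b\#)^n$, and a resolver's choices depend only on the run prefix and the next input letter, the run that $\hstrat$ induces on these two words must agree on this shared prefix. First I would invoke Remark~\ref{remark_runs}: because every $v \in \Sigma^*$ extends to a word in $L$, the resolver induces a genuine run on \emph{every} input, in particular on inputs that are not in $L$. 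Then, exactly as in the classical DPDA argument, I would locate within the induced run on $(a\#)^n(b\#)^n\#^\omega$ the first configuration from which the accepting tail is secured (morally, the point where the automaton ``commits'' to having seen matching blocks), noting that on the extension to $(a\#)^n(b\#)^{2n}\#^\omega$ the run continues identically up to that point.

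The crux, and the main obstacle, is turning this into a contradiction via a surgery that produces a run on a \emph{non}-contextfree word. The idea is to build a new $\omega$-GFG-PDA~$\aut'$ that simulates $\aut$ under the resolver~$\hstrat$ until the commit point is reached, and from then on continues the \emph{same} run but reads $c$'s in place of the second batch of $b$'s, so that $\aut'$ would recognize a language whose $\#$-free projection is essentially $\set{a^n b^n c^n \mid n \ge 1}$. Because the resolver fixes all nondeterministic choices, the relevant behaviour of $\aut$ along the induced runs is effectively deterministic, which is what licenses the substitution; the dummy $\#$ symbols are precisely what let the resolver make the same choices on the $b$-block and the substituted $c$-block, since it cannot see past the next letter. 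I would have to check carefully that $\aut'$ is a legitimate $\omega$-PDA (or at least that the implied language is $\omega$-contextfree) and that the pumping structure forces the triple equality of the three block lengths, yielding $\set{a^nb^nc^n\mid n\ge1}$ up to the $\#$-padding, contradicting the context-freeness of $L(\aut')$.

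The subtle point I expect to need the most care is making the ``commit point'' well-defined and showing that \emph{no earlier} prefix of the induced run already secures acceptance, since an $\omega$-parity acceptance condition (limsup of colors) is not a simple reachability condition like the accepting state in the finite-word case. I would address this by working with the finitary acceptance witnessed along the $(b\#)^n$ portion together with the forced $\#^\omega$ tail, isolating the precise position after which the color profile guarantees acceptance, and then arguing that the substitution preserves exactly this color profile because the $c$-transitions mirror the $b$-transitions under~$\hstrat$. Once the surgery is justified, the contradiction with the non-context-freeness of $\set{a^nb^nc^n\mid n\ge1}$ closes the proof, and combined with Theorem~\ref{thm_dcflvsgfgcfl} establishes that \gfgcfl sits strictly between \dcfl and \cfl.
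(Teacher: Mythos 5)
Your overall skeleton matches the paper's: same language $L$, same target contradiction via (a $\#$-padded version of) $\set{a^nb^nc^n \mid n \ge 1}$, and the prefix property of the resolver-induced runs on $w_1 = (a\#)^n(b\#)^n\#^\omega$ versus $w_2 = (a\#)^n(b\#)^{2n}\#^\omega$ playing exactly the role you assign it. But there is a genuine gap at the crux, the construction of $\aut'$, in two respects. First, you cannot build an automaton that ``simulates $\aut$ under the resolver~$\hstrat$'': a resolver is an arbitrary, possibly uncomputable function, and nothing licenses embedding it into a pushdown device --- indeed Section~\ref{section_resolver} of this very paper shows that some $\omega$-GFG-PDA admit no pushdown resolver at all. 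The paper's $\aut'$ is a plain \emph{nondeterministic} $\omega$-PDA that simulates an arbitrary run of $\aut$; the resolver enters only the correctness proof, to show $L_{abc} \subseteq L(\aut')$: since $\hstrat$'s choices on $w_1$ and $w_2$ agree on the common prefix ending with the last $b$, the induced run on $w_1$ passes through the switch point at the right moment and its continuation on $w_2$ can be grafted on, with $b$'s renamed to $c$'s.

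Second, your ``commit point'' is the wrong object, and this is where the main missing idea lies. Under limsup/parity acceptance no finite color profile ever secures acceptance, so isolating ``the precise position after which the color profile guarantees acceptance'' cannot work. The paper instead defines the commit point semantically: the first time the run leaves a configuration~$(q,\gamma)$ with $\gamma q \in C$, where $C$ is the set of configurations from which $\aut$ accepts $\#^\omega$. The technical heart of the proof is then showing that $C \subseteq \bot\Gamma^*Q$ is a \emph{regular} language of stack-content/state words --- via the auxiliary set~$C_0$ and saturation techniques applied to the restriction of $\aut$ to $\#$- and $\epsilon$-transitions --- so that $\aut'$ can detect membership in $C$ on the fly by enriching the stack alphabet to $\Gamma \times Q'$ and threading the run of a DFA for $C$ through the stack. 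Without this regularity-plus-annotation device, the commit point depends on the entire stack content and is simply not detectable by a pushdown automaton, so your surgery has no way to know when to switch from $b$'s to $c$'s. (A smaller point: in the converse inclusion $L(\aut') \subseteq L_{abc}$, what forces the three blocks to have equal length is that the simulated run of $\aut$, with $c$'s renamed back to $b$'s, must be accepting and hence process a word of $L$; no appeal to ``effective determinism'' of $\aut$ along resolver-induced runs is available, or needed, there.)
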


\begin{proof}
The language~$L$ is in \cfl, as an $\omega$-PDA can process a prefix~$(a\#)^n$ by storing $n$ in unary on the stack and then nondeterministically guess and verify whether the remaining suffix is  $(b\#)^n \#^\omega$ (by popping a symbol from the stack for every $b$) or whether it is $(b\#)^{2n} \#^\omega$ (by popping a symbol from the stack for every other $b$), similarly to the automaton from Example~\ref{example:pda}

We claim that $L$ is not in \gfgcfl.
We assume towards a contradiction that there is an $\omega$-GFG-PDA~$\aut = (Q, \Sigma, \Gamma, q_\initmark, \Delta, \col)$ with $L(\aut) = L$, say with resolver~$\hstrat$. In what follows, we will reach a contradiction by constructing an $\omega$-PDA that recognizes the language 
\[
L_{abc} = \set{ (a\#)^n (b\#)^n (c\#)^n \#^\omega \mid n \ge 1 },
\]
which is not in \cfl.
This follows from $\set{ a^n b^n c^n \mid n \ge 1 }$ not being contextfree and from the closure properties of \cfl~\cite{DBLP:journals/jcss/CohenG77a} and of the contextfree languages~\cite{DBLP:books/daglib/0016921}.

First, we note that the language
\[
C = \set{
\gamma q \in \bot\Gamma^*Q \mid \aut \text{ accepts } \#^\omega \text{ when starting in } (q,\gamma) }
\]
is regular.
This can be shown by first noting that 
\begin{align*}
C_0 = \{
\gamma X q \in \bot\Gamma^*Q \mid \aut \text{ accepts } \#^\omega \text{ when starting in }  (q,\gamma X) \text{ with a run that only}\\ \text{visits configurations} \text{ of stack height greater or equal to } \sh(q,\gamma X) \}
\end{align*}
is a finite union of languages~$\Gamma^*Xq \cap \bot\Gamma^*q $ for some $X \in \Gammabot$ and some $q \in Q$, and therefore regular.
Now, $C$ is equal to
\begin{align*}
\{ \gamma q \in \bot\Gamma^*Q \mid& \text{ there is a run infix~$\rho$ with $\ell(\rho) \in \#^*$ leading from $(q,\gamma)$  to $C_0$} \}.    
\end{align*}
An application of standard saturation techniques~\cite{Buechi1964}\footnote{Also, see the survey of Carayol and Hague~\cite{CarayolHague14} for more details.} (applied to the restriction of $\aut$ to transitions labeled by $\#$ or $\epsilon$) shows that the latter set is regular, as the target set~$C_0$ is regular. 

Using a deterministic finite automaton~$\auta = (Q', \Gammabot\cup Q, q_\initmark', \delta, F)$ recognizing $C$ we construct an $\omega$-PDA~$\aut'$ as follows (also, see Figure~\ref{fig:construction}):
We extend the stack alphabet~$\Gamma$ of $\aut$ to $\Gamma \times Q'$ and define the transition relation of $\aut'$ so that it simulates a run of $\aut$ and keeps track of the state of $\auta$ reached by processing the stack content, i.e., if $\aut'$ reaches a stack content
\[
\bot (X_1, q_1) \cdots (X_s, q_s)
\]
then we have $q_j = \delta^*(q_\initmark', \bot X_0 \cdots X_j)$ for every $0 \le j \le s$.\footnote{Note that the simplest way to implement this is to replace each transition that swaps the topmost stack symbol from $X$ to $X'$ by two transitions, the first popping $X$ from the stack, and the second pushing $X'$ onto the stack (using a fresh state reached between the new transitions). }

Additionally, the states of $\aut'$ have a Boolean flag that is changed if $\aut'$ leaves a configuration of the form~$(q,\gamma)$ with $\gamma q \in C$ for the first time. 
As the stack symbols of $\aut'$ encode the run of $\auta$ on the stack content, this can be checked easily.
From there on, $\aut'$ continues to simulate a run of $\aut$, but now every transition labeled by a $b$ in $\aut$ is labeled by a $c$.

Finally, we define the acceptance condition of $\aut'$ such that it only accepts if it switches the flag, afterwards continues the simulation of a run of $\aut$ that is accepting, and processes at least one $c$ after the switch. 
Note that this requires adding a second Boolean flag to the states to check whether a $c$ has been processed.
We claim that $\aut'$ recognizes the language $L_{abc}$.

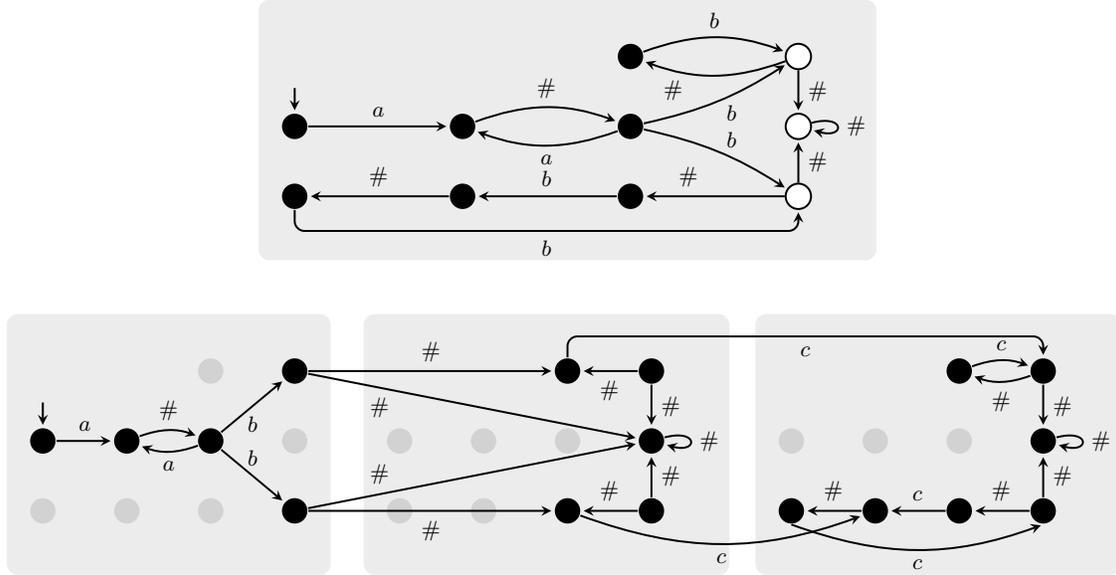
\begin{figure}
\centering
\scalebox{.93}{
    \begin{tikzpicture}[thick]

\def\x{2.4}
\def\y{1.5}
\def\d{5.6}
\tikzset{sn/.style={fill=black,circle}}
\tikzset{sng/.style={fill=lightgray!70,circle}}
\tikzset{snw/.style={fill=white,circle,draw= black}}

\coordinate (zero) at (3.6,4.5);

\draw[rounded corners,fill=lightgray!30,lightgray!30] ($ (zero) + (-.5,-\y-.4) $) rectangle ($(zero) + (1.1+ 3*\x, \y+.3)$);

\node[sn] (io) at (zero) {};
\node[sn] (1o) at ($(zero) + (1*\x,0)$) {};
\node[sn] (2o) at ($(zero) + (2*\x,0)$) {};
\node[snw] (3o) at ($(zero) + (3*\x,1)$) {};
\node[snw] (4o) at ($(zero) + (3*\x,0)$) {};
\node[snw] (5o) at ($(zero) + (3*\x,-1)$) {};
\node[sn] (6o) at ($(zero) + (2*\x,1)$) {};
\node[sn] (7o) at ($(zero) + (2*\x,-1)$) {};
\node[sn] (8o) at ($(zero) + (1*\x,-1)$) {};
\node[sn] (9o) at ($(zero) + (0*\x,-1)$) {};

\def\x{1.2}
\def\y{1.5}
\def\d{5.6}

\coordinate (zero) at (0,0);

\draw[rounded corners,fill=lightgray!30,lightgray!30] ($ (zero) + (-.5,-\y-.4) $) rectangle ($(zero) + (.5+ 3*\x, \y+.3)$);

\node[sn] (i) at (zero) {};
\node[sn] (1) at ($(zero) + (1*\x,0)$) {};
\node[sn] (2) at ($(zero) + (2*\x,0)$) {};
\node[sn] (3) at ($(zero) + (3*\x,1)$) {};
\node[sng] (4) at ($(zero) + (3*\x,0)$) {};
\node[sn] (5) at ($(zero) + (3*\x,-1)$) {};
\node[sng] (6) at ($(zero) + (2*\x,1)$) {};
\node[sng] (7) at ($(zero) + (2*\x,-1)$) {};
\node[sng] (8) at ($(zero) + (1*\x,-1)$) {};
\node[sng] (9) at ($(zero) + (0*\x,-1)$) {};

\coordinate (zero) at (\d-.5,0);

\draw[rounded corners,fill=lightgray!30,lightgray!30] ($ (zero) + (-.5,-\y-.4) $) rectangle ($(zero) + (1.1+ 3*\x, \y+.3)$);

\node[sng] (if) at (zero) {};
\node[sng] (1f) at ($(zero) + (1*\x,0)$) {};
\node[sng] (2f) at ($(zero) + (2*\x,0)$) {};
\node[sn] (3f) at ($(zero) + (3*\x,1)$) {};
\node[sn] (4f) at ($(zero) + (3*\x,0)$) {};
\node[sn] (5f) at ($(zero) + (3*\x,-1)$) {};
\node[sn] (6f) at ($(zero) + (2*\x,1)$) {};
\node[sn] (7f) at ($(zero) + (2*\x,-1)$) {};
\node[sng] (8f) at ($(zero) + (1*\x,-1)$) {};
\node[sng] (9f) at ($(zero) + (0*\x,-1)$) {};

\coordinate (zero) at (2*\d-.5,0);

\draw[rounded corners,fill=lightgray!30,lightgray!30] ($ (zero) + (-.5,-\y-.4) $) rectangle ($(zero) + (1.1+ 3*\x, \y+.3)$);

\node[sng] (iff) at (zero) {};
\node[sng] (1ff) at ($(zero) + (1*\x,0)$) {};
\node[sng] (2ff) at ($(zero) + (2*\x,0)$) {};
\node[sn] (3ff) at ($(zero) + (3*\x,1)$) {};
\node[sn] (4ff) at ($(zero) + (3*\x,0)$) {};
\node[sn] (5ff) at ($(zero) + (3*\x,-1)$) {};
\node[sn] (6ff) at ($(zero) + (2*\x,1)$) {};
\node[sn] (7ff) at ($(zero) + (2*\x,-1)$) {};
\node[sn] (8ff) at ($(zero) + (1*\x,-1)$) {};
\node[sn] (9ff) at ($(zero) + (0*\x,-1)$) {};

\path
(0,.55) edge (i)
(3.6,5.05) edge (io)

(io) edge node[above] {\footnotesize$a$} (1o)
(1o) edge[bend left=20] node[above] {\footnotesize$\#$} (2o)
(2o) edge[bend left=20] node[below] {\footnotesize$a$} (1o)
(2o) edge[bend right =10] node[below,pos=.6] {\footnotesize$b$} (3o)
(2o) edge[bend left =10] node[above,pos=.6] {\footnotesize$b$} (5o)
(3o) edge[bend left=0] node[right] {\footnotesize$\#$} (4o)
(5o) edge[bend right=0] node[right] {\footnotesize$\#$} (4o)
(4o) edge[loop right] node[right] {\footnotesize$\#$} (4o)
(3o) edge[bend left=20] node[below,pos=.8] {\footnotesize$\#$} (6o)
(6o) edge[bend left=20] node[above] {\footnotesize$b$} (3o)
(5o) edge[bend right=0] node[above,pos=.7] {\footnotesize$\#$} (7o)
(7o) edge node[above] {\footnotesize$b$} (8o)
(8o) edge node[above] {\footnotesize$\#$} (9o)

(i) edge node[above] {\footnotesize$a$} (1)
(1) edge[bend left=20] node[above] {\footnotesize$\#$} (2)
(2) edge[bend left=20] node[below] {\footnotesize$a$} (1)
(2) edge node[below] {\footnotesize$b$} (3)
(2) edge node[above] {\footnotesize$b$} (5)
(3) edge[bend left=0] node[below,pos=.2] {\footnotesize$\#$} (4f)
(5) edge[bend right=0] node[above,pos=.2] {\footnotesize$\#$} (4f)
(3) edge[bend left=0] node[above] {\footnotesize$\#$} (6f)
(5) edge[bend right=0] node[below] {\footnotesize$\#$} (7f)

(3f) edge node[right] {\footnotesize$\#$} (4f)
(5f) edge node[right] {\footnotesize$\#$} (4f)
(4f) edge[loop right] node[right] {\footnotesize$\#$} (4f)
(3f) edge[bend left=0] node[below] {\footnotesize$\#$} (6f)
(5f) edge node[above] {\footnotesize$\#$} (7f)
(7f) edge[bend right=20] node[below] {\footnotesize$c$} (8ff)

(3ff) edge node[right] {\footnotesize$\#$} (4ff)
(5ff) edge node[right] {\footnotesize$\#$} (4ff)
(4ff) edge[loop right] node[right] {\footnotesize$\#$} (4ff)
(3ff) edge[bend left=20] node[below] {\footnotesize$\#$} (6ff)
(6ff) edge[bend left=20] node[above] {\footnotesize$c$} (3ff)
(5ff) edge node[above] {\footnotesize$\#$} (7ff)
(7ff) edge node[above] {\footnotesize$c$} (8ff)
(8ff) edge node[above] {\footnotesize$\#$} (9ff)
(9ff.south) edge[bend right=20] node[below] {\footnotesize$c$} (5ff.south)

;

\path[draw,-stealth, rounded corners] (6f.north) -- ($(6f.north) + (0,.3)$) -- node[below] {\footnotesize$ c$}  ($(3ff.north) + (0,.3)$) -- (3ff.north);

\path[draw,-stealth, rounded corners] (9o.south) --  ($(9o.south) + (0,-.3)$) -- node[below] {\footnotesize$b$}  ($(5o.south) + (0,-.3)$) -- (5o.south);

    \end{tikzpicture}  
    }
    \caption{The construction of $\aut$' (bottom, for the sake of readability, we only sketch the letter processed by a transition, not the stack operations, and unreachable states are in gray) from $\aut$ (top). We assume that $C$ contains all configurations with a white state.
    A run proceeds from the left copy to the middle copy if the current configuration of $\aut$ is in $C$. In the middle and right copy, every $b$-label on a transition is replaced by a $c$. All transitions in the left and middle component are rejecting, while the transitions of the right copy have the same colors as in $\aut$.} 
    \label{fig:construction}
\end{figure}

To this end, let 
\[w = (a\#)^n (b\#)^n (c\#)^n \#^\omega \in L_{abc}\] and define 
\[w_1 = (a\#)^n (b\#)^n \#^\omega \qquad \text{ 
and } \qquad
w_2 = (a\#)^n (b\#)^{2n} \#^\omega, \]
which are both in $L$. 
Thus, let $\rho_1$ and $\rho_2$ be the accepting runs of $\aut$ on $w_1$ and $w_2$ induced by the resolver~$\hstrat$.
A prefix~$\rho_1'$ of $\rho_1$ processing $w_1' = (a\#)^n (b\#)^{n-1}b$ is also a prefix of $\rho_2$ processing $w_1'$ (note that we have removed the last $\#$, as the resolver inducing the runs has access to the next letter to be processed). 
As $\rho_1$ is an accepting run of $\aut$ on $w_1 = w_1'\#^\omega$, the last configuration of $\rho_1'$ is in $C$. 

Hence, $\aut'$ can simulate the run prefix~$\rho_1'$ processing $w_1'$, switch the first flag and then continue to simulate the suffix of $\rho_2'$ obtained by removing $\rho_1'$, which processes $\#(c\#)^n \#^\omega$. 
This run is accepting, i.e., we have $w = w_1'\#(c\#)^n \#^\omega \in L(\aut')$.

Now, let $w \in L(\aut')$, i.e., there is an accepting run~$\rho'$ of $\aut'$ on $w$. 
By construction of $\aut'$, we can split $\rho'$ into a finite prefix~$\rho_p'$ before the first flag is switched and the corresponding infinite suffix~$\rho_s'$ starting with the switch.
Again, by construction, $\rho_p'$ is the simulation of a run~$\rho_p$ of $\aut$ that processes the same input and ends in a configuration in $C$, and no prefix of $\rho_p$ ends in $C$. 
Hence, we can conclude that both $\rho_p$ and $\rho_p'$ process $(a\#)^n (b\#)^{n-1}b$ for some $n >0$, as these are the minimal words leading to a configuration from which $\#^\omega$ can be accepted.

Now, consider the suffix~$\rho_s'$, which processes at least one $c$.
It also simulates a run suffix~$\rho_s$ of $\aut$ and $\ell(\rho_s)$ is obtained from $\ell(\rho_s')$ by replacing each $c$ by a $b$.
Furthermore, $\rho_s$ starts in the last configuration of $\rho_p$ and satisfies the acceptance condition, as $\rho_s'$ satisfies the acceptance condition.
Hence, $\rho_s$ processes $\#(b\#)^n\#^\omega$, as the concatenation of $\rho_p$ and $\rho_s$ is an accepting run.
Altogether, $\rho_p'$ processes $(a\#)^n (b\#)^{n-1}b$ and $\rho_s'$ processes $\#(c\#)^n\#^\omega$, i.e., $w = (a\#)^n (b\#)^n (c\#)^n\#^\omega$, which is in $L_{abc}$.

We conclude that $L(\aut')=L_{abc}$, which contradicts $L_{abc} \notin$ \cfl. 
 Therefore, $L$ separates \gfgcfl and \cfl.
\end{proof}

The proof of Theorem~\ref{thm_vpl} is based on another language separating \gfgcfl and \cfl. 
A third such language is the following, based on palindromes. 
Let $h\colon \set{0,1,\#}^*\rightarrow \set{0,1}^* $ be the homomorphism induced by $h(0) = 0$, $h(1)=1$, and $h(\#)=\epsilon$.
Define 
\[P = \set{ v\#^\omega \mid h(v)  = x \rev{x} \text{ for some } x \in \set{0,1}^*},\]
where $\rev{x}$ denotes the reversal of $x$. The proof that $P$ indeed separates \cfl and \gfgcfl can be found in the appendix.

\section{Good-for-games Pushdown Automata are Indeed Good for Games}
\label{sec_gfgaregfg}
In this section, we show that the winner of infinite-duration games with \gfgcfl winning conditions can be effectively determined. 
This result is best phrased in terms of Gale-Stewart games, abstract games without an arena~\cite{GaleStewart53}, as we are interested in the influence of the winning condition on the decidability of solving games.\footnote{Games in finite arenas can easily be encoded as Gale-Stewart games.} 

Formally, a Gale-Stewart game~$\gsgame(L)$ is given by an $\omega$-language~$L \subseteq (\SigmaI \times \SigmaO)^\omega$. 
It is played between Player~$1$ and Player~$2$ in rounds~$n = 0,1,2, \ldots$:
In each round, first Player~$1$ picks a letter~$a_1(n) \in \SigmaI$, then Player~$2$ picks a letter~$a_2(n) \in \SigmaO$.
After $\omega$ rounds, the players have constructed an outcome
\[w=
\binom{a_1(0)}{a_2(0)}
\binom{a_1(1)}{a_2(1)}
\binom{a_1(2)}{a_2(2)} \cdots \]
which is winning for Player~$2$ if it is in $ L$.
A strategy for Player~$2$ in $\gsgame(L)$ is a mapping~$\sigma \colon \SigmaI^+ \rightarrow \SigmaO$. 
The outcome~$w$ is consistent with $\sigma$, if $a_2(n) = \sigma(a_1(0) \cdots a_1(n))$ for all $n$. 
A strategy~$\sigma$ for Player~$2$ is winning if every outcome that is consistent with $\sigma$ is in $L$. 
Player~$2$ wins $\gsgame(L)$ if she has a winning strategy for $\gsgame(L)$.

\begin{propC}[\cite{DBLP:journals/tcs/Finkel01a,%
DBLP:journals/iandc/Walukiewicz01}]
\label{prop_games}
\hfill
\begin{enumerate}
    \item\label{prop_games_nondet} The following problem is undecidable: Given an $\omega$-PDA~$\aut$, does Player~$2$ win $\gsgame(L(\aut))$?
    
    \item\label{prop_games_det} The following problem is $\exptime$-complete: Given an $\omega$-DPDA~$\aut$, does Player~$2$ win $\gsgame(L(\aut))$?
    
\end{enumerate}
\end{propC}

Walukiewicz's decidability result~\cite{DBLP:journals/iandc/Walukiewicz01} is formulated for parity games on configuration graphs of pushdown automata.
However, a Gale-Stewart game with \dcfl winning condition can be reduced in polynomial time to a parity game on a configuration graph of a pushdown machine, and vice versa.
This construction crucially depends on the determinism of the automaton recognizing the winning condition, as witnessed by the undecidability result for winning conditions recognized by (possibly nondeterministic) $\omega$-PDA.

Our main result shows that decidability extends to games given by $\omega$-GFG-PDA, i.e., not all types of nondeterminism lead to undecidability.

\begin{thm}
\label{thm_gfggfg}
The following problem is $\exptime$-complete: Given an $\omega$-GFG-PDA~$\aut$, does Player~$2$ win $\gsgame(L(\aut))$?
\end{thm}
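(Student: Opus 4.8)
The plan is to prove both the lower and upper bound separately. The lower bound ($\exptime$-hardness) is immediate: since every $\omega$-DPDA is trivially an $\omega$-GFG-PDA (determinism implies good-for-gameness), the problem of deciding the winner of $\gsgame(L(\aut))$ for $\omega$-GFG-PDA is at least as hard as the corresponding problem for $\omega$-DPDA, which is $\exptime$-complete by Proposition~\ref{prop_games}(\ref{prop_games_det}). So the entire work lies in the $\exptime$ upper bound.

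For the upper bound, the key idea is to exploit good-for-gameness to reduce $\gsgame(L(\aut))$ to a pushdown parity game, exactly as one does for $\omega$-DPDA, but now letting one of the players resolve the nondeterminism of $\aut$ on the fly. Concretely, I would build an arena-based game whose positions combine the current configuration of $\aut$ with bookkeeping about whose turn it is, and in which the moves interleave three kinds of choices: Player~$1$'s choice of an input letter in $\SigmaI$, Player~$2$'s choice of an output letter in $\SigmaO$ (together these determine the next letter~$\binom{a_1(n)}{a_2(n)}$ of the outcome), and a choice of transition of $\aut$ that processes that letter (including any intervening $\epsilon$-transitions). The crucial design decision is \emph{who} resolves the transition choice: I would let Player~$2$, the player who wins when $\aut$ accepts, resolve the nondeterminism of $\aut$, and make the parity condition of the arena game inherit the coloring~$\col$ of $\aut$. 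The stack of the pushdown game is precisely the stack of $\aut$, so this is a parity game on the configuration graph of a pushdown system, which Walukiewicz's result~\cite{DBLP:journals/iandc/Walukiewicz01} solves in $\exptime$.

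The heart of the argument is the correctness of this reduction, and here is exactly where good-for-gameness is used. I would argue that Player~$2$ wins $\gsgame(L(\aut))$ if and only if she wins the arena-based pushdown parity game. For the ``only if'' direction, suppose Player~$2$ has a winning strategy~$\sigma$ in $\gsgame(L(\aut))$; by definition every consistent outcome lies in $L = L(\aut)$, so by composing~$\sigma$ with a resolver~$\hstrat$ for $\aut$ (which exists because $\aut$ is an $\omega$-GFG-PDA) Player~$2$ can simultaneously produce a winning output letter and an accepting-run transition in each round, yielding a winning strategy in the arena game. Here one must be careful to handle $\epsilon$-transitions and the fact, noted in Remark~\ref{remark_runs}, that the resolver induces a genuine run; I would also observe that, because the outcome is in $L(\aut)$ and $\hstrat$ is a resolver, the induced run is accepting, so its coloring satisfies the parity condition. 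For the ``if'' direction, suppose Player~$2$ wins the arena game with strategy~$\sigma'$; projecting away the transition component gives a strategy in $\gsgame(L(\aut))$, and the fact that $\sigma'$ wins the parity game means the transition sequence it produces is an accepting run of $\aut$ on the outcome, witnessing that the outcome is in $L(\aut)$.

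The main obstacle is the ``if'' direction, together with the subtleties of $\epsilon$-transitions. The danger is that allowing Player~$2$ to resolve $\aut$'s nondeterminism could, in principle, let her cheat: she might steer into an accepting run even when the outcome is \emph{not} in the language, which would break soundness. The reason this does not happen is simply that acceptance of the induced run is \emph{equivalent} to membership in $L(\aut)$ regardless of who chose the transitions, since $L(\aut)$ is defined as the set of words admitting \emph{some} accepting run; good-for-gameness is needed only for the converse (completeness) direction, to guarantee that an accepting run can actually be produced causally. I would spell this asymmetry out carefully. A further technical point to address is ensuring the game is well-defined and finitely branching despite $\epsilon$-transitions: I would either require $\aut$ to have no infinite $\epsilon$-chains on relevant inputs or encode an $\epsilon$-closure/progress condition into the arena so that each round processes exactly one genuine $\SigmaI \times \SigmaO$ letter, and I would invoke Remark~\ref{remark_runs} to rule out runs that stall in an infinite tail of $\epsilon$-transitions. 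Once correctness is established, the complexity bound follows immediately from Walukiewicz's $\exptime$ algorithm applied to an arena of size polynomial in $\aut$, matching the hardness bound and yielding $\exptime$-completeness.
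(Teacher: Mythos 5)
Your proposal is correct and takes essentially the same approach as the paper: both shift the nondeterminism of $\aut$ into Player~$2$'s moves, use a resolver for the completeness direction and the asymmetry that \emph{some} accepting run already witnesses membership in $L(\aut)$ for soundness, and conclude via Walukiewicz's $\exptime$ result, with the same care about $\epsilon$-transition stalling (handled by adapting the coloring) and the same immediate lower bound from the $\omega$-DPDA case. The only difference is packaging: the paper constructs an intermediate $\omega$-DPDA~$\autud$ and a Gale-Stewart game with a \dcfl winning condition, then invokes the known polynomial inter-reducibility with pushdown parity games, whereas you build the pushdown parity game on $\aut$'s configuration graph directly --- i.e., you compose the paper's two steps into one.
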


\begin{proof}
Given $\aut = (Q, \Sigma, \Gamma, q_\initmark, \Delta, \col)$ with $\Sigma = \SigmaI \times \SigmaO$ we construct an $\omega$-DPDA~$\autud$ such that Player~$2$ wins $\gsgame(L(\aut))$ if and only if she wins $\gsgame(L(\autud))$. 
This yields $\exptime$ membership, as determining the winner of Gale-Stewart games with \dcfl winning conditions is in $\exptime$ (see Proposition~\ref{prop_games}(\ref{prop_games_det})) and the size of $\autud$ is polynomial in the size of $\aut$. 
The matching lower bound is immediate, as determining the winner of games with \dcfl winning conditions is already $\exptime$-hard (again, see Proposition~\ref{prop_games}(\ref{prop_games_det})).

Intuitively, we construct an $\omega$-DPDA~$\autud$ that processes simultaneously both an input of $\aut$ and a run of $\aut$, and checks whether the run is indeed an accepting run of $\aut$ on the input. In the corresponding Gale-Stewart game with winning condition $L(\autud)$, Player~$2$ has to both choose a letter in $\SigmaO$ and transitions of $\aut$. In other words, we have moved the nondeterminism of $\aut$ into Player~$2$'s moves. Then a winning strategy for Player~$2$ in $\gsgame(L(\aut))$ can be combined with a resolver to form a winning strategy in $\gsgame(L(\autud))$ and vice versa.

Formally, we construct $\autud$ such that it recognizes all $\omega$-words~$m_0 m_1 m_2 \cdots$ over $\SigmaI  \times (\SigmaO \cup \Delta)$ where each \emph{block}~$m_j$ is of the form
\[
\binom{a_1(j)}{a_2(j)}
\binom{b_{j,0}}{\tau_{j,0}}
\cdots
\binom{b_{j,n_j-1}}{\tau_{j,n_j-1}}
\binom{b_{j,n_j}}{\tau_{j,n_j}}
\]
for some $n_j \ge 0$ satisfying the following conditions:
\begin{enumerate}
    
    \item The transitions~$\tau_{j,0}, \ldots, \tau_{j,n_j-1}$ are $\epsilon$-transitions, and $\tau_{j,n_j}$ is an $\binom{a_1(j)}{a_2(j)}$-transition.
    
    \item The sequence~$\tau_{0,0} \cdots \tau_{0,n_0} \tau_{1,0} \cdots \tau_{1,n_1}  \tau_{2,0} \cdots \tau_{2,n_2} \cdots $ of transitions induces an accepting run of $\aut$ on the $\omega$-word~$\binom{a_1(0)}{a_2(0)} \binom{a_1(1)}{a_2(1)} \binom{a_1(2)}{a_2(2)} \cdots $.
    Note that all the~$b_{j,j'}$ picked by Player~$1$ are ignored while Player~$2$ constructs the run, only the letters~$a_1(j)$ are relevant.

\end{enumerate}
If $w$ is of that form then the decomposition into blocks is unique.

An $\omega$-DPDA~$\autud$ recognizing this language can easily be constructed in polynomial time from $\aut$.
To this end, $\autud$ deterministically simulates the transitions given in a block on the letter from $\SigmaI \times \SigmaO$ at the beginning of the block.
If a transition is not applicable, then the run terminates and is therefore rejecting.
Some standard constructions are necessary to ensure that the input has the right format; in particular, we need to adapt the coloring to rule out that from some point onwards only $\epsilon$-transitions appear in the input.
It remains to show that Player~$2$ wins $\gsgame(L(\aut))$ if and only if she wins $\gsgame(L(\autud))$.

First, we construct a mapping~$\sigma \mapsto \sigma_d$ turning a winning strategy~$\sigma$ for Player~$2$ in $\gsgame(L(\aut))$ into a winning strategy~$\sigma_d$ for Player~$2$ in $\gsgame(L(\autud))$.
To this end, we fix a resolver~$\hstrat \colon \Delta^* \times \Sigma \rightarrow \Delta$ for $\aut$. 

Intuitively, the strategy~$\sigma_d$ alternates between simulating a move of $\sigma$ and then uses $\hstrat$ to construct a sequence of transitions that processes the letter determined by the move.
This sequence starts with a finite number of $\epsilon$-transitions followed by one transition processing the letter.

More formally, define $\sigma_d$ inductively starting with $\sigma_d(a) = \sigma(a)$ for $a \in \SigmaI$.
Now, let $v = a_1(0) \cdots a_1(n) \in \SigmaI^*$ with $n > 0$ be an input such that 
\[
a_2(j) = \sigma_d(a_1(0) \cdots a_1(j))
\]
is already defined for every $j <n$.
To define $\sigma_d(v)$ we consider two cases.

If $a_2(n-1) \in \SigmaO \cup \Delta $ is a non-$\epsilon$-transition, then we define~$\sigma_d(v) = \sigma( v') $, where $v'$ is obtained from $v$ by removing the letters at positions~$j$ with $a_2(j) \in \Delta$.
This simulates the next move of $\sigma$, as the transition~$a_2(n-1)$ has processed the last letter.
In the other case (i.e., if $a_2(n-1)$ is either a letter in $\SigmaO$ or an $\epsilon$-transition) define $\sigma_d(v) = \hstrat( \rho, \binom{a_1(j')}{a_2(j')} ) $ where
$\rho \in \Delta^*$ is obtained from $a_2(0) \cdots a_2(n-1)$ by removing the letters at positions~$j$ with $a_2(j) \in \SigmaO$ and where $j' < n$ is maximal with $a_2(j') \in \SigmaO$. 
This move continues the construction of a run infix that processes the last letter from $\SigmaI \times \SigmaO$, which appears at position~$j'$.

Now, let $w' \in (\SigmaI \times (\SigmaO \cup \Delta))^\omega$ be consistent with $\sigma_d$.
An induction shows that $w'$ is a sequence of blocks that encodes an outcome~$w$ over $\SigmaI \times \SigmaO$ that is consistent with $\sigma$, and a run~$\rho$ of $\aut$ on $w$ induced by $\hstrat$. 
As $\sigma$ is a winning strategy, $w$ is in $L(\aut)$, which implies that $\rho$ is accepting. 
Hence, $w'$ is in $L(\autud)$, i.e., $\sigma_d$ is indeed winning for Player~$2$ in $\gsgame(L(\autud))$.

Conversely, we construct a mapping~$\sigma \mapsto \sigma_{-d}$ turning a winning strategy~$\sigma$ for Player~$2$ in $\gsgame(L(\autud))$ into a winning strategy~$\sigma_{-d}$ for Player~$2$ in $\gsgame(L(\aut))$.
Intuitively, to define $\sigma_{-d}$, we simulate a play in $\gsgame(L(\aut))$ by a play in $\gsgame(L(\autud))$, and copy the choice of letters made by $\sigma$ while ignoring the moves building the run of $\aut$.

Fix some $\dummy \in \SigmaI$\label{page:dummyletter}, which we use as dummy input.
We inductively define for every input~$v \in \SigmaI^+$ for $\sigma_{-d}$ an input~$v_d \in \SigmaI^+$ for $\sigma$ and then define $\sigma_{-d}(v) = \sigma(v_d)$.
We begin by defining $a_d = a$ for every $a \in \SigmaI$.
Now, assume we have defined $v_d$ for some $v$. 
Let $n$ be minimal such that $\sigma(v_d \dummy^n) \in \SigmaO \cup \Delta$ is a non-$\epsilon$-transition. 
Then, we define $(va)_d = v_d\dummy^na$.
Intuitively, we extend $v_d$ by irrelevant inputs until $\sigma$ completes the run infix processing the last letter. 

Let $w \in (\SigmaI \times \SigmaO)^\omega$ be consistent with $\sigma_{-d}$. 
An induction shows that there is an $w' \in (\SigmaI \times (\SigmaO \cup \Delta))^\omega$ that is consistent with $\sigma$ that encodes $w$ and an accepting run of $\aut$ on $w$. 
Hence, $w \in L(\aut)$,  i.e., $\sigma_{-d}$ is indeed winning for Player~$2$ in $\gsgame(L(\aut))$.
\end{proof}

Recall that a game is determined if either of the players has a winning strategy.
Translating a winning strategy for Player~$1$ (defined as expected) for $\gsgame(L(\autud))$ into a winning strategy for him in $\gsgame(L(\aut))$ implies that $\gsgame(L(\aut))$ is determined, as $\gsgame(L(\autud))$ is determined.
Such a transformation can be obtained along the lines of the transformations presented above for Player~$2$ winning strategies.

As universality of $L \subseteq \Sigma^\omega$ is equivalent to Player~$2$ winning $\gsgame(\set{\binom{w}{\#^\omega} \mid w \in L})$ and as $\set{\binom{w}{\#^\omega} \mid w \in L}$ is in \gfgcfl if $L$ is in \gfgcfl, we obtain the following corollary of our main theorem. 

\begin{cor}
\label{corollary_universality}
The following problem is in $\exptime$: Given an $\omega$-GFG-PDA~$\aut$, is $L(\aut)$ universal?
\end{cor}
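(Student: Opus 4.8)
The plan is to reduce the universality of an $\omega$-GFG-PDA~$\aut$ to a Gale-Stewart game that we already know how to solve. The key observation is that a language $L \subseteq \Sigma^\omega$ is universal if and only if \emph{every} $\omega$-word belongs to $L$, and this can be phrased as a game in which Player~$1$ freely constructs an arbitrary input word letter by letter while Player~$2$ has nothing meaningful to contribute. Concretely, I would fix a fresh output alphabet $\{\#\}$ and consider the padded language $L' = \set{\binom{w}{\#^\omega} \mid w \in L(\aut)} \subseteq (\Sigma \times \set{\#})^\omega$. Since Player~$2$'s only available move in each round is to play $\#$, the set of outcomes consistent with any Player~$2$ strategy in $\gsgame(L')$ ranges over exactly $\set{\binom{w}{\#^\omega} \mid w \in \Sigma^\omega}$; hence Player~$2$ wins $\gsgame(L')$ if and only if $\binom{w}{\#^\omega} \in L'$ for every $w \in \Sigma^\omega$, i.e.\ if and only if $L(\aut) = \Sigma^\omega$. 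This is precisely the equivalence asserted in the paragraph preceding the corollary.

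The second ingredient is that $L'$ is again recognized by an $\omega$-GFG-PDA whenever $L(\aut)$ is. I would obtain a recognizer $\aut'$ for $L'$ by relabeling each $a$-transition of $\aut$ (for $a \in \Sigma$) as a $\binom{a}{\#}$-transition, leaving $\epsilon$-transitions untouched, and keeping the colouring unchanged. Every run of $\aut$ on $w$ corresponds bijectively to a run of $\aut'$ on $\binom{w}{\#^\omega}$ with the same colour sequence, so $L(\aut') = L'$ and acceptance is preserved. Good-for-gameness transfers along this relabeling as well: a resolver for $\aut$ induces a resolver for $\aut'$, since the component over $\set{\#}$ carries no information and the next $\Sigma$-letter of $w$ is exactly the next $\Sigma$-letter of the padded input. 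Thus $L'$ is in \gfgcfl.

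Combining these two points, universality of $L(\aut)$ is equivalent to Player~$2$ winning $\gsgame(L(\aut'))$ for an explicitly constructible $\omega$-GFG-PDA~$\aut'$ whose size is linear in that of $\aut$. By Theorem~\ref{thm_gfggfg}, determining the winner of $\gsgame(L(\aut'))$ is decidable in $\exptime$, and since the construction of $\aut'$ is polynomial, the overall procedure runs in $\exptime$. This establishes that universality of $\omega$-GFG-PDA is in $\exptime$.

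I do not expect a genuine obstacle here, as the argument is essentially a packaging of Theorem~\ref{thm_gfggfg} together with the standard ``universality as a one-player game'' reduction. The only point requiring a little care is verifying that the padding construction both preserves the recognized language \emph{and} preserves good-for-gameness; in particular one must confirm that the resolver for $\aut'$ still only depends on the run prefix and the next input letter, which is immediate because the $\#$-component is constant and therefore contributes no additional information to the resolver. Everything else is routine.
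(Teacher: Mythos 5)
Your proposal is correct and coincides with the paper's own argument: the paper likewise reduces universality of $L(\aut)$ to Player~$2$ winning $\gsgame(\set{\binom{w}{\#^\omega} \mid w \in L(\aut)})$, notes that this padded language remains in \gfgcfl, and invokes Theorem~\ref{thm_gfggfg}. You have merely spelled out the details (the transition relabeling and the transfer of the resolver) that the paper leaves implicit, and these checks are carried out correctly.
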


This contrasts with the universality problem for $\omega$-PDA, which is undecidable.
Emptiness of $\omega$-GFG-PDA is also decidable, as it is decidable for $\omega$-PDA, while equivalence of $\omega$-GFG-PDA is undecidable, as it is already undecidable for $\omega$-DPDA with Büchi or co-Büchi acceptance conditions~\cite{equivalenceBuchi}.

Also, let us mention that one can apply the reduction presented in the proof of Theorem~\ref{thm_gfggfg} also if $\aut$ is not known to be good-for-games.
If Player~$2$ wins $\gsgame(L(\autud))$, then she wins $\gsgame(L(\aut))$ as well. 
However, if Player~$2$ does not win $\gsgame(L(\autud))$, then she might or might not win $\gsgame(L(\aut))$, i.e., the reduction is sound, but not complete, if $\aut$ is not good-for-games.
The same holds true for Corollary~\ref{corollary_universality}.

While we only consider the realizability problem here, i.e., the problem of determining whether Player~$2$ wins the game, our proof of Theorem~\ref{thm_gfggfg} can be extended to the synthesis problem, i.e., the problem of computing a winning strategy for Player~$2$, if she wins the game.
Such a strategy can be finitely represented by a deterministic pushdown automaton with output (called pushdown transducers, or PDT) reading finite sequences over $\SigmaI$ and outputting a single letter from~$\SigmaO$. 
These are efficiently computable for Gale-Stewart games with \dcfl winning conditions~\cite{DBLP:journals/corr/abs-1006-1415,DBLP:journals/iandc/Walukiewicz01}.
Hence, one can compute a winning strategy for Player~$2$ in $\gsgame(L(\autud))$ and then apply the transformation described in the second part of the proof of Theorem~\ref{thm_gfggfg}, which is implementable by deterministic pushdown transducers. 

\begin{thm}
\label{thm:transducer}
Let $\aut$ be an $\omega$-GFG-PDA. 
If Player~$2$ wins $\gsgame(L(\aut))$, then she has a winning strategy that is implemented by a PDT. Furthermore, such a PDT can be computed in exponential time in $\size{\aut}$.
\end{thm}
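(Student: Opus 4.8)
The plan is to leverage the reduction already established in the proof of Theorem~\ref{thm_gfggfg}, which produces an $\omega$-DPDA~$\autud$ of size polynomial in $\size{\aut}$ such that Player~$2$ wins $\gsgame(L(\aut))$ if and only if she wins $\gsgame(L(\autud))$. Since $L(\autud)$ is in \dcfl, I would first invoke the known effective synthesis result for Gale-Stewart games with \dcfl winning conditions~\cite{DBLP:journals/corr/abs-1006-1415,DBLP:journals/iandc/Walukiewicz01}: whenever Player~$2$ wins such a game, she has a winning strategy implementable by a deterministic pushdown transducer, computable in exponential time. Applying this to $\autud$ yields a PDT~$T_d$ computing a winning strategy~$\sigma$ for Player~$2$ in $\gsgame(L(\autud))$, and because $\size{\autud}$ is polynomial in $\size{\aut}$, the time bound stays exponential in $\size{\aut}$.

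The remaining work is to transform $T_d$ into a PDT implementing a winning strategy for Player~$2$ in $\gsgame(L(\aut))$. Here I would reuse the mapping~$\sigma \mapsto \sigma_{-d}$ from the second half of the proof of Theorem~\ref{thm_gfggfg}, and the key observation is that this transformation is \emph{implementable} by a deterministic pushdown transducer. Recall that $\sigma_{-d}(v) = \sigma(v_d)$, where $v_d$ is obtained from $v$ by interleaving the actual $\SigmaI$-letters with runs of the dummy letter~$\dummy$, padding until $\sigma$ outputs a genuine non-$\epsilon$-transition that completes the current run infix. To realise this with a PDT, I would have the new transducer maintain the state and stack of $T_d$, and upon reading each new input letter~$a \in \SigmaI$, feed $T_d$ the dummy inputs~$\dummy^n a$ one by one, discarding the intermediate outputs (the letters of $\SigmaO$ and the $\epsilon$-transitions) and emitting only the $\SigmaO$-letter that $T_d$ produces when it completes the block. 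Since $T_d$ is a deterministic PDT and this wrapper performs only bounded bookkeeping between genuine input letters, the composition is again a deterministic PDT computable in polynomial time from $T_d$, preserving the overall exponential bound.

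The one point requiring care — and the step I expect to be the main obstacle — is verifying that the padding loop \emph{terminates}, i.e.\ that for every reachable input~$v_d$ there is indeed a minimal~$n$ with $\sigma(v_d\dummy^n) \in \SigmaO \cup \Delta$ a non-$\epsilon$-transition, so that the wrapper never gets stuck feeding dummies forever without producing an output. This was implicitly used in the proof of Theorem~\ref{thm_gfggfg} and follows from the format enforced on $L(\autud)$: a consistent play that never completes a block would contain, from some point on, only $\epsilon$-transitions, and the coloring of $\autud$ was designed precisely to make such plays losing for Player~$2$; since $\sigma$ is winning, it cannot produce such a play. I would spell this argument out to justify that the transducer is well-defined, and then the correctness of the resulting strategy is immediate from the correctness of $\sigma \mapsto \sigma_{-d}$ already established in Theorem~\ref{thm_gfggfg}.
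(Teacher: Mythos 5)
Your proposal is correct and follows essentially the same route as the paper's proof in the appendix: reduce to $\gsgame(L(\autud))$, invoke the effective synthesis result for \dcfl games (Proposition~\ref{prop:transducer}), and then show the transformation~$\sigma \mapsto \sigma_{-d}$ is implementable by a PDT via a wrapper that simulates the dummy inputs as $\epsilon$-transitions while storing and later emitting the $\SigmaO$-letter. Your explicit termination argument for the padding loop (a $\sigma$-consistent play that never completes a block is losing by the coloring of $\autud$, contradicting that $\sigma$ is winning) is exactly the justification the paper relies on implicitly, so nothing is missing.
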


The details of this straightforward, but slightly tedious, construction are presented in the appendix.

On the other hand, there are games with \gfgcfl winning condition that are won by Player~$1$, but not with a winning strategy implementable by pushdown transducers.
Hence, the situation between the players is asymmetric.
Again, the example proving this claim is presented in the appendix.

\section{Closure Properties}
\label{section_closure}
In Section~\ref{sec_gfg}, we have shown that \gfgcfl is a new subclass of $\omega$-contextfree languages.
Here, we study the closure properties of this class, which differ considerably from those of \dcfl and \cfl.

We say that a class~$\mathcal{L}$ of $\omega$-languages is closed under ($\epsilon$-free) homomorphisms if $\set{ f(w)  \mid w \in L } $ is in $ \mathcal{L}$
for every $L \in \mathcal{L}$ and every $f \colon \Sigma \rightarrow (\Sigma')^+$.
Here, we disallow $\epsilon$ in the image of $f$ to ensure that $f(w)$ is infinite.

\begin{thm}
\label{thm_closure}
\gfgcfl is not closed under union, intersection, complementation, set difference, nor homomorphism. 
\end{thm}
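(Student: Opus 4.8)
The plan is to establish each non-closure result by exhibiting concrete languages in \gfgcfl whose combination falls outside \gfgcfl, leveraging the separating languages already constructed in Section~\ref{sec_gfg}. The key observation driving everything is Theorem~\ref{thm_gfgvscfl}: the language $L = \set{ (a\#)^n (b\#)^n \#^\omega \mid n\ge 1} \cup \set{ (a\#)^n (b\#)^{2n} \#^\omega \mid n\ge 1}$ lies in \cfl but \emph{not} in \gfgcfl, while each of its two constituent pieces $L_= = \set{ (a\#)^n (b\#)^n \#^\omega \mid n\ge 1}$ and $L_{2} = \set{ (a\#)^n (b\#)^{2n} \#^\omega \mid n\ge 1}$ is in \dcfl (hence in \gfgcfl by Proposition~\ref{prop_inclusions}), since each is recognized by a deterministic automaton that merely counts. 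Since $L = L_= \cup L_{2}$, this single example immediately witnesses non-closure under \textbf{union}: two \gfgcfl (indeed \dcfl) languages whose union escapes \gfgcfl.

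For the remaining operations I would reduce to the union case or reuse the same pieces. For \textbf{intersection}, I would apply De Morgan indirectly: if \gfgcfl were closed under both intersection and complementation it would be closed under union, so at least one must fail; but to get intersection outright I would instead exhibit two \gfgcfl languages over a richer alphabet whose intersection encodes $L$. A clean route is to tag the two disjuncts so that an intersection is forced to verify both a ``$n$ versus $n$'' and a ``$n$ versus $2n$'' style constraint simultaneously, mirroring the classical fact that $\set{a^n b^n c^m} \cap \set{a^n b^m c^m} = \set{a^n b^n c^n}$ is not contextfree; since the ingredient languages here are \dcfl, their intersection would then not even be in \cfl, a fortiori not in \gfgcfl. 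For \textbf{set difference}, I would write $L_1 \setminus L_2 = L_1 \cap \overline{L_2}$ and choose $L_1, L_2 \in$ \dcfl so that the difference reproduces a non-\cfl language of the $a^n b^n c^n$ flavour, again using the two counting disjuncts.

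For \textbf{complementation}, the argument is more delicate and I expect it to be the main obstacle. One cannot simply reuse \cfl's non-closure, because \gfgcfl sits strictly between \dcfl and \cfl and \dcfl \emph{is} closed under complement. The plan is to find a language $K \in$ \gfgcfl whose complement $\overline{K}$ is provably not in \gfgcfl. The natural candidate is $\seplang$ from Lemma~\ref{lemma_Lisgfg}, or a variant of it: $\seplang$ is in \gfgcfl, and I would argue its complement requires coordinating a \emph{universal} (for-all-$i$) safety condition that cannot be resolved by history alone, so that any \gfgcfl automaton for $\overline{\seplang}$ would yield, via the resolver, a deterministic or near-deterministic decision procedure contradicting a pumping-style argument analogous to Lemma~\ref{lemma_Lisnotdcfl}. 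Alternatively, and more robustly, I would derive non-closure under complement from the combination of the decidability results: since universality of \gfgcfl is in \exptime\ (Corollary~\ref{corollary_universality}) while emptiness is decidable, closure under complement together with these would collapse some complexity or decidability barrier that is known to separate \gfgcfl from \dcfl; making this precise is exactly where the care is needed.

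Finally, \textbf{homomorphism} is handled directly by a projection argument: I would take a \gfgcfl language over an enriched alphabet carrying explicit ``hint'' letters that resolve the nondeterminism, and apply an $\epsilon$-free homomorphism that erases or merges those hints, producing exactly the non-\gfgcfl language $L$ (or $L_{abc}$-style witness). Because the hints make the pre-image deterministic while the image is provably outside \gfgcfl by Theorem~\ref{thm_gfgvscfl}, this shows non-closure under $\epsilon$-free homomorphism. Throughout, the recurring technical burden is verifying that each ingredient language is genuinely \dcfl (easy counting automata) and that each resulting language sits outside \gfgcfl, for which the two separation theorems of Section~\ref{sec_gfg} and the pumping argument of Lemma~\ref{lemma_Lisnotdcfl} supply all the needed machinery.
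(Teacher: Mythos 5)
Your union, intersection, and homomorphism cases match the paper's proof in substance. Union is literally identical. For intersection, the paper instantiates exactly the classical trick you allude to, padded to $\omega$-words: $L_1 = \set{a^n b^n a^* b^\omega \mid n\ge 1}$ and $L_2 = \set{a^* b^n a^n b^\omega \mid n\ge 1}$ are in \dcfl, and $L_1\cap L_2 = \set{a^n b^n a^n b^\omega \mid n\ge 1}$ is not even in \cfl. (Your opening De Morgan remark only yields that \emph{one} of intersection/complementation fails, so your direct construction is indeed the one to keep.) For homomorphism, the paper tags the two disjuncts with $1$ or $2$ in a second component and projects, just as you propose. Your set-difference route, writing $L_1 \cap L_2 = L_1 \setminus \overline{L_2}$ and using closure of \dcfl under complement, is a valid alternative to the paper's one-liner that complementation \emph{is} set difference with $\Sigma^\omega \in$ \dcfl; yours has the minor advantage of not depending on the complementation case.

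The genuine gap is complementation. You pick the right witness, $\seplang$, but your primary mechanism fails: you propose that a resolver for a hypothetical \gfgcfl automaton recognizing $\overline{\seplang}$ would yield ``a deterministic or near-deterministic decision procedure'' to contradict Lemma~\ref{lemma_Lisnotdcfl}. By Theorem~\ref{thm_dcflvsgfgcfl}, a GFG-PDA together with its resolver is strictly more powerful than any $\omega$-DPDA (and Section~\ref{section_resolver} shows resolvers may need more than pushdown power), so there is no determinization to feed into that lemma's argument. Your fallback via decidability/complexity is a non-argument: decidable emptiness and \exptime\ universality are perfectly compatible with closure under complement, and no known barrier would be violated. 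The paper's actual argument is both simpler and stronger: it shows $\overline{\seplang}$ is not even in \cfl, so no appeal to resolvers or good-for-gameness is needed. Since $w_{\overline{ss}} \in \overline{\seplang}$, any $\omega$-PDA for the complement has an \emph{accepting} run on it; one then picks, as in Lemma~\ref{lemma_Lisnotdcfl}, two steps with equal state and top stack symbol, but now with \emph{even} maximal color in between and strictly positive energy level in one component of the processed infix, and pumps. The pumped run remains accepting (nondeterminism: one accepting run suffices for acceptance) yet processes a word in $\seplang$, a contradiction. This dualization of the pumping argument from odd to even colors---keeping the pumped run accepting rather than rejecting---is precisely the step your sketch is missing.
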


\begin{proof}
\textbf{Union:} As \dcfl contains both $L_1 = \set{ (a\#)^n (b\#)^n \#^\omega \mid n\ge 1}$ and $L_2 = \set{ (a\#)^n (b\#)^{2n} \#^\omega \mid n\ge 1}$ whose union is not in \gfgcfl (Theorem~\ref{thm_gfgvscfl}), Proposition~\ref{prop_inclusions} implies that \gfgcfl is not closed under union. 

\textbf{Intersection:} As \dcfl contains $L_1 = \set{a^n b^n a^* b^\omega \mid n\ge 1}$ and $L_2 = \set{a^* b^n a^n b^\omega \mid n\ge 1}$ whose intersection~$L_1\cap L_2 = \set{a^n b^n a^n b^\omega \mid n\ge 1}$ is not even in \cfl~\cite[Proposition~1.3]{DBLP:journals/jcss/CohenG77a}, Proposition~\ref{prop_inclusions} implies that \gfgcfl is not closed under intersection.

\textbf{Complementation:} We show that the complement of the language~$\seplang \in $ \gfgcfl used in Section~\ref{sec_gfg} to separate \dcfl and \gfgcfl is not even in \cfl.
A word~$w$ is not in $\seplang $ if $\proj_i(w)$ does not have a safe suffix for both $i \in \set{1,2}$, which due to Remark~\ref{remark_safe} is equivalent to 
\[\liminf_{n \rightarrow \infty} \el(\proj_i(w(0) \cdots w(n))) = -\infty\] for both $i$. 

Towards a contradiction, assume there is an $\omega$-PDA~$\aut = (Q, \Sigma, \Gamma, q_\initmark, \Delta, \col)$ with $L(\aut) = \Sigma^\omega \setminus \seplang $. 
As in the proof of Lemma~\ref{lemma_Lisnotdcfl}, define $x_1 = \binom{\plus}{\zero} \binom{\plus}{\minus}$ and $x_2 = \binom{\zero}{\plus} \binom{\minus}{\plus}$.
Recall that every infix of length at least $3$ of a word built by concatenating copies of the $x_i$ has a strictly positive energy level in one component.

Again, we define
\[w_{\overline{ss}} =
x_1 \, 
(x_2)^3 \,
(x_1)^7 \,
(x_2)^{15} \,
(x_1)^{31} \,
(x_2)^{63} \,
\cdots
\]
which satisfies \[\el( \proj_1(  x_1 (x_2)^3 \cdots (x_2)^{2^{2j}-1} ) ) = -j \] for every $j > 1$ and \[\el( \proj_2(  x_1 (x_2)^3 \cdots (x_1)^{2^{2j-1}-1} ) ) = -j \] for every $j > 0$ (see Page~\pageref{bla}). 
Hence, $w_{\overline{ss}} \in \Sigma^\omega \setminus \seplang $, i.e., there is an accepting run~$\rho = c_0 \tau_0 c_1 \tau_1 c_2 \tau_2 \cdots$ of $\aut$ on $w$. 

Recall that a step of $\rho$ is a position~$n$ such that $\sh(c_n) \le \sh(c_{n+j})$ for all $j \ge 0$.
Every infinite run has infinitely many steps.
Hence, we can find two steps~$s < s'$ satisfying the following properties:

\begin{enumerate}
    
    \item There is a state~$q \in Q$ and a stack symbol~$X \in \Gammabot$ such that $c_s = (q, \gamma X)$ and $c_{s'} = (q, \gamma' X)$ for some $\gamma, \gamma'$, i.e., both configurations have the same state and topmost stack symbol.
    
    \item The maximal color labeling the sequence~$\tau_{s} \cdots \tau_{s'-1}$ of transitions leading from $c_s$ to $c_{s'}$ is even.

    \item The sequence~$\tau_{s} \cdots \tau_{s'-1}$  processes an infix~$v$ of $w$ with $\el(\proj_i(v)) > 0$, for some $i \in \set{1,2}$ (recall that every sufficiently long infix of a word constructed from the $x_i$ has positive energy level in one component).
    
\end{enumerate}

Consider the sequence~$\tau_0 \cdots \tau_{s-1}( \tau_s \cdots \tau_{s'-1} )^\omega $ of transitions.
Due to the first property, it induces a run~$\rho'$ of $\aut$, which is accepting due to the second property. 
Finally, due to the third property, $\rho'$ processes a word with suffix~$v^\omega$.
Such a word has a safe suffix in component~$i$, as $\el(\proj_i(v)) > 0$.

Hence, we have constructed a word~$w$ in $\seplang $ such that there is an accepting run of $\aut$ on $w$, i.e., we have derived a contradiction to $L(\aut) = \Sigma^\omega \setminus \seplang $.
As we have picked $\aut$ arbitrarily, we have shown that $\Sigma^\omega \setminus \seplang $ is not in \cfl.
Thus, due to Proposition~\ref{prop_inclusions} and Lemma~\ref{lemma_Lisgfg}, \gfgcfl is not closed under complementation.

\textbf{Set difference:} As $\Sigma^\omega$ is in \dcfl $\subseteq$ \gfgcfl for every alphabet~$\Sigma$, \gfgcfl cannot be closed under set difference, as complementation is set difference with $\Sigma^\omega$. 

\textbf{Homomorphism:} As \dcfl contains 
\begin{align*}
L ={} &{}
\left\{\,
\left[\,\binom{a}{1}\binom{\#}{1} \,\right]^n 
\left[\,\binom{b}{1}\binom{\#}{1}\,\right]^{n} 
\binom{\#}{1}^\omega \,\middle|\, n\ge 1 \right\} \cup\\
&{}\left\{ \,
\left[\,\binom{a}{2}\binom{\#}{2}\,\right]^n 
\left[\,\binom{b}{2}\binom{\#}{2}\,\right]^{2n} 
\binom{\#}{2}^\omega \,\middle|\, n\ge 1 \right\}
\end{align*}
whose projection (which is a homomorphism)
\[\proj_1(L) = 
\set{ (a\#)^n (b\#)^n \#^\omega \mid n\ge 1} \cup \set{ (a\#)^n (b\#)^{2n} \#^\omega \mid n\ge 1}
\]
is not in \gfgcfl (Theorem~\ref{thm_gfgvscfl}), Proposition~\ref{prop_inclusions} implies that \gfgcfl is not closed under homomorphisms. 
\end{proof}

As we only used languages in \dcfl $\subseteq$ \gfgcfl to witness the failure of closure under intersection, union, and set difference, we obtain the following corollary. 

\begin{cor}
\gfgcfl is not closed under union, intersection, and set difference with languages in \dcfl.
\end{cor}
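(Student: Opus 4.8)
The corollary follows immediately from the observation already made in the theorem's proof, so the plan is simply to extract and make explicit the witnesses that were used. The statement to prove is that \gfgcfl is not closed under union, intersection, and set difference \emph{with languages in \dcfl}; that is, for each of these operations one can find languages in the (smaller) class \dcfl whose combination already escapes \gfgcfl. This is a strengthening of the corresponding clauses of Theorem~\ref{thm_closure}, where one only knows the inputs are in \gfgcfl.

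The key point is that the failures of closure under union, intersection, and set difference established in the proof of Theorem~\ref{thm_closure} were all witnessed using languages drawn from \dcfl, not merely from \gfgcfl. First I would revisit the union case: there the witnesses were $L_1 = \set{ (a\#)^n (b\#)^n \#^\omega \mid n\ge 1}$ and $L_2 = \set{ (a\#)^n (b\#)^{2n} \#^\omega \mid n\ge 1}$, both explicitly noted to be in \dcfl, with $L_1 \cup L_2 \notin$ \gfgcfl by Theorem~\ref{thm_gfgvscfl}. Next, for intersection, the witnesses were $L_1 = \set{a^n b^n a^* b^\omega \mid n\ge 1}$ and $L_2 = \set{a^* b^n a^n b^\omega \mid n\ge 1}$, again both in \dcfl, whose intersection is not even in \cfl and hence not in \gfgcfl by Proposition~\ref{prop_inclusions}.

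For set difference I would argue as in the theorem but restricted to \dcfl inputs: one takes the \gfgcfl-but-not-\dcfl language and writes its complement as a set difference from $\Sigma^\omega$. The subtlety is that the complementation argument of Theorem~\ref{thm_closure} used $\seplang$, which lies in \gfgcfl $\setminus$ \dcfl, so that particular witness does not directly give a \dcfl-only instance. Instead, for set difference I would reuse the intersection witnesses via the identity $L_1 \cap L_2 = L_1 \setminus (L_1 \setminus L_2)$, or more directly observe that $L_1 \setminus L_2' \notin$ \cfl for suitable \dcfl languages obtained from the intersection example (since \dcfl is closed under complement, $L_1 \setminus L_2 = L_1 \cap \overline{L_2}$ with $\overline{L_2} \in$ \dcfl whenever $L_2 \in$ \dcfl).

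The main obstacle—really the only one—is to confirm that the set-difference clause can be witnessed \emph{entirely} within \dcfl, since \dcfl is not closed under intersection but \emph{is} closed under complementation~\cite[Theorem~3.2]{DBLP:journals/jcss/CohenG78}. I expect the cleanest route is: take the intersection witnesses $L_1, L_2 \in$ \dcfl from above; since \dcfl is closed under complement, $\overline{L_2} \in$ \dcfl; then $L_1 \setminus \overline{L_2} = L_1 \cap L_2 \notin$ \cfl, and by Proposition~\ref{prop_inclusions} this is not in \gfgcfl either. This exhibits a set difference of two \dcfl languages lying outside \gfgcfl, completing the corollary. Since all three operations are thereby witnessed by \dcfl inputs and the conclusions follow from Theorem~\ref{thm_gfgvscfl}, \cite[Proposition~1.3]{DBLP:journals/jcss/CohenG77a}, and Proposition~\ref{prop_inclusions}, no new constructions are required.
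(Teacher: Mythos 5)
Your proposal is correct, and for union and intersection it coincides with the paper's argument: the paper derives the corollary from Theorem~\ref{thm_closure} with the one-line observation that those clauses were already witnessed by languages in \dcfl. Where you genuinely diverge is the set-difference clause, and your instinct that this clause needs separate care is well placed: the paper's set-difference witness is $\Sigma^\omega \setminus \seplang$, i.e., complementation viewed as set difference with the \dcfl (indeed $\omega$-regular) language~$\Sigma^\omega$, where the \emph{subtracted} language~$\seplang$ lies in \gfgcfl but not in \dcfl. Your alternative — taking the intersection witnesses $L_1 = \set{a^n b^n a^* b^\omega \mid n\ge 1}$ and $L_2 = \set{a^* b^n a^n b^\omega \mid n\ge 1}$, using closure of \dcfl under complementation (asserted in Table~\ref{table:results:closure}) to get $\overline{L_2} \in$ \dcfl, and observing $L_1 \setminus \overline{L_2} = L_1 \cap L_2 \notin$ \cfl — is valid and in fact establishes something stronger: a set difference of \emph{two} \dcfl languages already falls outside \cfl, so the corollary holds under the most demanding reading (both operands in \dcfl), and in particular it covers the reading ``$L \setminus D$ with $L \in$ \gfgcfl and $D \in$ \dcfl'', which the paper's witness, with the \dcfl language as minuend, does not literally instantiate. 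What the paper's route buys instead is the complementation statement itself ($\Sigma^\omega \setminus \seplang \notin$ \cfl for a language genuinely in \gfgcfl${}\setminus{}$\dcfl), which is needed elsewhere (e.g., for Theorem~\ref{thm_vpl}); your route buys independence from $\seplang$ and from the complementation analysis, at the price of invoking complement-closure of \dcfl. One small slip: you cite~\cite[Theorem~3.2]{DBLP:journals/jcss/CohenG78} for that closure property, but in this paper that result is invoked for the strictness of \dcfl${}\subsetneq{}$\cfl; the complement-closure of \dcfl is recorded in Table~\ref{table:results:closure} (and is indeed due to Cohen and Gold, just not that theorem). This does not affect correctness.
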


Finally, using standard arguments one can show that closure under these operations with $\omega$-regular languages holds, as it does for \dcfl and \cfl.

\begin{thm}
\label{theorem_closurereg}
If $L \in $ \gfgcfl and $R$ is $\omega$-regular, then $L \cap R$, $L \cup R$, and $L \setminus R$ are in \gfgcfl as well.
\end{thm}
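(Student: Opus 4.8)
The plan is to prove closure of \gfgcfl under intersection, union, and set difference with an $\omega$-regular language $R$ via a product construction: given an $\omega$-GFG-PDA~$\aut$ recognizing $L$ and a deterministic parity automaton~$\daut_R = (P, \Sigma, p_I, \delta, \col_R)$ recognizing $R$ (or its complement, for set difference, which is harmless since $\omega$-regular languages are closed under complement), I would form the synchronized product~$\aut \times \daut_R$. Its states are pairs~$(q,p)$, its stack alphabet is that of $\aut$, and each $\Sigma$-transition simultaneously fires a transition of $\aut$ and advances $\daut_R$ on the same letter; each $\epsilon$-transition of $\aut$ fires while leaving the $\daut_R$-component fixed. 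Since $\daut_R$ is deterministic and finite-state, this product is again an $\omega$-PDA whose runs are in bijection with the runs of $\aut$, and the nondeterminism lives entirely in the $\aut$-component.

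The acceptance condition must combine the two parity conditions in the appropriate way for each operation. For intersection, a run should be accepting iff both $\aut$'s run and $\daut_R$'s run are accepting; this is a conjunction of two parity conditions, which is a Rabin-type condition but can be turned back into a single parity condition by a standard index-appending construction (interleaving the two color sequences via a latest-appearance-record-style bookkeeping in the finite state component, exactly the technique the paper already invokes to justify transition-based parity acceptance). For union I would dually use a disjunction of parity conditions, and for set difference~$L\setminus R = L \cap (\Sigma^\omega\setminus R)$ I would take the product with a deterministic parity automaton for the complement~$\Sigma^\omega\setminus R$ and again use the conjunction. In all three cases the resulting automaton's colors depend only on the pair of colors read, and the finite memory needed to realize the parity combination is absorbed into the state set~$Q\times P\times(\text{finite record})$.

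The key remaining step is to transfer the resolver. I would define a resolver~$\hstrat'$ for the product from the resolver~$\hstrat$ of $\aut$: on a run prefix of the product, project to the corresponding run prefix of $\aut$ (dropping the $\daut_R$-component, which is determined), apply $\hstrat$ to obtain the next transition of $\aut$, and lift it back to the unique product transition that fires it (the $\daut_R$-update being forced by determinism). For any word~$w$ in the product language, $w$ lies in $L$, so $\hstrat$ induces an accepting run of $\aut$ on $w$; the determined $\daut_R$-component then yields the unique $\daut_R$-run, and since membership in the product language was defined exactly by the chosen parity combination holding, the lifted run is accepting in the product. Thus $\hstrat'$ is a valid resolver and the product is an $\omega$-GFG-PDA.

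The main obstacle I expect is verifying good-for-gameness cleanly rather than the language-equivalence of the product, which is routine. The subtlety is that the resolver condition quantifies over words \emph{in the product language}, and I must check that $\hstrat'$'s induced run is accepting for every such word — including confirming that the $\daut_R$-component really is forced (so no genuine choice is delegated to it) and that the combined parity condition on the induced run evaluates correctly given that $\hstrat$ guarantees acceptance of the $\aut$-component. I would also need to handle the technical point, already flagged in Remark~\ref{remark_runs}, that $\hstrat'$ must induce a run on every input, which requires that every finite word extend to a word in the product language; this can be arranged by the same padding conventions used elsewhere, or argued directly since $\daut_R$ is complete and $\aut$'s resolver already induces runs on all inputs.
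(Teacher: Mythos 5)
Your overall route is the same as the paper's: a synchronized product of $\aut$ with a deterministic parity automaton for $R$, a detour through Muller conditions and the LAR construction to fold the two parity conditions into one (with the record absorbed into the finite states), and a resolver for the product obtained from the resolver~$\hstrat$ of $\aut$ by projecting away the deterministic components --- the paper likewise notes that this lifting is sound precisely because the automaton for $R$ and the LAR memory are deterministic. For intersection and set difference your verification is complete and correct.

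The gap is in the union case. Your central claim, ``for any word~$w$ in the product language, $w$ lies in $L$,'' is false there: a word $w \in R \setminus L$ belongs to $L \cup R$, but $\hstrat$ gives no guarantee whatsoever for it --- it need not induce an accepting run of the $\aut$-component, and worse, it need not induce a run at all (it may reach a configuration with no enabled transition processing the next letter, or fall into an infinite tail of $\epsilon$-transitions). If the lifted resolver produces no infinite run on such a $w$, it fails to be a resolver for the union product even though the deterministic $R$-component alone would satisfy the disjunctive condition. Your closing paragraph points at exactly the needed ingredient but misdiagnoses it: inducing runs on all inputs is \emph{not} a general property of resolvers --- Remark~\ref{remark_runs} derives it only under the hypothesis that every finite word extends to a word in $L(\aut)$, which need not hold --- so ``$\aut$'s resolver already induces runs on all inputs'' is unjustified; rather, this property is specifically what makes the union case go through and must be arranged. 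The repair is routine but has to be stated: complete $\aut$ by adding a fresh rejecting sink with odd-colored transitions covering every state, letter, and top-of-stack combination (this adds only rejecting runs, so $L(\aut)$ and good-for-gameness are unchanged), and modify $\hstrat$ to divert to the sink whenever it would otherwise get stuck or emit $\epsilon$-transitions forever. This modification is safe because if $\hstrat$ dies after processing some prefix, no word of $L$ extends that prefix (otherwise $\hstrat$ would already fail as a resolver on such a word), so diverting cannot destroy acceptance of any $w \in L$. With this patch, for $w \in R \setminus L$ the lifted run is infinite, its $R$-component is accepting by determinism, and the disjunctive condition accepts; your argument then fully matches the paper's, which is itself silent on this very point.
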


\begin{proof}
Let $L = L(\aut)$ for some $\omega$-GFG-PDA~$\aut$ and $R$ be $\omega$-reg\-u\-lar, i.e., $R = L(\auta)$ for some deterministic parity automaton~$\auta$ (see, e.g., \cite{GTW02} for definitions). 
Furthermore, let $\aut \times \auta $ be the product automaton of these two automata, which is again an $\omega$-PDA that simulates a run of $\aut$ and the unique run of $\auta$  simultaneously. 

Using a detour via the Muller acceptance condition and the LAR construction (see for example~\cite{GTW02}), one can turn $\aut \times \auta$ into automata $(\aut \times \auta)_\cap$, $(\aut \times \auta)_\cup$, and $(\aut \times \auta)_\backslash$ such that the following holds true:
\begin{itemize}
 
    \item A run of $(\aut \times \auta)_\cap$ is accepting if the simulated run of $\aut$ and the simulated run of $\auta$ are accepting.
  
   \item A run of $(\aut \times \auta)_\cup$ is accepting if either the simulated run of $\aut$ or the simulated run of $\auta$ is accepting.
    
    \item A run of $(\aut \times \auta)_\backslash$ is accepting if the simulated run of $\aut$ is accepting and the simulated run of $\auta$ is not accepting.
    
\end{itemize}
All three automata can be shown to be good-for-games, as only the nondeterminism of $\aut$ has to be resolved:
A resolver for $\aut$ can easily be turned into one for the three automata that just ignores the additional inputs stemming from taking the product (note that this crucially depends on $\auta$ and the LAR memory being deterministic).
\end{proof}

Note that $R \setminus L$ is not necessarily in \gfgcfl (not even in \cfl) if $R$ is $\omega$-regular and $L$ is in \gfgcfl.

\section{Deciding Good-for-gameness}
\label{section_decisionproblems}
In this section, we show that deciding good-for-gameness is, unfortunately, undecidable, both for automata and for languages.
These results contrast with good-for-gameness being decidable for parity automata~\cite{KS15}, even in polynomial time for B\"uchi and co-B\"uchi automata~\cite{KS15,BK18}, and every $\omega$-regular language being good-for-games, as deterministic parity automata recognize all $\omega$-regular languages.

To show these undecidability results, we introduce some additional notation. 
If $\aut$ is an $\omega$-PDA and $q$ one of its states, then we write $L(\aut,q)$ for the language accepted by the automaton obtained from $\aut$ by replacing its initial state with~$q$. 

\begin{thm}
The following problems are undecidable: 
\begin{enumerate}
    \item\label{thm_undec_1} Given an $\omega$-PDA~$\aut$, is $\aut$ good-for-games?
    \item\label{thm_undec_2} Given an $\omega$-PDA~$\aut$, is $L(\aut) \in $ \gfgcfl?
\end{enumerate}

\end{thm}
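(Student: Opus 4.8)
The two undecidability results should both follow from reductions from a known undecidable problem about $\omega$-PDA or $\omega$-DPDA. The most natural source is the undecidability of solving Gale-Stewart games with $\omega$-PDA winning conditions (Proposition~\ref{prop_games}(\ref{prop_games_nondet})), or, more likely for these statements, the undecidability of universality for $\omega$-PDA. Both statements assert undecidability of a property that is ``downward'' visible through good-for-gameness, so the plan is to engineer, from an arbitrary $\omega$-PDA, a new $\omega$-PDA whose good-for-gameness (resp. whose membership in \gfgcfl) is equivalent to a decidable-in-disguise property of the original automaton that we already know to be undecidable.

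For part~(\ref{thm_undec_1}), I would reduce from universality of $\omega$-PDA, which is undecidable (see Table~\ref{table:results:dec}, citing~\cite{DBLP:journals/jcss/CohenG77,DBLP:journals/jcss/CohenG77a,DBLP:journals/tcs/Finkel01a}). The idea is to take an arbitrary $\omega$-PDA~$\aut$ and build an automaton~$\autb$ that is good-for-games if and only if $\aut$ is universal (or some equivalent property). A clean way to achieve this is to make $\autb$ recognize a fixed language, forcing the language to carry no information, so that the only thing varying is the internal structure; good-for-gameness then hinges on whether the nondeterminism in $\autb$ can be resolved on-the-fly, and we arrange that this resolvability encodes universality of $\aut$. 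Concretely, I would use a construction where $\autb$ first reads a separating marker and then either simulates $\aut$ (where an accepting run must be guessed on-the-fly, which is only possible with foresight into the continuation unless $\aut$ is universal) or falls back to a trivially accepting branch. The design must ensure $L(\autb)$ is fixed (say $\Sigma^\omega$) regardless of $\aut$, so that a resolver exists exactly when $\aut$'s accepting behaviour can be committed to without lookahead.

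For part~(\ref{thm_undec_2}), the question is about the language rather than the automaton, so I would reduce from the same undecidable problem but arrange that the \emph{language} of the constructed automaton is in \gfgcfl iff the reduction source holds. Here I would exploit the separating languages already built in Section~\ref{sec_gfg}: the language~$L$ from the proof of Theorem~\ref{thm_gfgvscfl} is in \cfl but not in \gfgcfl, while many \dcfl languages are trivially in \gfgcfl. The strategy is to construct $\aut$ so that its language is a controlled mixture --- it equals a known-bad language (outside \gfgcfl) precisely when $\aut$ is, say, universal, and collapses to a good-for-games (indeed deterministic) language otherwise. Using $L(\aut,q)$ notation introduced just before the theorem suggests the construction branches on states, combining per-state sublanguages, and the undecidable condition controls which branch dominates the overall language.

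The main obstacle I anticipate is \emph{decoupling the language from the automaton's internal resolvability} in part~(\ref{thm_undec_1}): to reduce to good-for-gameness of the automaton (not of the language), one must keep $L(\autb)$ fixed while making the existence of a resolver genuinely depend on the undecidable property. This requires carefully arranging the nondeterministic choices so that an on-the-fly resolver must, in effect, decide universality of $\aut$ without lookahead, which is exactly what fails unless $\aut$ is universal. Getting the language to stay constant while the resolver's task varies --- and verifying that the constructed object really is an $\omega$-PDA with the claimed fixed language --- will be the delicate part; the second reduction for part~(\ref{thm_undec_2}) should then be a comparatively routine variation reusing the separating languages from Section~\ref{sec_gfg}.
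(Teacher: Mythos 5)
Your plan for part~(\ref{thm_undec_2}) matches the paper's proof in substance: the paper also reduces from universality of PDA over finite words, building an $\omega$-PDA whose language consists of all $w$ such that $\proj_1(w)$ lies in an infinitary version of $L(\faut)$ or $\proj_2(w)$ lies in the language $L$ of Theorem~\ref{thm_gfgvscfl} (which is in \cfl but not in \gfgcfl). If $\faut$ is universal the whole language is universal, hence trivially in \gfgcfl; if not, a witness $v\#^\omega$ outside the first-component language lets one hard-wire the first component with a deterministic counter and extract an $\omega$-GFG-PDA for the bad language, a contradiction. Your polarity is reversed (bad language when universal), which is immaterial, but note the one substantive step your sketch leaves implicit: fixing the first component must preserve good-for-gameness, which holds in the paper because the added counter is deterministic, so the resolver transfers unchanged.

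Part~(\ref{thm_undec_1}), however, has a genuine gap. Your construction --- a fixed universal language obtained by placing a trivially accepting fallback branch alongside a branch simulating $\aut$ --- is self-defeating. If the fallback branch alone accepts every word, a resolver can simply commit to it at the first opportunity, so $\autb$ is good-for-games regardless of $\aut$ and the reduction computes nothing. If instead the fallback does not cover all inputs, then on some inputs the resolver is forced to resolve the nondeterminism \emph{inside} the simulated copy of $\aut$, and universality of $\aut$ is of no help there: universal automata need not be good-for-games, as witnessed by the automaton of Figure~\ref{fig:introexa}, so neither direction of your intended equivalence goes through. This is precisely why the paper does not reduce from universality of nondeterministic automata for this part. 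It reduces instead from inclusion of DPDA over finite words, undecidable by Valiant~\cite{Val73}: from $\daut_1,\daut_2$ one builds infinitary \emph{deterministic} versions $\aut_1,\aut_2$ and an automaton whose only nondeterminism is a single initial choice, on the letter $\$$, between a state leading into $\aut_1$ and a state leading into $\aut_2$, the latter additionally offering a $\$$-labelled escape to an accepting sink. Because both branches are deterministic, all nondeterminism is concentrated in that one choice; if $L(\daut_1)\subseteq L(\daut_2)$ the choice of the $\aut_2$-side is always safe, and otherwise a witness $v\in L(\daut_1)\setminus L(\daut_2)$, played against the input $\$\$^\omega$, shows that no fixed first choice can be correct without lookahead. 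Any repair of your approach would have to eliminate the inner nondeterminism of the simulated automaton --- which is exactly what switching the source problem from PDA universality to DPDA inclusion accomplishes.
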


\begin{proof}
Both proofs proceed by reduction from an undecidable problem for PDA over finite words (see, e.g., \cite{DBLP:books/daglib/0016921}). 
Such an automaton has the same structure as an $\omega$-PDA, but the coloring~$\col$ is replaced by a set~$F$ of accepting states.
A finite run is accepting, if it ends in an accepting state.

We consider the following two problems:
\begin{itemize}
    \item DPDA inclusion: Given DPDA's $\daut_1$ and $\daut_2$, is $L(\daut_1) \subseteq L(\daut_2)$?
    
    \item PDA universality: Given a PDA~$\aut$ over $\Sigma$, is $L(\aut) = \Sigma^*$?
\end{itemize}

\ref{thm_undec_1}.) We reduce the inclusion problem for DPDA over finite words to deciding whether an $\omega$-PDA is good-for-ga\-mes. 
Since the inclusion problem is undecidable~\cite{Val73}, so is deciding whether an $\omega$-PDA is good-for-games.

Given DPDA~$\daut_1$ and $\daut_2$ over $\Sigma^*$, we first define infinitary versions of $\daut_1$ and $\daut_2$: Let $\aut_1$ and $\aut_2$ be $\omega$-DPDA over $(\Sigma\cup \set{\#})^\omega$, where $\#\notin \Sigma$, that are identical to $\daut_1$ and $\daut_2$ respectively, except with additional $\#$-transitions from states that are accepting in $\daut_1$ and $\daut_2$ to accepting sinks; all other states are made rejecting. Then $L(\aut_i)$ consists of words of the form $v\#w$ where $v\in L(\daut_i)$ and $w\in (\Sigma\cup \set{\#})^\omega$. 
We have $L(\aut_1) \subseteq L(\aut_2)$ if and only if $L(\daut_1) \subseteq L(\daut_2)$.

Consider $\aut$, an $\omega$-PDA over $\Sigma\cup \set{\#,\$}$ built as follows (see Figure~\ref{fig:intersectionreduction}): A fresh initial state~$q_I$ has $\$$-transitions to fresh states $q_1$ and $q_2$; from $q_1$ there is an $\epsilon$-transition to the initial state of $\aut_1$, and from $q_2$ there is a $\$$-transition to an accepting sink~$q_s$ and an $\epsilon$-transition to the initial state of $\aut_2$. None of these transitions manipulate the stack.  
We show that $\aut$ is good-for-games if and only if $L(\daut_1)\subseteq L(\daut_2)$.

\begin{figure}
\centering
    \begin{tikzpicture}[thick]

\draw[rounded corners,fill=lightgray] (-6,.8) rectangle (-3.1,-.8);
\draw[rounded corners,fill=lightgray] (6,-.8) rectangle (3.1,.8);

\node[state,minimum size =22] (i) at (0,0) {$q_\initmark$};
\node[state,minimum size =22] (s) at (2,-1.5) {$q_s$};
\node[state,minimum size =22] (q1) at (-2,0) {$q_1$};
\node[state,minimum size =22] (q2) at (2,0) {$q_2$};

\node[state,minimum size =22,fill=white] (i1) at (-4,0) {$q_\initmark^1$};
\node[state,minimum size =22,fill=white] (i2) at (4,0) {$q_\initmark^2$};

\path
(0,.8) edge (i)
(i) edge node[above] {$\$$} (q1)
(i) edge node[above] {$\$$} (q2)
(q1) edge node[above,near start] {$\epsilon$} (i1)
(q2) edge node[above,near start] {$\epsilon$} (i2)
(q2) edge node[right] {$\$ $} (s)
(s) edge[loop left] node[left]{$\Sigma\cup\set{\#,\$}$} ();

\node at (-5.5,0) {$\aut_1$};
\node at (5.5,0) {$\aut_2$};
    \end{tikzpicture}  
    \caption{The construction of $\aut$ from $\aut_1$ and $\aut_2$. All new transitions do not manipulate the stack, i.e., if $\aut_i$ is entered via the $\epsilon$-transition, then the initial configuration of $\aut_i$ is reached.}
    \label{fig:intersectionreduction}
\end{figure}
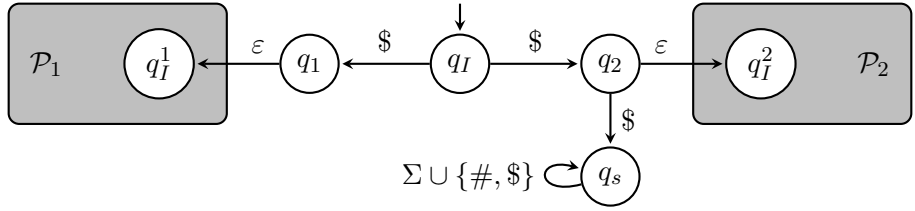

If $L(\daut_1)\subseteq L(\daut_2)$, then $L(\aut_1)\subseteq L(\aut_2)$ and $L(\aut,q_1)\subseteq L(\aut,q_2)$. Let $\hstrat$ be defined such that 
\begin{itemize}
    \item $\hstrat(\epsilon,\$)= (q_\initmark, \bot, \$, q_2, \bot)$,
    \item $\hstrat((q_\initmark, \bot, \$, q_2, \bot),\$)= (q_2, \bot, \$, q_s, \bot)$, and
    \item $\hstrat((q_\initmark, \bot, \$, q_2, \bot), a)= (q_2, \bot, \$, q_\initmark^2, \bot)$ where $a \neq \$$ and $q_\initmark^2$ denotes the initial state of $\aut_2$,
\end{itemize}
i.e., $\hstrat$ produces a run prefix that reaches either the accepting sink~$q_s$ or the initial state~$q_\initmark^2$ of $\aut_2$, which is deterministic. 
Hence, in both cases, there is no further nondeterminism to resolve.

We claim that $\hstrat$ is a resolver.
Let $w \in L(\aut)$, i.e., either $w = \$w'$ with 
\[
w' \in L(\aut, q_1) \cup L(\aut, q_2) \subseteq L(\aut, q_2),
\]
i.e., the second letter of $w$ is not equal to $\$$,
or $w = \$\$w'$ with $w' \in (\Sigma \cup \set{\#,\$})^\omega$.
In both cases, the run of $\aut$ on $w$ induced by $\hstrat$ is accepting. 

Conversely, if there is a word $v\in L(\daut_1)\setminus L(\daut_2)$, there is no resolver for $\aut$. 
Indeed, if a resolver $\hstrat$ chooses the transition~$(q_\initmark, \bot, \$, q_1,\bot)$ to $q_1$ to begin with, it is not a resolver since $\$^\omega\in L(\aut)$, but $\$^\omega \notin L(\aut,q_1)$. 
Finally, if $\hstrat$ chooses the transition~$(q_\initmark, \bot, \$, q_2,\bot)$ to $q_2$ to begin with, then $\hstrat$ is not a resolver either since $\$v\#^\omega \in L(\aut)$, but $v\#^\omega\notin L(\aut,q_2)$.

\ref{thm_undec_2}.) We proceed by a reduction from the universality problem for PDA over finite words, which is undecidable~\cite{DBLP:books/daglib/0016921}.
Namely, given a PDA~$\faut$ over $\Sigma$ we build an $\omega$-PDA~$\aut$ over $\Sigmahash \times \set{a,b,\#}$ with
$\Sigmahash =\Sigma \cup \set{ \# }$ such that $L(\aut)\in $ \gfgcfl if and only if $L(\faut)$ is universal.

First, note that 
\[ L_1 = \set{v\# w | v \in L(\faut) \text{ and } w\in \Sigmahash^\omega}\cup \Sigma^\omega\]
is universal if and only if $L(\faut)$ is universal.
 Furthermore, $\faut$ can easily be turned into an $\omega$-PDA recognizing $L_1$. 
 So, we can construct from $\faut$ an $\omega$-PDA~$\aut$ recognizing the language
 \[
L(\aut) =  \set{w \in (\Sigmahash\times  \set{a,b,\#})^\omega\mid \proj_1(w) \in L_1 \text{ or } \proj_2(w) \in L_2 },
\]
where $L_2$ is the $\omega$-language from the proof of Theorem~\ref{thm_gfgvscfl}, which is not in \gfgcfl but in \cfl.
We claim that $\aut$ has the desired property.
Trivially, if $L(\faut)$ is universal, then so is $L(\aut)$, which implies that $L(\aut)$ is in \gfgcfl.

Now assume that $L(\faut)$ is not universal which is witnessed by some word $v\#^\omega \notin L_1$, i.e, such that $v\notin L(\faut)$.
Towards a contradiction, assume that there is an $\omega$-GFG-PDA~$\raut$ with $L(\raut)=L(\aut)$, say with resolver~$\hstrat$.
We turn $\raut$ into another $\omega$-GFG-PDA~$\raut'$ recognizing the language
\[
\left\{ w \in \set{a,b,\#}^\omega \,\middle|\, \binom{v\#^\omega}{w} \in L(\raut) \right\},
\]
which yields the desired contradiction, as this language is equal to $L_2$, which is not in \gfgcfl.

To this end, we equip $\raut$ with a counter ranging over the positions of $v\#$ to simulate a run of $\raut$ on input~$\binom{v\#^\omega}{w}$ when given the input~$w$. 
As the counter behaves deterministically, no new nondeterminism is introduced when constructing $\raut'$ from $\raut$.
Hence, $\hstrat$ can be turned into a resolver for $\raut'$, which is therefore good-for-games, a contradiction.
\end{proof}

\section{Resource-bounded Resolvers}
\label{section_resolver}
As defined Section~\ref{sec_gfg}, a resolver is an arbitrary function, potentially even an uncomputable one, mapping a run prefix~$\rho$ and the next letter to be processed to a transition that is enabled in the last configuration of $\rho$.
Here, we study resource-bounded resolvers. 

For $\omega$-regular GFG automata, finite-state resolvers are always sufficient~\cite{HP06}.
Unsurprisingly, finite-state resolvers for $\omega$-GFG-PDA are too weak. In fact, we show that $\omega$-GFG-PDA with finite-state resolvers are determinizable.

On the other hand, it is tempting to conjecture that every $\omega$-GFG-PDA has a resolver that can be implemented by a pushdown automaton with output. 
However, we disprove this claim, i.e., even stronger devices are necessary in general.

Before we present our results, let us discuss one technical detail. 
Consider an $\omega$-GFG-PDA~$\aut$ and a resolver~$\hstrat$ for $\aut$.
As $\hstrat$ always has to return an enabled transition, which depends on the current state and the current top stack symbol of $\aut$, a machine computing $\hstrat$ would need to keep track of this information. 
However, this requires, in general, keeping track of the full stack of $\aut$.
This is beyond the capabilities of finite-state machines and blocks the stack of a pushdown machine implementing $\hstrat$ from performing other computations.
In order to better capture the complexity of resolving nondeterminism without involving the complexity of tracking enabled transitions, we equip the machine computing $\hstrat$ with an oracle providing the current top stack symbol of $\aut$.

We begin by considering finite-state resolvers. To this end, fix an $\omega$-GFG-PDA~$\aut = (Q, \Sigma, \Gamma, q_\initmark, \Delta, \col)$. A Moore machine~$\autm = (M, m_\initmark, \delta, \lambda)$ for $\aut$ consists of a finite set~$M$ of states with an initial state~$m_\initmark \in M$, a transition function~$\delta \colon M \times \Delta \rightarrow M$, and an output function~$\lambda \colon M \times \Sigma \times \Gamma_\bot \rightarrow \Delta $.
As usual, we extend the transition function to $\delta \colon \Delta^* \rightarrow M$ by defining $\delta(\epsilon) = m_\initmark$ and $\delta(\tau_0 \cdots \tau_{n-1}\tau_n) = \delta(\delta(\tau_0 \cdots \tau_{n-1}), \tau_n)$.

Let $\tau_0 \cdots \tau_n \in \Delta^*$ and $a \in \Sigma$. If the sequence~$\tau_0 \cdots \tau_n$ induces a run prefix of $\aut$, then let $X$ be the top stack symbol of its last configuration.
In this case, we define $f_\autm(\tau_0 \cdots \tau_n, a) = \lambda( \delta(\tau_0 \cdots \tau_n), a, X)$. 
If $\tau_0 \cdots \tau_n$ does not induce a run prefix, then we define $f_\autm(\tau_0 \cdots \tau_n, a)$ arbitrarily. 
We say that a resolver~$\hstrat$ for $\aut$ is a finite-state resolver, if there is some Moore machine~$\autm$ for $\aut$ such that $\hstrat = f_\autm$.
Note that the Moore machine $\autm$ on input~$\tau_0 \cdots \tau_n$ has to output a transition that is enabled in the last configuration of the run prefix induced by $\tau_0 \cdots \tau_n$.

\begin{exa}
Recall the $\omega$-GFG-PDA~$\aut$ in Example~\ref{example:pda} on Page~\pageref{example:pda} recognizing the language
\[
\set{a c^nd^n \#^\omega \mid n \ge 1} \cup \set{b c^nd^{2n} \#^\omega \mid n \ge 1}.
\]
It has a finite-state resolver, which is depicted in Figure~\ref{fig:fsresolver}.
Intuitively, it remembers the first letter of the input word processed, which suffices to resolve the nondeterminism at state~$q_1$ of $\aut$. 
This is the only nondeterministic choice to make, the rest of the run can be determined by keeping track of the current state of $\aut$ and using the information on the stack.
\begin{figure}
    \centering
    \begin{tikzpicture}[thick]
\def\y{2.75}
\def\x{2}
\def\e{1.5}
\tikzset{every state/.style = {minimum size =22}}
\node[state,fill=lightgray] (i) at (1*\x,0) {};
\node[state,fill=lightgray] (ua) at (2*\x, \y) {};
\node[state,fill=lightgray] (ub) at (2*\x, -\y) {};
\node[state,fill=lightgray] (ad) at (4.5*\x,\y) {};
\node[state,fill=lightgray] (bd1) at (4.5*\x,-\y) {};
\node[state,fill=lightgray] (bd2) at (7*\x,-\y) {};
\node[state,fill=lightgray] (acc) at (7*\x,\y) {};

\node[align=left,anchor=west,fill=lightgray!30] (li) at (1*\x+.5*\e,0) {$a,\bot \rightarrow (q_0, \bot, a, q_1, \bot A)$\\
$b,\bot \rightarrow (q_0, \bot, b, q_1, \bot B)$
};
\node[align=left,fill=lightgray!30] (lua) at (2*\x, \y+\e) {$c,X \rightarrow (q_1, X, c, q_1, XN)$\\ $d,N \rightarrow (q_1, N, d, q_2, \epsilon)$};
\node[align=left,fill=lightgray!30] (lub) at (2*\x, -\y-\e) {$c,X \rightarrow (q_1, X, c, q_1, XN)$\\$d,N\rightarrow (q_1, N, d, q_3, N)$};
\node[align=left,fill=lightgray!30] (lad) at (4.5*\x,\y+\e) {$d,N\rightarrow (q_2, N, d, q_2, \epsilon )$\\$\#,A \rightarrow (q_2, A, \#, q_4, \epsilon)$};
\node[align=left,fill=lightgray!30] (lbd1) at (4.5*\x,-\y-\e) {$d,N\rightarrow (q_3, N, d, q_5, \epsilon)$};
\node[align=left,fill=lightgray!30] (lbd2) at (7*\x,-\y-\e) {$d,N \rightarrow (q_5, N, d, q_3, N)$\\ $\#,B \rightarrow (q_5, B,\#,q_4, \epsilon)$};
\node[align=left,fill=lightgray!30] (lacc) at (7*\x,\y+\e) {$\#,\bot \rightarrow (q_4, \bot, \#, q_4, \bot )$};

 \path[-stealth]
 (1.25,0) edge (i)
 (i) edge[bend left] node[near start,anchor=west] {$(q_0, \bot, a, q_1, \bot A)$} (ua)
 (i) edge[bend right] node[near start,anchor=west] {$(q_0, \bot, b, q_1, \bot B)$} (ub)
 (ua) edge[loop below] node[right] {$(q_1, X, c, q_1, XN)$} ()
 (ub) edge[loop above] node[right] {$(q_1, X, c, q_1, XN)$} ()
 (ua) edge[] node[above] {$(q_1, N, d, q_2, \epsilon)$} (ad)
 (ub) edge[] node[below] {$(q_1, N, d, q_3, N)$} (bd1)
 (ad) edge[loop below] node[below] {$(q_2, N, d, q_2, \epsilon )$} ()
 (bd1) edge[bend left=10] node[above] {$(q_3, N, d, q_5, \epsilon)$} (bd2)
 (bd2) edge[bend left=10] node[below] {$(q_5, N, d, q_3, N)$} (bd1)
 (bd2) edge node[left] {$(q_5, B,\#,q_4, \epsilon)$} (acc)
 (ad) edge node[above] {$(q_2, A, \#, q_4, \epsilon)$} (acc)
 (acc) edge[out=240,in=210,looseness=8] node[left] {$(q_4, \bot, \#, q_4, \bot )$} (acc)
 ;
    \end{tikzpicture}
    \caption{A finite-state resolver for the $\omega$-GFG-PDA~$\aut$ from Example~\ref{example:pda}. Here, $X$ is again an arbitrary stack symbol of $\aut$ (not including $\bot$). All missing transitions lead to a rejecting sink (not depicted) and only relevant outputs are given (in light gray boxes near states), all others can be picked arbitrarily.}
    \label{fig:fsresolver}
\end{figure}
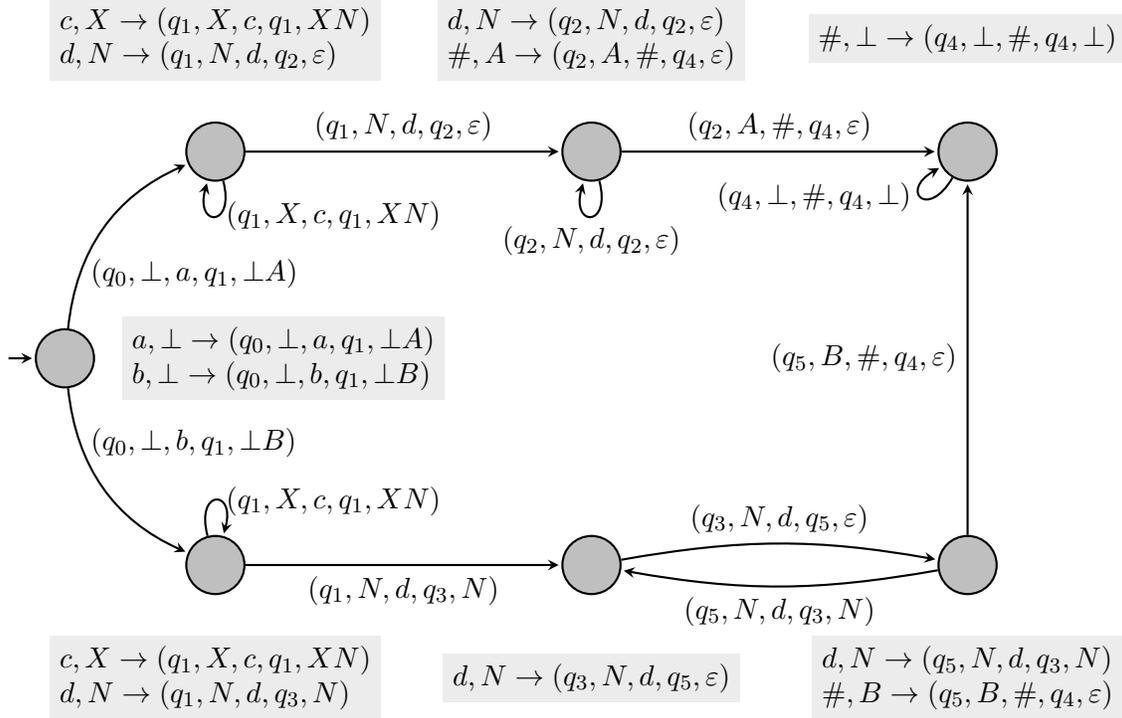
\end{exa}

Recall that the language recognized by the $\omega$-GFG-PDA in Example~\ref{example:pda} is in \dcfl. 
This is no coincidence.

\begin{thm}
If an $\omega$-GFG-PDA~$\aut$ has a finite-state resolver, then $L(\aut) \in $ \dcfl.
\end{thm}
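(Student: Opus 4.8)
The plan is to determinize $\aut$ by folding its finite-state resolver into the finite control. Let $\autm = (M, m_\initmark, \delta, \lambda)$ be a Moore machine witnessing that $\aut$ has a finite-state resolver $\hstrat = f_\autm$. The crucial observation is that, at any point of a run, the resolver's next choice $\lambda(m, a, X)$ depends only on the current Moore state $m$, the next input letter $a$, and the current top stack symbol $X$ --- all of which a deterministic automaton can maintain: the Moore state in its finite control (it is updated via $\delta(m,\tau)$ whenever the automaton performs a transition $\tau$, so the state of $\aut$ is tracked as well), and the top symbol on the stack. Hence I would build an $\omega$-DPDA $\daut$ whose state set is, in essence, $M \times Q$ augmented with a one-letter input buffer, whose stack alphabet is $\Gamma$, and which on input $w$ reproduces, step by step, the unique run that $\hstrat$ induces on $w$.

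The one genuine subtlety is the interaction between $\epsilon$-transitions and the one-letter lookahead: $\lambda(m,a,X)$ may return an $\epsilon$-transition while still depending on the lookahead letter $a$, whereas a DPDA taking an $\epsilon$-move cannot, by the syntactic definition of determinism, also inspect $a$. I would resolve this by buffering, letting $\daut$ alternate between a \emph{read} phase and a \emph{simulate} phase. In the read phase (empty buffer) it reads the next letter $a$ by a $\Sigma$-transition that merely stores $a$ in the state and leaves the stack untouched; in the simulate phase (buffer holding $a$) it repeatedly computes $\tau = \lambda(m, a, X)$ from the current Moore state $m$, the buffered $a$, and the top symbol $X$, performs $\tau$'s stack operation as an $\epsilon$-transition of $\daut$, and updates the Moore state to $\delta(m, \tau)$ and the $\aut$-state accordingly; when $\tau$ is the transition that finally consumes $a$, $\daut$ empties the buffer and returns to the read phase. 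This separation guarantees the determinism conditions: read-phase states carry only $\Sigma$-transitions (one per letter) and simulate-phase states carry only $\epsilon$-transitions (one per triple of Moore state, $\aut$-state, and top symbol), so the two kinds never coexist at a state. For the colouring I would let each simulate-phase transition inherit $\col(\tau)$ and assign colour $0$ to every buffering transition; as $0$ is minimal, the $\limsup$ of colours along $\daut$'s run equals that along the simulated run of $\aut$.

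Correctness then follows from the good-for-games property. By construction $\daut$ is deterministic and its unique run on any $w$ is exactly the run that $\hstrat$ induces on $w$, with buffering steps interleaved; moreover the $\Sigma$-transitions of $\daut$ are precisely its buffering steps, one per consumed input letter. If $w \in L(\aut)$, the resolver induces an accepting run that processes all of $w$, so $\daut$ takes infinitely many $\Sigma$-transitions, its run is a genuine run on $w$, and its $\limsup$ colour is even; hence $\daut$ accepts $w$. If $w \notin L(\aut)$, then $\aut$ has no accepting run on $w$: either the induced sequence is a full run on $w$, which is then necessarily rejecting and gives $\daut$ an odd $\limsup$, or it consumes only finitely many letters (getting stuck or falling into an infinite $\epsilon$-tail), in which case $\daut$ has only finitely many $\Sigma$-transitions and thus no run on $w$ at all. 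Either way $w \notin L(\daut)$, so $L(\daut) = L(\aut)$ and, $\daut$ being an $\omega$-DPDA, $L(\aut) \in \dcfl$.

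I expect the buffering bookkeeping --- and the verification that it genuinely meets the syntactic determinism conditions of an $\omega$-DPDA while faithfully reproducing the resolver's run --- to be the main obstacle; the correctness argument itself is a routine consequence of the defining property of the resolver.
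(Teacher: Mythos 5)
Your proposal is correct and follows essentially the same route as the paper's proof: the paper's $\omega$-DPDA has state set $(Q \times M) \cup (Q \times M \times \Sigma)$, where states in $Q \times M$ read and buffer the next letter (via a transition of neutral colour~$\min \col(\Delta)$, playing the same role as your colour~$0$) and states in $Q \times M \times \Sigma$ deterministically simulate the resolver's choices $\lambda(m,a,X)$ as $\epsilon$-transitions inheriting $\col(\tau)$ --- exactly your read/simulate alternation, including simulating the letter-consuming transition of $\aut$ as an $\epsilon$-move of $\daut$ since the letter was already consumed when buffered. Your handling of the rejecting cases (stuck run, or infinite $\epsilon$-tail yielding only finitely many $\Sigma$-transitions and hence no run on $w$) is in fact spelled out more explicitly than the paper's closing remark that ``an induction shows $L(\daut) = L(\aut)$'', but it is the same argument.
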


\begin{proof}
Given an $\omega$-GFG-PDA~$\aut = (Q, \Sigma, \Gamma, q_\initmark, \Delta, \col)$ and a Moore machine~$\autm = (M, m_\initmark, \delta, \lambda)$ for $\aut$ implementing a resolver, we construct an $\omega$-DPDA~$\daut$ with $L(\daut) = L(\aut)$. 
The set of states of $\daut$ is $(Q \times M) \cup (Q \times M \times \Sigma)$ with initial state~$(q_\initmark, m_\initmark)$, and its stack alphabet is $\Gamma$. 
Intuitively, $\daut$ simulates the run constructed by $\hstrat_\autm$. To this end, it has to keep track of a state of $\aut$ and a state of $\autm$. 
In state~$(q,m)$, $\daut$ deterministically reads the next input letter to be processed and stores it by moving to state~$(q,m,a)$. 
Now, this information, together with the top stack symbol~$X$ of the current configuration, suffices to determine the transition~$\lambda(m,a,X)$, which is then simulated by updating the state of $\aut$, the state of $\autm$, and the stack. 

Formally define the transitions and their colors as follows:
\begin{itemize}
    \item For every state~$(q, m) \in Q \times M$, every $X \in \Gamma_\bot$, and every $a \in \Sigma$, $\daut$ has the transition~$((q,m), X, a, (q,m,a), \gamma)$ of color~$\min \col(\Delta)$.
    \item For every state~$(q,m,a) \in Q \times M \times \Sigma$ and every $X \in \Gamma_\bot$ such that $\lambda(m,a,X)$ is an $\epsilon$-transition of the form~$\tau = (q, X, \epsilon, q', \gamma)$ for some $q'$ and some $\gamma$, $\daut$ has the transition~$((q,m,a), X, \epsilon, (q',\delta(m,\tau),a), \gamma)$ of color~$\col(\tau)$.
    \item For every state~$(q,m,a) \in Q \times M \times \Sigma$ and every $X \in \Gamma_\bot$ such that $\lambda(m,a,X)$ is an $a$-transition of the form~$\tau = (q, X, a, q', \gamma)$ for some $q'$ and some $\gamma$, $\daut$ has the transition~$((q,m,a), X, \epsilon, (q',\delta(m,\tau)), \gamma)$ of color~$\col(\tau)$.
\end{itemize}
By construction, there is exactly one enabled $a$-transition in every configuration with state in $Q \times M$ and there is at most one enabled $\epsilon$-transition in each configuration with state in $Q \times M \times \Sigma$. 
Hence, $\daut$ is indeed deterministic.
Finally, an induction shows that $L(\daut)$ is equal to $L(\aut)$ and therefore $L(\aut) \in$ \dcfl.
\end{proof}

Next, we consider resolvers implemented by pushdown automata with output.
Formally, a pushdown transducer (PDT) for an $\omega$-GFG-PDA~$\aut = (Q, \Sigma, \Gamma, q_\initmark, \Delta, \col)$ is a tuple~$\autt = (\mymark{Q}, \Delta, \mymark{\Gamma},\mymark{q_\initmark}, \mymark{\Delta}, \mymark{\lambda})$ such that $(\mymark{Q}, \Delta, \mymark{\Gamma}, \mymark{q_\initmark}, \mymark{\Delta})$ is a deterministic PDA (without coloring) processing transitions of $\aut$  and $\mymark{\lambda} \colon \mymark{Q} \times \Sigma \times \Gammabot \rightarrow \Delta $ is a (partial) output function.

A run of $\autt$ on an input~$v \in \Delta^*$ is a sequence~$c_0 \tau_0 c_1 \tau_1  \cdots \tau_{n}c_n $ such that $c_0 = (\mymark{q_\initmark}, \bot)$ and $c_i \trans{\tau_i} c_{i+1}$ for every $i < n$, $\ell(\tau_0 \cdots \tau_n) = v$, and $c_{n}$ has no enabled $\epsilon$-transitions.
Due to determinism and the requirement that the last configuration has no enabled $\epsilon$-transition, $\autt$ has at most one run on every input~$v$.

Let $\tau_0 \cdots \tau_n \in \Delta^*$ and $a \in \Sigma$. 
If the sequence~$\tau_0 \cdots \tau_n$ induces a run prefix of $\aut$, then let $X$ be the top stack symbol of its last configuration. 
In this case, we define $f_\autt(\tau_0 \cdots \tau_n, a) = \mymark{\lambda}(\mymark{q}, a, X)$, where $\mymark{q}$ is the state of the last configuration of the run of $\autt$ on $\tau_0 \cdots \tau_n$. 
If $\tau_0 \cdots \tau_n$ does not induce a run prefix of $\aut$, then we define $f_\autt(\tau_0 \cdots \tau_n, a)$ arbitrarily.
Again, we say that a resolver $\hstrat$ for $\aut$ is a pushdown resolver, if there is some PDT~$\autt$ for $\aut$ such that $\hstrat = f_\autt$. 

\begin{thm}
There is no pushdown resolver for the $\omega$-GFG-PDA~$\aut$ in Figure~\ref{fig:elaut}.
\end{thm}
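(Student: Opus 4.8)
The plan is to assume, for contradiction, that some PDT~$\autt$ computes a resolver~$\hstrat = f_\autt$ for $\aut$, and to derive a contradiction from the fact that a single pushdown stack cannot maintain the two independent counters that resolving the nondeterminism of $\aut$ provably requires. First I would recall what the resolver must achieve. Since $\aut$ has no $\epsilon$-transitions, the induced run reads $w$ letter by letter, and by Remark~\ref{remark_runs} it is defined on every $w$ and is accepting exactly when $w \in \seplang$. For the run to be accepting on $w \in \seplang$ the resolver must stop switching states after finitely many steps (each switch carries color~$1$) and, from the settling point on, track a component~$i$ with $\proj_i(w)$ having a safe suffix, so that no $\minus$ is ever read in component~$i$ with empty stack. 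Writing $E_i(n) = \el(\proj_i(w(0)\cdots w(n-1)))$, the resolver of Lemma~\ref{lemma_Lisgfg} tracks, at each prefix, the component whose running minimum of $E_i$ is attained earliest; equivalently, it abandons a component precisely when that component reaches a new low, i.e.\ a value strictly below all previous ones. Detecting these new-low events for component~$i$ amounts to maintaining its \emph{drawdown}~$d_i(n) = E_i(n) - \min_{m \le n} E_i(m)$, which evolves as the clamped counter $d_i(n+1) = \max(0, d_i(n) + \delta)$ with $\delta \in \set{-1,0,1}$, a new low occurring exactly when a $\minus$ is read while $d_i = 0$.

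The core observation is that a correct resolver must, in effect, keep both drawdowns $d_1$ and $d_2$ available simultaneously, and these are two independent unbounded counters. The stack of $\aut$ itself faithfully records only the drawdown of the \emph{currently tracked} component (its height equals this drawdown, and the oracle reveals when it is~$0$), so it supplies at most one of the two counters, while the PDT's own stack could supply the second. The obstruction is the interaction at a switch: when the resolver moves from tracking component~$1$ to component~$2$, the stack of $\aut$ is forced to continue with component~$2$'s moves starting from component~$1$'s current height, so it records neither $d_1$ (abandoned) nor $d_2$ (wrong baseline); meanwhile the PDT cannot read off and save the abandoned height~$d_1$, since $\aut$'s stack changes only as dictated by the input and cannot be freely inspected, and the PDT's single stack is already occupied by $d_2$. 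Hence after a switch the drawdown of the abandoned component is irrecoverably lost, and the PDT can no longer detect that component's subsequent new lows.

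To exploit this I would construct a family of words in $\seplang$ that force the ideal resolver to switch back and forth between the components arbitrarily many, but finitely many, times before one of them stabilizes. Concretely, let the two components alternately reach ever deeper new lows for $k$ rounds, so that the ideal resolver switches roughly $2k$ times, always chasing the least-recently-lowered component, and then let component~$2$ stay bounded below forever while component~$1$ keeps descending; such a word lies in $\seplang$ via component~$2$, and the ideal resolver eventually settles on component~$2$. Since the depths of the successive lows grow with $k$, a PDT processing these words must, by a pumping argument, reach two inputs of this shape on which it is in pump-equivalent configurations but which demand different settling behavior (one in which a given component has stabilized, one in which it has not). Transplanting the pumped segment then yields a single word in $\seplang$ on which the PDT-induced run either never stops switching or settles on the component that is unbounded below; in both cases the run is rejecting, contradicting that $\hstrat$ is a resolver.

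The main obstacle is to make this rigorous in the presence of the \emph{two} interacting stacks and, more importantly, the resolver's latitude: it is only required to be eventually correct, not to compute $d_1$ and $d_2$ exactly, so I must rule out clever approximate strategies rather than argue against one particular implementation. I expect the cleanest route is a potential argument tracking, along the PDT's computation, how much information about the abandoned component can survive in its finite control together with its bounded reusable stack segments, and then choosing the depths of the successive lows to exceed what this surviving information can encode; the difficulty lies in quantifying the ``surviving information'' uniformly across all switching schedules the PDT might adopt.
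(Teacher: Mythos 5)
There is a genuine gap, and you have named it yourself: the entire proposal hinges on the claim that any PDT resolver must ``in effect'' maintain both drawdown counters $d_1$ and $d_2$, and your closing paragraph concedes that you do not know how to quantify the ``surviving information'' against arbitrary approximate strategies. That concession is not a technicality --- it is the whole theorem. A resolver is only required to induce an eventually-settling accepting run on each $w \in \seplang$; it is under no obligation to compute, approximate, or even be sensitive to the drawdowns, so an argument of the form ``two unbounded counters are needed but only one stack is available'' has no formal starting point. Moreover, the pumping step as described would fail concretely. First, ``pump-equivalent configurations which demand different settling behavior'' is not well-defined: correctness of a resolver is a property of entire infinite runs, not of configurations, so two configurations never ``demand'' anything by themselves. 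Second, transplanting a pumped segment changes the input word, so membership in $\seplang$ must be re-verified; to force a run that ``never stops switching'' you must pump infinitely often, producing a word $uv^\omega$ --- but in your construction, where successive rounds drive \emph{both} components to ever deeper lows, a segment $v$ spanning a round typically has strictly negative net energy effect in both components, so $uv^\omega \notin \seplang$ and a rejecting run on it is no contradiction at all. Controlling $v$ so that it has nonnegative effect in some component while still witnessing the switching you need is exactly the unresolved tension in your plan.

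The paper's proof sidesteps all of this with an adaptive diagonalization that you are missing. Build $w$ \emph{online} against the PDT, using only the two letters $\binom{\minus}{\plus}$ and $\binom{\plus}{\minus}$: always play $\minus$ in the component currently tracked by the induced run. Then the stack of $\aut$ stays empty forever, so every transition of the induced run has color~$1$ and the run is rejecting; since $\hstrat$ is assumed to be a resolver, $w \notin \seplang$. Because $\aut$'s stack is permanently empty (top symbol always $\bot$, every position a step of the $\aut$-run), the future of $w$ is determined entirely by the PDT's configuration; at steps of the PDT's own run its behaviour does not depend on the stack below the top symbol, so by pigeonhole on (PDT state, PDT top stack symbol, state of $\aut$) two steps with identical futures exist, forcing $w = uv^\omega$ to be ultimately periodic. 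And here is the trick your construction lacks: over these two particular letters, \emph{every} nonempty period $v$ has nonnegative net energy effect in at least one component (the counts of the two letters are complementary across the components), so every ultimately periodic word is in $\seplang$ --- contradiction. No counting of resolver ``information,'' no family of words, and no transplantation is needed; the adversarial word is constructed in lockstep with the resolver, which is precisely what neutralizes the latitude of approximate strategies that blocks your route.
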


\begin{proof}
We will proceed by contradiction. Assume $\hstrat$ is a pushdown resolver, implemented by some PDT~$\autt$, for the $\omega$-GFG-PDA~$\aut$ in Figure~\ref{fig:elaut} on Page~\pageref{fig:elaut}, which recognises $\seplang$.
Due to Remark~\ref{remark_runs}, every finite word has a run prefix processing it that is consistent with $\hstrat$.
Furthermore, since $\aut$ does not have any $\epsilon$-transitions this run is unique. 

Define the word $w$ by setting $w(0)$ to $\binom{-}{+}$ and $w(n)$ for $n>0$ to $\binom{-}{+}$ if the run induced by $r$ over $w(0)\cdots w(n-1)$ ends in state~$1$, and to $\binom{+}{-}$ otherwise. Observe that the run induced by $r$ over this word always has an empty stack and therefore the $r$-consistent run $\rho$ over $w$ is rejecting as it only uses transitions with color~$1$. Since $r$ is a resolver, it follows that  $w\notin \seplang$. Furthermore, since the stack remains empty throughout $\rho$, every position is a step in $\rho$, and the top stack symbol is always $\bot$ (see Page~\pageref{stepsdef} for the definition of steps).

Now, consider the run $\rho'$  of $\autt$ generating the run~$\rho$ processing $w$ in $\aut$. Since $\autt$ is a PDT, there are infinitely many steps along $\rho'$. Observe that if two positions $s$ and $s'$ are steps in $\rho'$ and agree on the top stack symbol in $\rho'$ and the state in both $\rho$ and $\rho'$, then the words $w(s)w(s+1)w(s+2)\cdots$ and $w(s')w(s'+1)w(s'+2)\cdots$ are identical: this is because $w(n+1)$ only depends on the transition chosen by $\autt$ after generating the run over $w(0)\cdots w(n)$, 
but $\autt$ is deterministic and at steps its behaviour does not depend on the stack configurations below the top stack symbols. Since both the stack alphabet of $\autt$ and state spaces of both $\aut$ and $\autt$ are finite, by the pigeonhole principle, such positions $s$ and $s'$ must exist. It follows that $w$ is ultimately periodic: $w=u v^\omega$ for some nonempty~$v$. 

We now argue that $w\in \seplang$. If $v$ has more occurrences of $\binom{+}{-}$ than $\binom{-}{+}$, then the first component has a safe suffix; similarly, if $v$ has more occurrences of $\binom{-}{+}$, then the second component has a safe suffix; finally, if there are as many of both letters, then both components have a safe suffix. In all cases, $w\in \seplang$, a contradiction.
We conclude that there is no pushdown resolver for $\aut$.
\end{proof}

This is perhaps surprising given~Theorem~\ref{thm:transducer} (in the appendix), which states that Player~$2$ has winning strategies implementable by pushdown transducers in Gale-Stewart games with \gfgcfl winning conditions. In particular, a consequence of Theorem~\ref{thm:transducer} is that a universal $\omega$-GFG-PDA~$\aut$ has a pushdown resolver. This can be seen by applying the construction presented in Section~\ref{sec_gfgaregfg} and in Appendix~\ref{section_appendixgames} to construct a winning strategy in the Gale-Stewart game with winning condition~$\set{\binom{w}{\#^\omega} \mid w\in L(\aut)}$, which is essentially a pushdown resolver for $\aut$.

\section{The Parity Index Hierarchy}
\label{section_parityindex}
In this section we study how $\omega$-GFG-PDA with different acceptance conditions compare in their expressivity. 
In particular, we establish that the classes of $\omega$-GFG-PDA with a fixed number of colors form an infinite hierarchy:
for each $n > 1$, there is a language recognized by an $\omega$-GFG-PDA with $n$ colors, but not recognized by an $\omega$-GFG-PDA with $n-1$ colors. 
This is also the case for $\omega$-DPDA~\cite{loding2015simplification} while $\omega$-PDA with Büchi acceptance (i.e., with colors $1$ and $2$) suffice to recognize all \cfl~\cite{DBLP:journals/jcss/CohenG77}.

\begin{lem}
For each $n > 1$, there is a language $L_n$ recognized by an $\omega$-GFG-PDA with $n$ colors but not recognized by any  $\omega$-GFG-PDA with $n-1$ colors.
\end{lem}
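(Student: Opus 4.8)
The plan is to realise the hierarchy already at the level of $\omega$-regular languages and to show that neither the stack nor the nondeterminism of an $\omega$-GFG-PDA can lower the parity index below the intrinsic, topological one. For $n > 1$ I would take the canonical priority language
\[
L_n = \set{ w \in \set{1,\ldots,n}^\omega \mid \textstyle\limsup_{m \to \infty} w(m) \text{ is even} },
\]
read over the alphabet of colours itself. The upper bound is immediate: the one-state automaton that on reading the letter $c$ takes a transition of colour $c$ is a deterministic parity automaton with $n$ colours recognising $L_n$, and, being deterministic, it is trivially an $\omega$-GFG-PDA with $n$ colours (the stack plays no role). Note that $L_n$ has the prefix property required by Remark~\ref{remark_runs}: every finite word extends to a member of $L_n$ by appending $2^\omega$, so that any automaton recognising $L_n$ is covered by that remark.

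For the lower bound I would exploit that a resolver turns an $\omega$-GFG-PDA into a deterministic-\emph{looking} device, not through determinisation (the resolver may be uncomputable), but \emph{topologically}. Suppose some $\omega$-GFG-PDA $\aut$ with resolver $\hstrat$ and at most $n-1$ distinct colours satisfied $L(\aut) = L_n$. Since $L_n$ has the prefix property, Remark~\ref{remark_runs} guarantees that $\hstrat$ induces a unique run on \emph{every} input $w$, accepting exactly when $w \in L_n$. Writing $\chi(w)$ for the sequence of colours along this induced run, the causality of the resolver — each transition depends only on the run prefix built so far and one lookahead letter, and by Remark~\ref{remark_runs} no infinite tail of $\epsilon$-transitions occurs — makes the map $w \mapsto \chi(w)$ a continuous function into $C^\omega$, where $C$ is the colour set of $\aut$, with $\size{C} \le n-1$. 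Moreover $w \in L_n$ if and only if $\limsup \chi(w)$ is even. Thus $\chi$ is a continuous (Wadge) reduction of $L_n$ to the $\limsup$-even language over the $n-1$ colours of $C$.

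The contradiction then comes from the strictness of the Rabin--Mostowski (Wagner) index hierarchy~\cite{GTW02}: $L_n$ is complete for its level and, in particular, does not continuously reduce to any $\limsup$-parity language using fewer than $n$ colours. Hence no such $\aut$ exists, which is exactly the claim. The main obstacle, and the only point requiring real care, is the verification that $\chi$ is genuinely total and continuous; I would isolate this as a reduction lemma stating that every $\omega$-GFG-PDA of parity index $(i,j)$ recognises a language that continuously reduces to the $\limsup$-parity language of index $(i,j)$. The key insight making this possible is that the unbounded stack and the $\epsilon$-moves, however they complicate the run, are absorbed into the \emph{choice} of the continuous function $\chi$ and never enter the topological dependence of a colour on a finite input prefix. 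This is precisely what distinguishes $\omega$-GFG-PDA from general $\omega$-PDA, whose unrestricted nondeterminism destroys this bound and lets Büchi acceptance suffice for all of \cfl.
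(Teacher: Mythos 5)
Your proposal is correct, but it reaches the lower bound by a genuinely different route than the paper. The paper argues directly by diagonalization: assuming an $\omega$-GFG-PDA~$\aut$ with resolver~$\hstrat$ and at most $n-1$ colors (normalized w.l.o.g.\ to $\set{1,\ldots,n-1}$ or $\set{0,\ldots,n-2}$, so that incremented colors stay inside the alphabet~$D_n$), it builds the input~$w$ online so that each letter is $p+1$, where $p$ is the maximal color on the segment of the $\hstrat$-consistent run since the previous letter; the highest color occurring infinitely often in $w$ is then one more than that of the run, so the run is accepting exactly when $w \notin L_n$, a contradiction. You instead factor the argument through topology: the resolver, being causal and (by Remark~\ref{remark_runs}, whose hypothesis you correctly verify for $L_n$) total with no infinite $\epsilon$-tails, induces a continuous map~$\chi$ from inputs to color sequences with $w \in L_n$ iff $\limsup \chi(w)$ is even, i.e., a Wadge reduction of $L_n$ to the limsup-parity language over at most $n-1$ colors; strictness of the Rabin--Mostowski/Wagner hierarchy (equivalently, of the difference hierarchy over $\boldsymbol{\Pi}^0_2$) then yields the contradiction. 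This is sound — the continuity claim is the only delicate point and you identify it as such; one small step you leave implicit is that an arbitrary color set of size $n-1$ normalizes to index type $(1,n-1)$ or $(0,n-2)$, both strictly below the level for which $L_n$ is complete. What your route buys is a reusable and more general lemma: every $\omega$-GFG-PDA of a given parity index recognizes only languages in the corresponding boldface topological class, so \emph{any} language Wadge-complete at level~$n$ witnesses the hierarchy, and the same argument transfers to any history-deterministic model without infinite $\epsilon$-tails. What it costs is reliance on classical descriptive-set-theoretic facts as a black box — facts whose standard proof is essentially the Wadge-game diagonalization that the paper inlines in elementary, self-contained form.
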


\begin{proof}
For each $n > 1$, let $D_n = \set{1,\ldots, n}$  and let $L_n \subseteq D_n^\omega$ be the language of $\omega$-words satisfying the parity condition, that is, in which the highest color that occurs infinitely often is even.
$L_n$ is recognized by a deterministic parity automaton with $n$ colors that upon reading a letter~$p$ visits a state of color~$p$. We can view this parity automaton as an $\omega$-GFG-PDA  (even an $\omega$-DPDA) that does not use its stack.

We now show that $L_n$ is not recognized by an $\omega$-GFG-PDA with only $n-1$ colors. Towards a contradiction, assume there is such an $\omega$-GFG-PDA~$\aut$ with resolver~$\hstrat$. The intuition of the argument is that we can build an infinite word $w$ of which the sequence of colors is (roughly) the sequence of colors on its $r$-consistent run, incremented by one, leading to a contradiction. Since we allow $\varepsilon$-transitions, we use the maximal color over a sequence of transitions to choose the next letter of $w$, rather than just the last transition. Without loss of generality, we can assume that the set of colors  of $\aut$ is either $\set{1,\ldots, n-1}$ or $\set{0,\ldots, n-2}$.
Consider the word $w(0)w(1)w(2)\cdots$ over $D_n$ built as follows: 
\begin{itemize}
    \item $w(0)=1$.
    \item For $i\geq 1$, first define $\rho_j$ for $j\geq i$ to be the shortest $\hstrat$-consistent run processing $w(0)\cdots w(j-1)$. Then, $w(i)=p+1$ where $p$ is the highest color in the suffix of $\rho_i$ starting after $\rho_{i-1}$ (or at the beginning of the run if $i=1$). In other words, $p$ is the maximal color in the run~$\rho_i$ on $w(0)\cdots w(i-1)$ after the transition processing $w(i-2)$. 
    
\end{itemize}

We now argue that the word $w$ defined as above induces an accepting $\hstrat$-consistent run if and only if it is not in $L_n$. To this end, define $p$ to be the highest color occurring infinitely often along the $\hstrat$-consistent run processing $w$, which exists due to Remark~\ref{remark_runs}. By construction, the highest color occurring infinitely often on $w$ is $p+1$. Therefore, the run is accepting if and only if $w$ is not in $L_n$, contradicting that $\aut$ recognizes $L_n$.
\end{proof}

The languages~$L_n$ used in the above proof are $\omega$-regular, and recognized by deterministic parity automata with $n$ colors, i.e., with the same number of colors as an $\omega$-GFG-PDA for $L_n$. 
However, by combining the languages~$L_n$ with $\seplang$, it is not hard to find a family of languages that requires both $n$ colors and an $\omega$-GFG-PDA.
Indeed, the language
\[\left\{\binom{w_1}{w_2} \,\middle|\, w_1\in \seplang \text{ and } w_2\in L_n\right\}\]
is recognized by an $\omega$-GFG-PDA  with $n$ colors, obtained by taking the product of an $\omega$-GFG-PDA for $\seplang$ with a deterministic parity automaton for $L_n$. 
However, it is not recognized by an $\omega$-DPDA, due to $\seplang$, nor by an $\omega$-GFG-PDA  with fewer than $n$ colors, due to $L_n$.

\section{Comparison to Visibly Pushdown Languages}
\label{section_visibly}
In this section, we compare \gfgcfl to another important subclass of \cfl, the class of visibly pushdown languages~\cite{AlurM04}, for which solving games is decidable as well~\cite{DBLP:conf/fsttcs/LodingMS04}. 

Visibly pushdown automata are defined with respect to a partition~$\Sigmatilde = (\Sigmacall, \Sigmareturn, \Sigmaskip)$ of the input alphabet and have to satisfy the following conditions:
\begin{itemize}
    
    \item A letter~$a \in \Sigmacall$ is only processed by transitions of the form~$(q, X, a, q', XY)$ with $X\in \Gammabot$, i.e., some stack symbol~$Y$ is pushed onto the stack.
    
    \item A letter~$a \in \Sigmareturn$ is only processed by transitions of the form~$(q, X, a, q', \epsilon)$ with $X \neq \bot$ or $(q, \bot, a, q',\bot)$, i.e., the topmost stack symbol is removed, or if the stack is empty, it is left unchanged.
    
    \item A letter~$a \in \Sigmaskip$ is only processed by transitions of the form~$(q, X, a, q',X)$ with $X \in \Gammabot$, i.e., the stack is left unchanged.

    \item There are no $\epsilon$-transitions.
    
\end{itemize}
Intuitively, the stack height of the last configuration of a run processing some $v \in (\Sigmacall \cup \Sigmareturn \cup \Sigmaskip)^*$ only depends on $v$.

A language~$L\subseteq \Sigma^\omega$ is in \vpl if there is a partition~$\Sigmatilde$ of $\Sigma$ such that there is a nondeterministic~$\omega$-visibly pushdown automaton~$\aut$ recognizing $L$ with respect to $\Sigmatilde$. 

\begin{thm}
\label{thm_vpl}
\gfgcfl and \vpl are incomparable with respect to inclusion.
\end{thm}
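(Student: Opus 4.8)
The plan is to establish the two non-inclusions separately: first a language in \gfgcfl\ that is not in \vpl, and then a language in \vpl\ that is not in \gfgcfl.

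For \gfgcfl $\not\subseteq$ \vpl, I would reuse the witness $\seplang$ from Section~\ref{sec_gfg}. By Lemma~\ref{lemma_Lisgfg} we already have $\seplang \in$ \gfgcfl, so it suffices to show $\seplang \notin$ \vpl. The clean observation is that \vpl\ is closed under complementation (Table~\ref{table:results:closure}, \cite{AlurM04}) and is contained in \cfl, since every $\omega$-VPA is in particular an $\omega$-PDA. Hence, if $\seplang$ were in \vpl, its complement would be in \vpl\ as well, and therefore in \cfl. But the complementation part of Theorem~\ref{thm_closure} shows precisely that $\Sigma^\omega \setminus \seplang \notin$ \cfl. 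This contradiction yields $\seplang \notin$ \vpl, and thus the first non-inclusion. Note that this argument uses only class-level closure, so it is insensitive to the choice of partition quantified over in the definition of \vpl.

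For \vpl $\not\subseteq$ \gfgcfl, I would exhibit an $\omega$-language recognized by a nondeterministic $\omega$-VPA that is not in \gfgcfl. Since every $\omega$-regular language lies in \dcfl $\subseteq$ \gfgcfl, and since the visibility of the stack profile tends to make natural $\omega$-VPL deterministic (hence good-for-games), the witness must be a genuinely nondeterministic, non-$\omega$-regular $\omega$-VPL. A natural candidate is a language over a call/return alphabet whose acceptance hinges on an unbounded feature of the visible stack that cannot be certified online: for instance, the language of $w$ over $\set{c,r}$, with $c$ a call and $r$ a return, for which the stack height tends to infinity, equivalently for which infinitely many calls are never matched. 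This is recognised by a nondeterministic $\omega$-VPA that guesses an increasing sequence of ``surviving'' calls and uses a parity (B\"uchi-style) condition certifying that infinitely many of them are never popped; it is clearly not $\omega$-regular, as it requires unbounded counting. In particular it is a language separating \cfl\ and \gfgcfl, in the spirit of Theorem~\ref{thm_gfgvscfl}.

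The main obstacle is proving that this witness is not in \gfgcfl, i.e.\ that no $\omega$-GFG-PDA recognises it. The rerouting technique of Theorem~\ref{thm_gfgvscfl} is unavailable here, since it produces a non-contextfree language by tying two separately counted quantities together, which a visibly pushdown language cannot express. Instead, I would argue directly against an arbitrary resolver. Assuming a resolver~$\hstrat$ for a putative $\omega$-GFG-PDA, I would construct an adversarial input: whenever the $\hstrat$-induced run commits to the visible stack surviving above some level — so that it behaves as though no return below that level ever occurs — I feed returns driving the height back down, an extension that is still a prefix of some word in the language (after which the height can be pushed arbitrarily high again). Iterating this, the resolver is forced either to induce an accepting run on a word outside the language or to fail to accept a word inside it, contradicting that $\hstrat$ is a resolver (cf.\ Remark~\ref{remark_runs}). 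Making this diagonalisation precise — in particular pinpointing, via the finiteness of the control state together with a step analysis as on Page~\pageref{stepsdef}, the configurations at which the resolver has effectively committed — is the technical heart of the argument and the step I expect to be hardest.
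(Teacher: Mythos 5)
Your first non-inclusion is correct and is exactly the paper's argument: $\seplang \in$ \gfgcfl (Lemma~\ref{lemma_Lisgfg}), and if $\seplang$ were in \vpl, then by closure of \vpl under complementation~\cite{AlurM04} its complement would be in \vpl $\subseteq$ \cfl, contradicting the complementation part of Theorem~\ref{thm_closure}. No issues there.

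The second non-inclusion, however, has a genuine gap, and it sits precisely at the step you flag as the technical heart. First, a structural point: the paper's witness is $\repbddlang$, the words over $\set{\plus,\minus}$ whose value returns to some fixed level infinitely often; your witness (``stack height tends to infinity'', i.e.\ infinitely many unmatched calls) is its complement. Both are indeed in \vpl, but since \gfgcfl is \emph{not} closed under complementation (Theorem~\ref{thm_closure}), neither witness's non-membership in \gfgcfl follows from the other's, so you owe a full proof either way --- and your sketch does not supply one. The central device you propose, reacting ``whenever the $\hstrat$-induced run commits to the visible stack surviving above some level'', is not well defined: a run prefix of a \gfgpda{} never observably ``commits'' to anything about the future; acceptance is a parity condition on colors seen infinitely often, and the same finite run prefix is consistent both with the stack later being drained and with it surviving. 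Any working diagonalisation must instead be driven by the \emph{colors} of the resolver-induced run, and must cope with the interleaving of small even colors below larger odd ones (and vice versa); this is exactly what makes the paper's proof for $\repbddlang$ long, with an adaptive letter-by-letter construction of $w$, the decomposition into $\prefix{\rho_n}$ and $\suffix{\rho_n}$, and a four-case analysis showing that the run is accepting iff $w \notin \repbddlang$. Nothing playing that role appears in your sketch. Second, your appeal to the step analysis of Page~\pageref{stepsdef} misfires: steps concern the stack of the \emph{recognising} PDA~$\aut$, which is an arbitrary $\omega$-PDA, not a visibly one, so its stack height need not bear any relation to the value~$\val$ of the input word; you cannot ``pinpoint where the resolver has committed'' to a level of the input's visible stack profile by inspecting $\aut$'s stack. (Steps plus pigeonhole are how the paper handles the \emph{deterministic} case in Lemma~\ref{lemma_Lisnotdcfl} and pushdown \emph{resolvers} in Section~\ref{section_resolver}, where the pumped object is the run itself; against an arbitrary resolver function this tool gives you nothing, because a resolver's choices may depend on the entire run history and cannot be pumped.) So as it stands the second half is a plausible plan with the correct overall shape (adversarial construction against an arbitrary resolver, cf.\ Remark~\ref{remark_runs}) but is missing the actual mechanism; completing it would amount to redoing, for your complemented witness, an argument of the same kind and difficulty as the paper's proof that $\repbddlang \notin$ \gfgcfl.
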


\begin{proof}
The language~$\seplang \in$ \gfgcfl used in Section~\ref{sec_gfg} to separate \gfgcfl and \dcfl is not in \vpl, as \vpl $\subseteq$ \cfl is closed under complementation~\cite{AlurM04} while the complement of $\seplang$ is not in \cfl (Theorem~\ref{thm_closure}).
Thus, \gfgcfl is not included in \vpl.

For the other non-inclusion, let $\Sigma = \set{\plus, \minus }$ and define the value\footnote{Alur and Madhusudan used the term~\textit{stack height} instead of \textit{value}, but this is misleading here, since our automata are not necessarily visibly, i.e., the stack height of a run prefix on $v$ might differ from $\val(v)$.}~$\val(v) \in \nats$ of a finite word~$v \in \Sigma^*$ inductively as $\val(\epsilon) = 0$ as well as $\val(v\plus) = \val(v) +1$ and $\val(v\minus) = \max\set{0, \val(v)-1}$. This corresponds to the height of a stack after the sequence of push and pop operations described by $v$ when $\plus$ is interpreted as a push and $\minus$ as a pop. We consider the language of such sequences that visit some value infinitely often.
We show that
\begin{align*}
\repbddlang =  \{& w \in \Sigma^\omega \mid  \text{there is an } s\in\nats \text{ such that }\\
& \val(w(0) \cdots w(n)) = s \text{ for infinitely many $n$}\}
,    
\end{align*}
which is in \vpl~\cite{AlurM04}, is not in \gfgcfl. 

To this end, fix an $\omega$-GFG-PDA~$\aut$ with resolver~$\hstrat \colon \Delta^* \times\Sigma \rightarrow \Delta$. We will show that it does not recognize $\repbddlang$.
We assume without loss of generality that all $\epsilon$-transitions have color~$0$ while all $\Sigma$-transitions have a non\-zero color.
This can be achieved by adding a component to $\aut$'s states that accumulates the maximal color seen along a sequence of $\epsilon$-transitions until a $\Sigma$-transition is used.
As a consequence, a run of $\aut$ on some infinite input satisfies the acceptance condition if and only if the sequence of $\Sigma$-transitions satisfies the acceptance condition, i.e., the colors of $\epsilon$-transitions are irrelevant and will be ignored in the following.

Given a run prefix~$\rho \in \Delta^*$ and a letter~$a \in \Sigma$, let $\ext(\rho, a)$ be the unique extension of $\rho$ induced by $\hstrat$ when processing~$a$. 
Formally, we define
\[
\ext(\rho, a) = \begin{cases}
\rho\cdot\hstrat(\rho, a) & \text{if }\ell(\hstrat(\rho, a)) = a,\\
\ext(\rho \cdot \hstrat(\rho, a),a)&\text{if }\ell(\hstrat(\rho, a)) = \epsilon.
\end{cases}
\]
Due to Remark~\ref{remark_runs}, if $\rho$ is a run prefix that is consistent with $\hstrat$ then $\ext(\rho, a)$ is a finite extension of $\rho$.

We now build an $\omega$-word~$w$, letter by letter, based on how $\hstrat$ resolves nondeterminism on the prefix built so far. The intuition is that whenever the run built by $\hstrat$ sees an even color after a prefix~$v$, the value of prefixes extending $v$ remains above $\val(v)$ until a larger odd color is seen, and then returns to $\val(v)$ (unless a higher even color is seen in the meantime). Roughly, after an even color occurs in the run, the value of the word increases until a higher odd color occurs. Thus, if the maximal color that occurs infinitely often is even, i.e., the run is accepting, then the word will not be in the language.
On the other hand, when an odd color $p$ occurs, the value of the run decreases to just above the value of the word at the last even color higher than $p$. Hence if the maximal color that occurs infinitely often is odd, i.e., if the run is rejecting, then the value reaches the same level infinitely often, and the word is in the language.  The result is that either $w$ is in $\repbddlang$ but rejected by $\aut$, or $w$ is not in $\repbddlang$ but accepted by $\aut$.

Formally, we inductively define an infinite sequence of sequences~$\rho_n \in \Delta^*$, all ending in a $\Sigma$-transition.
To start, we define $\rho_0 = \ext(\ext(\epsilon, \plus),\plus)$.
To define $\rho_{n+1}$ we have to consider several cases.

First, assume $\rho_n$ ends with a $\plus$-transition.
Let $p$ be the last and $p'$ be the second-to-last nonzero color appearing in $\rho_n$ (this is well-defined as $\rho_0$ contains two colors). 
If $p$ is odd and $p \ge p'$, then we define
$\rho_{n+1} = \ext(\rho_n, \minus)$ (Case~1),
otherwise, $\rho_{n+1} = \ext(\rho_n, \plus)$ (Case~2).

Now, assume $\rho_n$ ends with a $\minus$-transition.
Let $\suffix{\rho_n}$ be the suffix of $\rho_n$ starting with the last $\plus$-transition (this is well-defined, as $\rho_0$ contains $\plus$-transitions).
Furthermore, let $\prefix{\rho_n}$ be the prefix of $\rho_n$ ending with the last transition having an even color that is at least as large as the maximal color labeling a transition in $\suffix{\rho_n}$.
See Figure~\ref{fig:caseminus} for an illustration: $\prefix{\rho_n}$ is the prefix ending in the last transition that has an even color~$p$ that is at least as large as the colors~$p_0,\ldots, p_j$, the colors occurring in $\suffix{\rho_n}$.
If there is no transition with such a color, then define $\prefix{\rho_n} = \epsilon$. 
Note that $\suffix{\rho_n}$ and $\prefix{\rho_n}$ might overlap and that $\prefix{\rho_n} = \rho_n$ is possible if the last transition of $\rho_n$ has an even color that is maximal among those in $\suffix{\rho_n}$.
If 
\[\val(\ell(\rho_n)) > \val(\ell(\prefix{\rho_n})) + 1,\] then we define
$\rho_{n+1} = \ext(\rho_n, \minus)$ (Case~3),
otherwise, $\rho_{n+1} = \ext(\rho_n, \plus)$ (Case~4).
Note that the even-numbered cases extend by a $\plus$, the odd-numbered cases by a $\minus$.
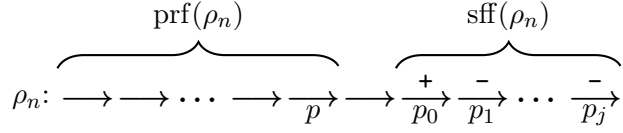
\begin{figure} 
\centering
    \begin{tikzpicture}[thick]
    
\node at (0,0) {$\rho_n$:};

\foreach \x in {1,2,4,5,...,8,10}{
\node at (.75*\x,0) {{\boldmath$\longrightarrow$}};
}
\node at (.75*3,0) {{\boldmath$\cdots$}};
\node at (.75*9,0) {{\boldmath$\cdots$}};

\foreach \x in {8,10}{
\node at ({.75*\x-.00},.24) {{\boldmath$\minus$}};
}
\node at ({.75*7-.00},.24) {{\boldmath$\plus$}};

\node at ({.75*5-.00},-.2) {{$p$}};

\foreach \x in {0,1} {
\node at ({.75*\x+7*.75},-.2) {{$p_{\x}$}};
}
\node at ({.75*3+7*.75},-.2) {{$p_{j}$}};

\draw [decorate,decoration={brace,amplitude=10pt},xshift=10pt,yshift=12pt]
(.75*6,0) -- (.75*10,0) node [black,midway,yshift=0.7cm] 
{$\suffix{\rho_n}$};

\draw [decorate,decoration={brace,amplitude=10pt},xshift=10pt,yshift=12pt]
(.75*0,0) -- (.75*5,0) node [black,midway,yshift=0.7cm] 
{$\prefix{\rho_n}$};

    \end{tikzpicture}  
    \caption{Illustration of the definition of $\rho_{n+1}$ in the case where $\rho_n$ ends with a $\minus$-transition. Each arrow depicts a transition of $\rho_n$, its color is depicted below the arrow, the letter it processes above (some transitions in $\suffix{\rho_n}$ may be $\epsilon$-transitions, but only the first one is a $\plus$-transition).}
    \label{fig:caseminus}
\end{figure}

Now, let $\rho \in \Delta^\omega$ be the unique infinite sequence such that each $\rho_n$ is a prefix of $\rho$, and let $w = \ell(\rho)$, which is an infinite word. 
Then, $\rho$ induces a run of $\aut$ on $w$ that is consistent with $\hstrat$. 

First, assume $w$ is of the form~$v\plus^\omega$. 
Then, from some point onwards, we only use Case~2 to extend $\rho_n$ to $\rho_{n+1}$, i.e., if the last color of $\rho_n$ is odd, then the second-to-last color is strictly larger. 
This implies that the maximal color occurring infinitely often in $\rho$ is even. 
Thus, the run $\rho$ is accepting although $w$ is not in $\repbddlang$.

Now, assume $w$ is of the form~$v\minus^\omega$, i.e., 
$\val(\ell(\rho_n)) = 0$
for almost all $n$.
Then, from some point onwards, we only use Case~3 to extend $\rho_n$ to $\rho_{n+1}$, i.e., we have
\[\val(\ell(\rho_n)) > \val(\ell(\prefix{\rho_n})) + 1\]
for almost all $n$.
Combining both equations yields a contradiction, as $\val(\ell(\prefix{\rho_n})) + 1$ is positive.
Thus, $w$ cannot have the form~$v\minus^\omega$.

As a last case, assume $w$ contains infinitely many $\plus$ and infinitely many $\minus$. 
First, we study the case where the maximal color occurring infinitely often in $\rho$, call it $p$, is odd.
Let $n_0$ be such that the suffix of $\rho$ obtained from removing $\rho_{n_0}$ only contains colors that occur infinitely often and $\rho_{n_0}$ contains at least one $p$ that is not followed by a larger even color. 
Furthermore, let $\rho^*$ be the longest prefix of $\rho$ ending in a transition with an even color that is larger than $p$ (which by construction is a prefix of $\rho_{n_0}$).
We define $\rho^* = \epsilon$ if there is no such color. 
Let $b = \val(\rho^*)$.
We claim that there are infinitely many $n$ such that $\val(\ell(\rho_n)) \le b +1$.
Then, $\rho$ is rejecting while $w$ is in $\repbddlang$ due to the pigeonhole principle.

To this end, let $n' > n_0$ such that $\rho_{n'}$ ends with a transition~$\tau$ of color~$p$.
As there are infinitely many such $n'$, it suffices to show that there is a $n \ge n'$ such that $\val(\ell(\rho_n)) \le b +1$.
First, assume $\ell(\tau) = \plus$. 
By construction, the last nonzero color occurring before the final $p$ is not larger than $p$.
Hence, we have $\rho_{n'+1} = \ext(\rho_{n'}, \minus)$ due to Case~1.
Now, we either already have $\val(\rho_{n'+1}) \le b +1 $ or we apply Case~3 repeatedly until we have produced some $\rho_{n}$ with $\val(\rho_{n}) = b +1$.
The reason why Case~3 is always applicable is that the suffix~$\suffix{\rho_{n''}}$ for $n'+1 \le n'' \le n$ always contains the last transition of $\rho_{n'}$, with color~$p$.
This in turn implies that the prefix~$\prefix{\rho_{n''}}$ is equal to $\rho^*$.

The case for $\ell(\tau) = \minus$ is similar:
either we already have $\val(\rho_{n'}) \le b  +1$ or we apply Case~3 repeatedly until we have produced some $\rho_{n}$ with $\val(\rho_{n}) = b  +1$.

Finally, we consider the case where~$p$, the maximal color occurring infinitely often in $\rho$, is even.
We show that there are infinitely many $n$ such that $\val(\ell(\rho_{n'} )) > \val(\ell(\rho_n))$ for every $n' > n$. 
This implies that for every $s$ there are only finitely many $n$ such that $\val(\ell(\rho_{n})) = s$. 
As the prefixes of $w$ are of the form~$\ell(\rho_{n})$ for $n \in \nats$, we obtain that $w$ is not in $\repbddlang$, even though the run of $\aut$ on $w$ induced by $\rho$ is accepting.

Let $n_0$ be such that the suffix of $\rho$ obtained from removing $\rho_{n_0}$ only contains colors that occur infinitely often and $\rho_{n_0}$ contains a suffix~$\rho_s$ starting with a transition of color~$p$ such that $\rho_s$ does not contain a larger color than $p$, but does contain a $\plus$-transition.
By definition, the resulting suffix of $\rho$ has infinitely many transitions of color~$p$, all of which mark the end of some~$\rho_n$ with $n \ge n_0$.
We show that each such $n$ has the desired property.

By the choice of $n_0$, the last transition of $\rho_{n+1}$ is a $\plus$-tran\-si\-tion, no matter whether the last transition of $\rho_n$ is a $\plus$-transition or a $\minus$-transition.
If it is a $\plus$-transition, then $\rho_{n+1}$ is obtained by applying Case~2 to $\rho_n$, as the color of the last transition of $\rho_n$ is even.
On the other hand, if the last transition of $\rho_n$ is a $\minus$-transition, then $\rho_{n+1}$ is obtained by applying Case~4 to $\rho_n$, as the prefix~$\prefix{\rho_n}$ is equal to $\rho_n$ by the fact that $\rho_{n_0}$ contains an occurrence of a $p$ that is not followed by a larger color, but by a $\plus$-transition. 

Furthermore, $\rho_{n+2}$ is obtained by applying Case~2 to $\rho_{n+1}$, as the color of the last transition of $\rho_{n+1}$ might be odd, but then it is strictly smaller than $p$, the second-to-last color in $\rho_{n+1}$. Therefore, the last transition of $\rho_{n+1}$ is also a $\plus$-transition.

Now, assume towards a contradiction that there is some $n' > n$ such that $\val(\ell(\rho_{n'})) = \val(\ell(\rho_n))$. 
Pick $n'$ minimal with this property.  Then, $n'>n+2$ since the last transitions of both $\rho_{n+1}$ and $\rho_{n+2}$ are $\plus$-transitions. Therefore, due to minimality, the last transition of $\rho_{n'}$ and the last transition of $\rho_{n'-1}$ are both $\minus$-transitions. 
Hence, $\rho_{n'} = \ext(\rho_{n'-1},\minus)$ due to Case~3.

Now, consider the prefix~$\prefix{\rho_{n'-1}}$ in the application of Case~3 to $\rho_{n'-1}$. 
It is either equal to $\rho_{n}$ (as its color is even and at least as large as all colors that may be appear in the suffix~$\suffix{\rho_{n'-1}}$ of $\rho_{n'-1}$), or it is some extension of $\rho_{n}$.
In both cases, we have $\val(\ell(\rho_n)) \le \val(\ell(\prefix{\rho_{n'-1}}))$ (in the latter due to the minimality of $n'$).
Hence,
\begin{align*}
\val(\ell(\rho_n)) \le{}&{} \val(\ell(\prefix{\rho_{n'}-1}))< \val(\ell(\rho_{n'-1})) - 1 \\ ={}&{} \val(\ell(\rho_{n'})) = \val(\ell(\rho_{n})),
\end{align*}
which yields the desired contradiction. 
Here, the strict inequality follows from the definition of Case~3.

To conclude, in either of the cases we considered,  the run~$\rho$ is accepting, but $w \notin \repbddlang$, or the run $\rho$ induced by the resolver~$\hstrat$ is rejecting, but $w \in \repbddlang$. 
Hence, either $\aut$ does not recognize $\repbddlang$ or $\hstrat $ is not a resolver for $\aut$, i.e., $\repbddlang$ is not in \gfgcfl.
\end{proof}

\section{Conclusion}
\label{section_conc}
We have introduced good-for-games $\omega$-pushdown automata and proved that they recognize a novel class of $\omega$-context\-free languages for which solving games (and synthesizing winning strategies) is decidable.
Furthermore, we have studied (the mostly nonexistent) closure properties of the new class, proven that it is incomparable to $\omega$-visibly pushdown languages, and that deciding good-for-gameness is undecidable for $\omega$-pushdown automata and $\omega$-contextfree languages.
Finally, we proved the parity index hierarchy to be strict for $\omega$-GFG-PDA and studied resource-bounded resolvers: finite-state resolvers only exist for \dcfl and even pushdown resolvers are not sufficient for $\omega$-GFG-PDA.

We hope this paper is the catalyst for an in-depth study of good-for-games automata in settings where nondeterministic automata are more expressive than deterministic ones.
But even for the setting of $\omega$-contextfree languages considered here, we open many interesting directions for further research.
Let us conclude by listing a few:
\begin{itemize}
    \item Is the universality problem for $\omega$-GFG-PDA $\exptime$-complete?  Note that this problem is a promise problem with an undecidable promise, e.g., we only consider $\omega$-GFG-PDA as input.
    
    \item Equivalence is decidable for $\omega$-DPDA with weak acceptance conditions~\cite{equivalenceWeak} (that is, each strongly connected component is either rejecting or accepting), but it is undecidable for $\omega$-DPDA with Büchi or co-Büchi acceptance conditions~\cite{equivalenceBuchi}. 
    Whether weak acceptance conditions make the problem decidable for $\omega$-GFG-PDA is, to the best of our knowledge, open.
    
    \item Can \gfgcfl be characterized by some extension of Monadic Second-order Logic? Such characterizations haven been exhibited for contextfree languages of finite~\cite{Lautemann} and infinite words~\cite{Droste} as well as for the class of visibly pushdown languages~\cite{AlurM04}. However, there is, to the best of our knowledge, no characterization of the deterministic contextfree languages, neither for finite nor infinite words.
    
    \item We have shown that there are $\omega$-GFG-PDA that do not have pushdown resolvers, but left open whether for every $\omega$-GFG-CFL there is an $\omega$-GFG-PDA recognizing that language with a pushdown resolver. In fact it is also open whether every $\omega$-GFG-PDA or every \gfgcfl has a computable resolver.
    
    \item While here we focus on infinite words, good-for-games pushdown automata turn out to be more powerful than their deterministic counterparts, already over finite words. In~\cite{gfgfin}, we study their expressivity and succinctness over finite words, but some gaps remain open. 

\item Another interesting direction, proposed by one of the reviewers, is to further restrict the model of $\omega$-GFG-PDA, by considering weaker classes of nondeterministic $\omega$-PDA, e.g., unambiguous ones or one-counter automata. For finite words, unambiguity and good-for-gameness are incomparable~\cite{gfgfin}.
\end{itemize}

\subsubsection*{Acknowledgements} We thank Sven Schewe and Patrick Totzke for fruitful discussions leading to the results presented in Section~\ref{section_decisionproblems}.
{\lettrine[image=true, lines=2, findent=1ex, nindent=0ex, loversize=.12]{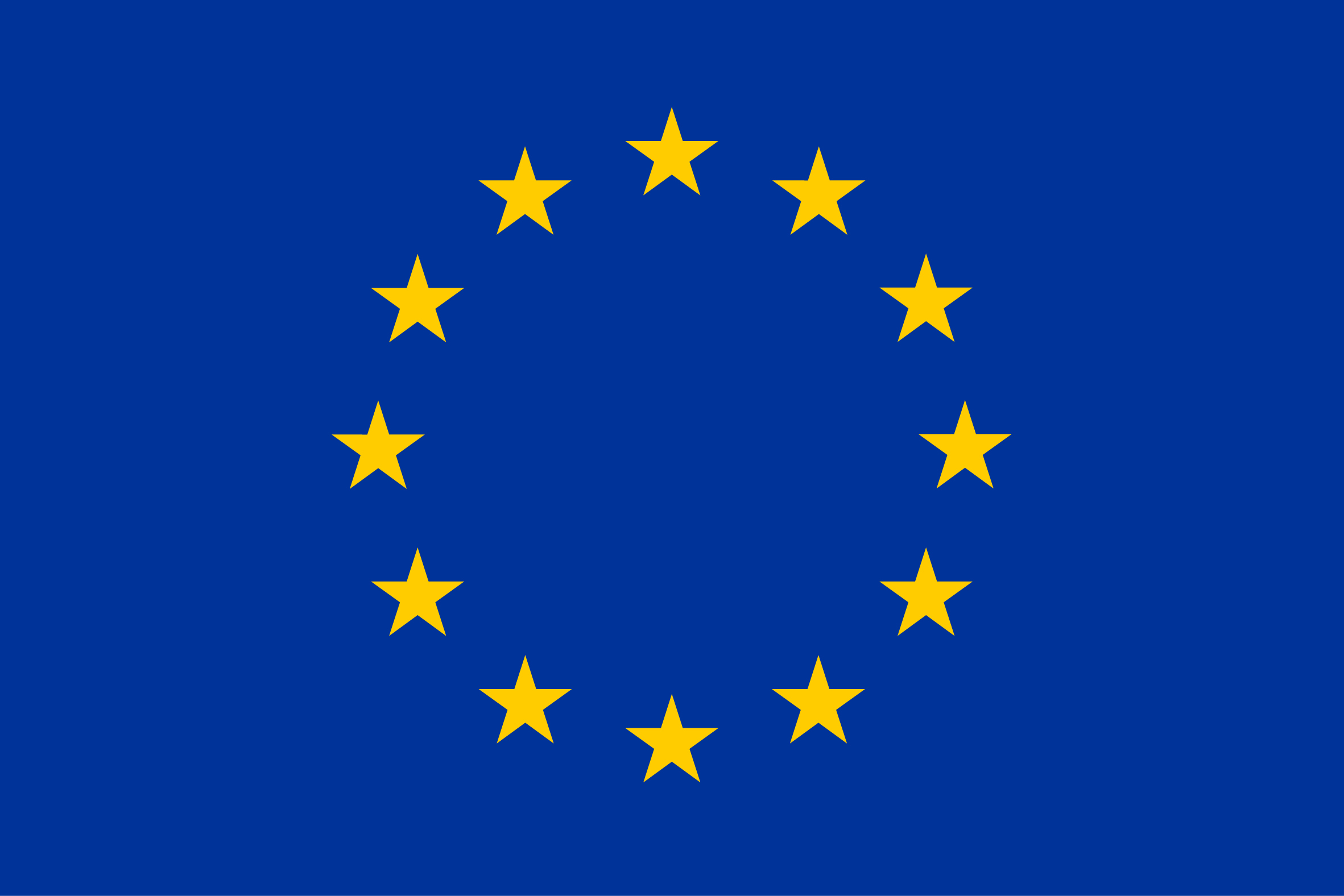} This work is part of a project that receives funding from the European Union’s Horizon 2020 research and innovation programme under the Marie Skłodowska-Curie grant
agreement No~892704.\par}

\bibliographystyle{alphaurl}
\bibliography{biblio}

\appendix
\section{Another Contextfree Language that is not Good-for-games}
\label{section_appendix}
Let $h\colon \set{0,1,\#}^*\rightarrow \set{0,1}^* $ be the homomorphism induced by $h(0) = 0$, $h(1)=1$, and $h(\#)=\epsilon$.
Define 
\[P = \set{ v\#^\omega \mid h(v)  = x \rev{x} \text{ for some } x \in \set{0,1}^*},\]
where $\rev{x}$ denotes the reversal of $x$.
It is straightforward to construct an $\omega$-PDA recognizing $P$, thereby showing $P \in$ \cfl. 
We show $P \notin $ \gfgcfl. 

Towards a contradiction, let us assume that $P$ is recognized by an $\omega$-GFG-PDA~$\aut = (Q, \Sigma, \Gamma, q_\initmark, \Delta, \col)$ with resolver~$\hstrat\colon \Delta^* \times \Sigma \rightarrow \Delta$.
Note that $\hstrat$ induces a unique run $\rho_w$ for every $w \in P$. 
Furthermore, if two words~$w,w' \in P$ share a prefix, then their runs~$\rho_w = c _0 \tau_0 c_1 \tau_1 c_2 \tau_2 \cdots$ and $\rho_{w'} = c_0' \tau_0' c_1'
\tau_1' c_2' \tau_2' \cdots$ also share a prefix.
More formally, if 
\[w(0) \cdots w(n) = w'(0) \cdots w'(n)\] 
for some $n \ge 0$, then we have $\tau_0 \cdots \tau_k = \tau_0' \cdots \tau_k'$ and thus $c_0 \cdots c_{k+1} = c_0' \cdots c_{k+1}'$ for every $ k$ such that $\ell(\tau_0 \cdots \tau_k) = w(0) \cdots w(n-1)$.
Here, the $-1$ stems from the fact that $\hstrat$ resolves nondeterminism based on the next input letter to be processed.

Let $\rho = c_0 \tau_0 c_1 \cdots c_k t_k c_{k+1} $ be an alternating sequence of configurations and transitions,  and let  $a \in \set{0,1,\#}$. 
We say that $(\rho, a)$ is $\hstrat$-consistent, if 
\[
\tau_n = \hstrat( \tau_0 \cdots \tau_{n-1}, v( \size{ \ell(\tau_0 \cdots \tau_{n-1}) } ) )
\]
and $c_n$ is the configuration reached by $\aut$ after the transitions~$\tau_0 \cdots \tau_{n-1}$, where $v =\ell(\rho) a$.
Now, an $\hstrat$-consistent pair~$(\rho, a)$ has Property~M if the following holds for every $w \in \set{0,1,\#}^*\#^\omega$:
Let $ c_0' \tau_0' c_1' \tau_1' c_2' \tau_2' \cdots$ be the unique run of $\aut$ on $\ell(\rho)aw$ induced by $\hstrat$.
Then, we require $\sh(c_{k'+1}) \ge \sh(c_{k+1})$ for all $k' > k$, i.e., that $k+1$ is a step (see Page~\pageref{stepsdef}). 
Note that we have $ \tau_0' \cdots \tau_k' = \tau_0 \cdots \tau_k$ and $c_0 \cdots c_{k+1} = c_0' \cdots c_{k+1}'$ by definition.

Lifting Property~M to words, we say that $v(0) \cdots v(n) \in \set{0,1,\#}^+$ has property~M if there is a run prefix~$\rho$ with $\ell(\rho) = v(0) \cdots v(n-1)$ such that $(\rho, v(n))$ is $\hstrat$-consistent and has Property~M.

We show that for every $v \in \set{0,1,\#}^+$ there is a $v' \in \set{0,1,\#}^*$ such that $vv'$ has Property~M.
To this end, assume $v(0) \cdots v(n)$ does not have Property~M.
Let $\rho$ be a run prefix of $\aut$ on $v(0) \cdots v(n-1)$ so that $(\rho, v(n))$ is $\hstrat$-consistent. 
As $v$ is a prefix of some word in $P$, such a $\rho$ exists. 
By definition, $(\rho, a)$ does not have property~M, i.e., there is a run induced by $\hstrat$ processing a prolongation in $v \set{0,1,\#}^*\#^\omega$ that reaches a stack height strictly smaller than the stack height of the last configuration of $\rho$. 
As stack heights are bounded from below, there is a minimal stack height that is assumed by such runs. 
Let $vv'$ with $v' \in \set{0,1,\#}^*$ be a word processed by such a run prefix~$\rho'$ ending with a minimal stack height along all runs considered. 
Then, $vv'0$ has Property~M, as $(\rho', 0)$ has property~M, as after $\rho'$ no strictly smaller stack height is reached by runs induced by $\hstrat$.

Now, fix $n = 3\size{Q}(\size{\Gamma +1}) + 1$ and define $v_j = 01^j0$ for every $j$ in the range~$1 \le j \le n$.
As shown above, for every such $j$, there is a $v_j'$ such that $v_jv_j'$ has Property~M.

Now, each $v_jv_j' \rev{(v_jv_j')}\#^\omega$ is in $L$, i.e., the unique run
\[c_0^j \tau_0^j c_1^j \tau_1^j c_2^j \tau_2^j \cdots\]
of $\aut$ on $v_j v_j' \rev{(v_jv_j')}\#^\omega$ induced by $\hstrat$ is accepting.
As each $v_jv_j'$ has Property~M, there are prefixes~$c_0^j \tau_0^j \cdots c_{k_j}^j \tau_{k_j}^j c_{k_j+1}^j$ of $c_0^j \tau_0^j c_1^j \tau_1^j c_2^j \tau_2^j \cdots$ processing $v_jv_j'$ without its last letter~$a_j$ such that $(c_0^j \tau_0^j \cdots c_{k_j}^j \tau_{k_j}^j c_{k_j+1}^j, a_j)$ has Property~M.

Now, there are $j_0 \neq j_1$ such that the configurations $c_{k_{j_0}+1}^{j_0}$ and $c_{k_{j_1}+1}^{j_1}$ coincide on their state from $Q$ and their top stack symbol from $\Gamma$ and such that $a_{j_0} = a_{j_1}$. 

Consider the sequence
\[\rho^* = \tau_0^{j_0} \cdots \tau_{k_{j_0}}^{j_0} \tau_{k_{j_1}+1}^{j_1} \tau_{k_{j_1}+2}^{j_1} \tau_{k_{j_1}+3}^{j_1} \cdots .
\] 
We claim that $\rho^*$ induces an accepting run of $\aut$ on $\overline{w} =  v_{j_0}v_{j_0}' \rev{(v_{j_1}v_{j_1}')}\#^\omega  $. 
We have
\[
v_{j_0}v_{j_0}' \rev{(v_{j_1}v_{j_1}')} =  01^{j_0}0 v_{j_0}' \rev{(v_{j_1})} 01^{j_1}0  ,
\]
which is not of the form $v\rev{v}$ after removing $\#$'s. 
Hence, this step completes the proof, as $\overline{w} \notin P$ is accepted by $\aut$, yielding the desired contradiction. 

The sequence $\rho^*$ satisfies the acceptance condition, as it shares a suffix  with the sequence of transitions of the accepting run~$c_0^{j_1} \tau_0^{j_1} c_1^{j_1} \tau_1^{j_1} c_2^{j_1}\tau_2^{j_1} \cdots$.
Furthermore, we have 
\[
\ell(\rho^*) = \ell(\tau_0^{j_0} \cdots \tau_{k_{j_0}}^{j_0} ) \ell( \tau_{k_{j_1}+1}^{j_1} \tau_{k_{j_1}+2}^{j_1} \tau_{k_{j_1}+3}^{j_1} \cdots)
\]
where $\ell(\tau_0^{j_0} \cdots \tau_{k_{j_0}}^{j_0})$ is equal to $v_{j_0} v_{j_0}'$ without its last letter~$a_{j_0}$, and where \[\ell( \tau_{k_{j_1}+1}^{j_1} \tau_{k_{j_1}+2}^{j_1} \tau_{k_{j_1}+3}^{j_1} \cdots) = a_{j_1} \rev{(v_{j_1}v_{j_1}')}\#^\omega .\] 
The concatenation of these two words is indeed $\overline{w}$, as we have $ a_{j_0} = a_{j_1} $ by construction. 

Finally, as $(c_0^{j_1} \tau_0^{j_1} \cdots   c_{k_{j_1}}^{j_1}\tau_{k_{j_1}}^{j_1} c_{k_{j_1}+1}^{j_1}, a_{j_1}) $ has Property~M, the run $c_0^{j_1}\tau_0^{j_1}c_1^{j_1} \tau_1^{j_1} c_2^{j_1} \tau_2^{j_1} \cdots$ does after position~$k_{j_1}$ not depend on the complete stack content at that position, but only on the state and the top stack symbol of $c_{k_{j_1}+1}$. 
Hence, appending the run suffix resulting from applying the transitions
\[
\tau_{k_{j_1}+1}^{j_1} \tau_{k_{j_1}+2}^{j_1} \tau_{k_{j_1}+3}^{j_1} \cdots
\]
after
$(c_0^{j_0} \tau_0^{j_0} \cdots   c_{k_{j_0}}^{j_0}\tau_{k_{j_0}}^{j_0} c_{k_{j_0}+1}^{j_0}, a_{j_0}) $  (recall that $c_{k_{j_0}+1}^{j_0}$ and $c_{k_{j_1}+1}^{j_1}$  have the same state and top stack symbol) yields indeed a run of $\aut$ (but not necessarily induced by $\hstrat$).

\section{Synthesis From $\omega$-GFG-CFL Specifications}
\label{section_appendixgames}
In Section~\ref{sec_gfgaregfg}, we showed that the winner of a Gale-Stewart game with an $\omega$-GFG-CFL winning condition can be determined in exponential time. 
Here, we consider the synthesis problem: If Player~$2$ wins a Gale-Stewart game with an $\omega$-GFG-CFL winning condition, compute a (finitely represented) winning strategy for her.
We show that this problem can be solved in exponential time as well, as for Gale-Stewart games with $\omega$-DCFL winning conditions.

Assume we are given an $\omega$-GFG-PDA~$\aut$ such that Player~$2$ wins $\gsgame(L(\aut))$. 
In the proof of Theorem~\ref{thm_gfggfg}, we have constructed a polynomially-sized $\omega$-DPDA~$\autud$ such that Player~$2$ wins $\gsgame(L(\aut))$ if and only if she wins $\gsgame(L(\autud))$.
We show how a finitely represented winning strategy for Player~$2$ in $\gsgame(L(\autud))$, which exists if she wins $\gsgame(L(\autud))$, can be turned into the desired strategy for $\gsgame(L(\aut))$.

In our context, we need a slight adaption of the pushdown transducer model we introduced in Section~\ref{section_resolver} to implement resolvers.
Here, a pushdown transducer (PDT) is a tuple~$\autt = (\mymark{Q}, \Sigma_I, \mymark{\Gamma},\mymark{q_\initmark}, \mymark{\Delta}, \Sigma_O, \mymark{\lambda})$ such that $(\mymark{Q}, \Sigma_I, \mymark{\Gamma}, \mymark{q_\initmark}, \mymark{\Delta})$ is a deterministic PDA (without coloring) over $\Sigma_I$, $\Sigma_O$ is an output alphabet, and $\mymark{\lambda} \colon \mymark{Q} \rightarrow \Sigma_O $ is a (partial) output function.
Note that we only change the signature of the output function. 
Hence, runs of $\autt$ are defined as before.
In particular, we still require them to be maximal, i.e., they end in configurations without enabled $\epsilon$-transitions.
Thus, we say that $\autt$ computes the (partial) function~$\sigma_\autt \colon \Sigma_I^* \rightarrow \Sigma_O$ given by $\sigma_\autt(w) = \mymark{\lambda}(q)$, where $q$ is the state of the configuration the run of $\autt$ on $w$ ends in.

In a Gale-Stewart game~$\gsgame(L)$ with $L \subseteq (\SigmaI \times \SigmaO)^\omega$, a strategy for Player~$2$ is a function from $\SigmaI^+$ to $\SigmaO$. 
    Such strategies can be implemented by a pushdown transducer with input alphabet~$\Sigma_I = \SigmaI$ and output alphabet~$\Sigma_O = \SigmaO$.

\begin{propC}[\cite{DBLP:journals/corr/abs-1006-1415,DBLP:journals/iandc/Walukiewicz01}]
\label{prop:transducer}
Let $\autud$ be an $\omega$-DPDA.
If Player~$2$ wins $\gsgame(L(\autud))$, then she has a winning strategy that is implemented by a PDT. Furthermore, such a PDT can be computed in exponential time in $\size{\autud}$.
\end{propC}

%
%

Now, we are ready to prove Theorem~\ref{thm:transducer}.

\begin{proof}
Given an $\omega$-GFG-PDA~$\aut$, let the $\omega$-DPDA~$\autud$ be defined as in the proof of Theorem~\ref{thm_gfggfg}. 
Then, the game~$\gsgame(L(\autud))$ is won by Player~$2$ and she has a winning strategy for $G(\autud)$ that is implemented by a PDT, which can be computed in exponential time (in the size of $\aut$, as $\autud$ is only polynomially larger than $\aut$).
Now, we show that the transformation $\sigma \mapsto \sigma_{-d}$ presented in the proof of Theorem~\ref{thm_gfggfg} is effective, if $\sigma$ is implemented by a PDT. That is, we turn a PDT~$\autt$ implementing $\sigma$ into a PDT~$\autt_{-d}$ implementing~$\sigma_{-d}$.

\newcommand{\iphase}{\iota}
\newcommand{\wphase}{\omega}

So, consider a PDT~$\autt = (\mymark{Q},(\SigmaI)_d,  \mymark{\Gamma},\mymark{q_\initmark}, \mymark{\Delta}, (\SigmaO)_d, \mymark{\lambda})$ implementing a winning strategy~$\sigma$ for Player~$2$ in $\gsgame(\autud)$
where we write $(\SigmaI)_d$ for $\SigmaI$ and write $(\SigmaO)_d$ for $\SigmaO \cup \Delta$ to distinguish the alphabets of $\aut$ and $\autud$.
The mode of a configuration~$(q, \gamma X)$ of $\autt$ is the pair~$(q, X)$ (note that $X$ might be $\bot$). 
Whether a transition is enabled in a configuration only depends on its mode.
Thus, we are justified to say that a mode enables a transition or not.
Now, a mode is a reading mode if it enables some non-$\epsilon$ transition.
As $\autt$ is deterministic, no reading mode has an enabled $\epsilon$-transition.
Furthermore, due to the maximality requirement on runs, every run ends in a reading mode. 
Finally, our construction relies on the following simple fact: If we remove all but one transition enabled by a reading mode, and turn the remaining transition into an $\epsilon$-transition, then the resulting PDT is still deterministic.

First, we transform $\autt$ into a PDT computing the same strategy as $\autt$ while keeping track of its own mode using its states.
This is useful as it allows us to base the output function, which formally only depends on a state of a configuration, on the current mode.
Thus, the new PDT has to keep track of the topmost stack symbol, which is straightforward during transitions pushing a symbol on the stack and during transitions replacing the topmost stack symbol, but requires an additional transition to read the newly revealed stack symbol after the topmost stack symbol is popped off the stack.

Formally, define $\autt' = (\mymark{Q} \times \mymark{\Gamma}_\bot \cup Q, (\SigmaI)_d, \mymark{\Gamma},(\mymark{q_\initmark}, \bot), \mymark{\Delta}',  (\SigmaO)_d, \mymark{\lambda}')$ with
\begin{align*}
\mymark{\Delta}' = 
& \set{  ((q,X), X, a, (q',X''), X'X'') \mid  (q, X, a, q', X'X'') \in \mymark{\Delta}  } \cup\\
& \set{ ((q,X), X, a, (q', X'),X') \mid  (q, X, a, q', X') \in \mymark{\Delta} }\cup\\
& \set{ ((q,X), X, a, q', \epsilon) \mid  (q, X, a, q', \epsilon) \in \mymark{\Delta} } \cup\\
& \set{ (q, X, \epsilon, (q,X), X) \mid q\in \mymark{Q}, X \in \mymark{\Gamma}_\bot  },
\end{align*}
which is still deterministic.
Further, $\autt'$ has a run on some input~$v$ if and only if $\autt$ has a run on $v$.
Finally, let the run of $\autt$ on $v$ end in some configuration~$(q, \gamma X)$.
Then, the run of $\autt'$ on $v$ ends in the configuration~$((q,X), \gamma X)$.
Hence, both PDT's indeed compute the same function when defining $\mymark{\lambda}'(q, X) = \mymark{\lambda}(q)$.

Now, we define $\autt_{-d} = (Q_{-d},\SigmaI, \mymark{\Gamma},(\mymark{q_\initmark}, \bot,\iphase), \mymark{\Delta}_{-d},  \SigmaO, \mymark{\lambda}_{-d})$ with state set
\[Q_{-d} = (\mymark{Q} \times \mymark{\Gamma}_\bot \cup Q) \times (\set{\iphase, \wphase} \cup \SigmaO).\]
Next, we define the transition relation~$\mymark{\Delta}_{-d}$, whose definition refers to the fixed dummy letter~$\dummy \in \SigmaI$ used in the definition of $\sigma_{-d}$ (see Page~\ref{page:dummyletter}).
Intuitively, $\autt_{-d}$ works in three distinct phases as follows:
\begin{itemize}
    \item The initialization phase spans the initial sequence of $\epsilon$-transitions executed by $\autt'$ and is left for the waiting phase as soon as the first letter from $\SigmaI$ is processed. Formally,
       \begin{itemize}
           \item $((q,\iphase), X, \epsilon, (q',\iphase), \gamma) \in \mymark{\Delta}_{-d}$ for every $(q, X, \epsilon, q', \gamma) \in \mymark{\Delta}'$, and
           \item $((q,\iphase), X, a, (q',\wphase), \gamma) \in \mymark{\Delta}_{-d}$ for every $(q, X, a, q', \gamma) \in \mymark{\Delta}'$ with $a \in \SigmaI$.
       \end{itemize}
    \item In the waiting phase, $\epsilon$-transitions are simulated until none are enabled any more. At that point, $\aut'$ yields an output and processes some input from $\SigmaI$. In $\autt_{-d}$, the output letter is stored and the unique transition labeled by $\dummy$ is turned into an $\epsilon$-transition, which starts the delay phase. As we remove all other $a$-transitions with $a\neq \dummy$, this preserves determinism.
    Formally, 
        \begin{itemize}
           \item $((q,\wphase), X, \epsilon, (q',\wphase), \gamma) \in \mymark{\Delta}_{-d}$ for every $(q, X, \epsilon, q', \gamma) \in \mymark{\Delta}'$ and
           \item $((q,\wphase), X, \epsilon, (q',\mymark{\lambda}'(q)), \gamma) \in \mymark{\Delta}_{-d}$ for every $(q, X, \dummy, q', \gamma) \in \mymark{\Delta}'$.
       \end{itemize}
    \item During the delay phase, $\epsilon$- and $\dummy$-transitions of $\autt'$ are simulated (deterministically, and without processing an input letter) until a reading mode is reached at which $\autt'$ outputs a non-$\epsilon$-transition, signifying the end of a block. At that point, the output letter stored during the delay phase is finally output and the run returns to the waiting phase. As we remove all other transitions when turning $\dummy$-transitions into $\epsilon$-transitions, we again preserve determinism. Formally,
        \begin{itemize}
           \item $((q,a_2), X, \epsilon, (q',a_2), \gamma) \in \mymark{\Delta}_{-d}$ for every  $a_2 \in \SigmaO$ and every $(q, X, \epsilon, q', \gamma) \in \mymark{\Delta}'$,
           \item $((q,a_2), X, \epsilon, (q',a_2), \gamma) \in \mymark{\Delta}_{-d}$ for every  $a_2 \in \SigmaO$ and every $(q, X, \dummy, q', \gamma) \in \mymark{\Delta}'$ such that $q$ is a reading mode of $\autt$ and $\mymark{\lambda}'(q)$ is an $\epsilon$-transition of $\aut$, and 
           \item $((q,a_2), X, a_1, (q',\wphase), \gamma) \in \mymark{\Delta}_{-d}$ for every  $a_2 \in \SigmaO$ and every $(q, X, a_1, q', \gamma) \in \mymark{\Delta}'$ such that $q$ is a reading mode of $\autt$, but $\mymark{\lambda}'(q)$ is a non-$\epsilon$-transition of $\aut$.
       \end{itemize}
\end{itemize}
Finally, we define $\mymark{\lambda}_{-d}(q,a_2) = a_2$ for every state~$q$ of $\autt'$ that is a reading mode of $\autt$, but $\mymark{\lambda}'(q)$ is a non-$\epsilon$-transition of $\aut$.
Thus, we indeed output the stored letter at the end of the delay phase as described above. 

Recall that in the proof of Theorem~\ref{thm_gfggfg}, we have turned the winning strategy~$\sigma$ for Player~$2$ in $\gsgame(L(\autud))$ into a winning strategy $\sigma_{-d}$ for her in $\gsgame(L(\aut))$. 
To this end, we first defined a function mapping $v \in \SigmaI^+$ to some $v_d \in (\SigmaI')^+$ as follows: 
$a_d =a$ for $a \in \SigmaI$ and $(va)_d = v_d\dummy^n a$ for some $n$ that is independent of $a$.  
With these definitions, we defined $\sigma_{-d}(v) = \sigma(v_d)$.
In the remainder of the proof, we show that $\autt_{-d}$ indeed computes $\sigma_{-d}$. 

So, consider some $v \in \SigmaI^+$ and the unique $n$ such that $(va)_d = v_d \dummy^n a$ for all $a \in \SigmaI$. 
Then, we have $\sigma(v_d) = \mymark{\lambda}'(q)$, where $q$ is the state of the last configuration of the run~$\rho$ of $\autt'$ on $v_d$.
Furthermore, consider the run~$\rho'$ of $\autt'$ on $v_d\dummy^n$, which is a proper extension of $\rho$ ending in some state~$q'$.
By definition of $n$, $q'$ is a reading mode of $\autt$ and $\mymark{\lambda}'(q')$ is a non-$\epsilon$-transition of $\aut$. 

The run~$\rho'$ can be turned into the unique run of $\autt_{-d}$ on $v$ by switching between the phases appropriately:
\begin{itemize}
    \item The initialization phase ends with the transition processing the first letter of $v_d$, which starts a waiting phase.
    \item Each waiting phase lasts until a state is reached at which a letter in $\SigmaO$ is output by $\autt'$. This letter is stored and a delay phase is started.
    \item Each delay phase lasts until a state is reached at which a non-$\epsilon$-transition is output by $\autt'$. 
\end{itemize}
In particular, after the last letter of $v$ is processed, a waiting phase starts, which is ended by switching from state~$(q,\wphase)$ to a state storing~$\mymark{\lambda}'(q)$. The last delay phase stores this letter until the end of the run, at which point it yields the output~$\mymark{\lambda}'(q)$ of $\autt_{-d}$ on $v$. Thus, $\autt_{-d}$ indeed implements~$\sigma_{-d}$.
\end{proof}

Finally, let us consider the situation of Player~$1$ in games with \gfgcfl winning conditions, where we define strategies and strategies implemented by PDT's for Player~$1$ as expected. 
As \gfgcfl is not closed under complementation, we cannot dualize a game with a \gfgcfl winning condition by swapping the roles of the players and complementing the winning condition to obtain a winning strategy for Player~$1$.
Instead, the situation is asymmetric.

\begin{thm}
There is a Gale-Stewart game with an \gfgcfl winning condition that is won by Player~$1$, but not with a winning strategy implementable by a PDT.
\end{thm}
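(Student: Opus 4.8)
The plan is to exhibit a single \gfgcfl winning condition~$L$ and to show directly that Player~$1$ wins $\gsgame(L)$ while no pushdown transducer implements a winning strategy for him. The guiding intuition is the asymmetry already visible in the no-pushdown-resolver result for the automaton of Figure~\ref{fig:elaut}: resolving the nondeterminism of $\seplang$ requires tracking two independent energy levels, which a single-stack deterministic device cannot do, whereas an unbounded-memory device can. I would set up the game so that Player~$1$'s role is exactly this ``two-counter'' bookkeeping, while keeping $L$ itself recognizable by a single-stack $\omega$-GFG-PDA.

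Concretely, I would let the outcome encode an interaction in which Player~$2$ repeatedly feeds increments to two independent counters and, at moments of her choosing, issues interleaved queries, while Player~$1$ must echo the pending value of the queried counter. The winning condition~$L$ for Player~$2$ is a $\seplang$-style limit condition, namely ``at least one of the two counters is eventually mishandled'', i.e.\ its associated energy track drifts away from the baseline forever. First I would verify $L \in \gfgcfl$: an $\omega$-GFG-PDA guesses which of the two counters is the one that eventually diverges and uses its single stack to monitor only that counter, exactly as the automaton for $\seplang$ commits to the component with a safe suffix; the resolver may commit as soon as the drift becomes permanent, so only one counter ever needs to be tracked for membership.

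Next I would show that Player~$1$ wins: an unbounded-memory strategy maintains both pending counts and echoes each query exactly, so neither energy track drifts and the outcome stays out of $L$. Crucially Player~$2$ cannot force a drift, since an un-queried counter imposes no obligation and a queried one is answered from memory. The hard part, and the heart of the theorem, is the final step: no PDT implements a winning strategy for Player~$1$. Here I would argue by a pumping/pigeonhole argument in the spirit of the no-pushdown-resolver proof. A PDT has a single stack, so along the run it produces there are infinitely many \emph{step} positions at which its behaviour below the top stack symbol is irrelevant; by finiteness of its state set and stack alphabet, Player~$2$ can steer it into repeating a state-plus-top-symbol pattern at two such steps. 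Between these steps the PDT can faithfully track at most one of the two counters, so Player~$2$ schedules interleaved increments and queries that make the other, ``off-stack'' counter accumulate a mismatch; iterating this forces a permanent drift, placing the outcome in $L$ and defeating the supposed winning PDT.

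The main obstacle is reconciling two apparently conflicting requirements: $L$ must be good-for-games (hence recognizable with a single stack and a history-based resolver), yet every winning Player~$1$ strategy must genuinely need two independent unbounded memories. I would resolve this by the limit/commitment trick: membership in $L$ only ever requires certifying the eventual behaviour of \emph{one} counter (the one that diverges), which a resolver can commit to after the fact, whereas winning requires Player~$1$ to keep \emph{both} counters correct at all times, with no foreknowledge of which Player~$2$ will query next. This also explains the asymmetry with Theorem~\ref{thm:transducer}: Player~$2$, moving second, can react and need only steer one counter to a contradiction, so her winning strategies stay within pushdown transducers, while Player~$1$, moving first, must universally anticipate both.
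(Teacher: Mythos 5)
Your plan has a genuine gap at its foundation: the winning condition you propose is very likely not in \gfgcfl, and your own justification for good-for-gameness is circular. You define $L$ as ``at least one of the two counters is eventually mishandled, i.e.\ its energy track drifts away from the baseline forever.'' Per component, ``drifts forever'' ($\liminf$ of the energy equals $-\infty$) is a \emph{Büchi}-flavoured condition: it is witnessed by infinitely many good events (new minima), never by a finite prefix. A disjunction of two such conditions is exactly the kind of language where a resolver is in trouble, because committing to the wrong component is \emph{not} recoverable in the way it is for $\seplang$: there, the disjuncts are co-Büchi-flavoured (``eventually safe''), so the history-based rule ``track the component with the longest still-safe suffix'' switches only finitely often on words in the language, and finitely many wrong commitments are harmless. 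For your drift condition there is no analogous history-based rule — on a word where component~$1$ drifts but component~$2$ infinitely often produces a few spurious new minima, any recency-based resolver switches infinitely often, and relaxing the acceptance to tolerate switches makes the automaton accept words where neither component drifts. Tellingly, you write that ``the resolver may commit as soon as the drift becomes permanent'' — permanence is a property of the infinite future, which is precisely what a resolver cannot consult. Separately, your PDT lower bound rests on the slogan ``between two steps the PDT can faithfully track at most one of the two counters,'' which is an intuition, not a lemma: against an \emph{adaptive} opponent the pumping must respect the interaction (the PDT's configuration depends on Player~$2$'s past queries), and even a forced mismatch per phase does not yield the \emph{permanent} drift your condition demands.

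The paper sidesteps all of this by making Player~$2$ passive. It takes $L = \set{\binom{w}{\#^\omega} \mid w \in \seplang \text{ or } w \notin \set{x_1,x_2}^\omega}$, which is in \gfgcfl by Theorem~\ref{theorem_closurereg} since $\seplang$ is only modified by $\omega$-regular operations. Player~$1$ wins by playing the fixed word $w_{\overline{ss}}$, which lies in $\set{x_1,x_2}^\omega \setminus \seplang$. For the lower bound, note that a PDT implementing a Player~$1$ strategy, fed the constant input $\#^\omega$, runs deterministically; by the step-plus-pigeonhole argument (two steps sharing state and top stack symbol, with at least three output letters in between) its output $w$ is ultimately periodic with period $v$ of length at least $3$. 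If $w$ leaves $\set{x_1,x_2}^\omega$, the outcome is in $L$ outright; otherwise $v$ is an infix of a word built from $x_1,x_2$, hence $\el(\proj_i(v)) > 0$ for some $i$, so $v^\omega$ gives a safe suffix and $w \in \seplang$. Either way Player~$2$ wins, so no PDT strategy is winning — no two-counter bookkeeping, no interactive pumping, no permanence argument needed. The lesson worth internalizing: once the opponent's input is constant, determinism of the transducer alone forces ultimate periodicity, and $\seplang$ is engineered so that ultimately periodic words over $\set{x_1,x_2}$ can never avoid it; the extra disjunct ``$w \notin \set{x_1,x_2}^\omega$'' is there precisely so that pumping out of alignment also loses for Player~$1$. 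If you want to salvage your construction, you would first have to replace the drift condition by a co-Büchi-flavoured one and then supply the interactive pumping lemma in full — at which point you have essentially rebuilt $\seplang$.
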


\begin{proof}
In Section~\ref{sec_gfg}, we have introduced the language~$\seplang$ over $I \times I$ for $I = \set{\plus, \minus,\zero}$
 and the words $x_1 = \binom{\plus}{\zero} \binom{\plus}{\minus}$ and $x_2 = \binom{\zero}{\plus} \binom{\minus}{\plus}$.
Now, consider the game~$\gsgame(L)$ with
\[
L = \left\{ \binom{w}{\#^\omega} \,\middle|\, w \in \seplang \text{ or } w \notin \set{x_1, x_2}^\omega \right\},
\]
which is in \gfgcfl due to Theorem~\ref{theorem_closurereg}.

Recall that the $\omega$-word
\[w_{\overline{ss}} =
x_1 \, 
(x_2)^3 \,
(x_1)^7 \,
(x_2)^{15} \,
(x_1)^{31} \,
(x_2)^{63} \,
\cdots 
\]
defined on Page~\pageref{bla} is not in $\seplang$.
Thus, Player~$1$ wins $\gsgame(L)$ by producing the word~$w_{\overline{ss}}$.

Now, consider a PDT~$\autt$ implementing a strategy~$\sigma$ for Player~$1$ in $\gsgame(L)$.
Let $w\binom{w}{\#^\omega}$ be an outcome that is consistent with~$\sigma$.
The run~$\rho$ of $\autt$ generating this outcome has infinitely many steps.
Now, pick two such steps, such that the configurations at these steps coincide on their state, top stack symbol, and such that at least three outputs are generated between the steps.
Then, the run obtained by repeating the sequence of transitions taken between the steps generates an outcome consistent with $\sigma$.
Now, this outcome has a positive energy level in one component, as every infix of length at least~$3$ of a word built by concatenating copies of the $x_i$ has a strictly positive energy level in one component.
Hence, $\sigma$ is not a winning strategy for Player~$1$ in $\gsgame(L)$.

Hence, no PDT implements a winning strategy for Player~$1$ in $\gsgame(L)$.
\end{proof}

Note that Player~$1$ \emph{does} have a winning strategy implementable by a PDT in the game~$\gsgame(L(\autud))$ with \dcfl winning condition obtained from $\gsgame(L)$ (see the proof of Theorem~\ref{thm_gfggfg}).
In the game~$\gsgame(L(\autud))$, Player~$2$ has to construct a run of an $\omega$-GFG-PDA recognizing $L$, which introduces enough information for Player~$1$ to win with a strategy implementable by a PDT.
However, in $\gsgame(L)$, he does not have access to that additional information, and therefore no winning strategy implementable by a PDT.
\end{document}